\documentclass[acmsmall,screen,nonacm]{acmart}
\AtBeginDocument{%
  }

\usepackage{mathtools}
\usepackage{amsmath,stmaryrd,mathpartir,amsthm}
\usepackage{xcolor}
\usepackage{graphicx}
\usepackage{cleveref}
\crefname{remark}{Remark}{Remarks}
\crefname{appendix}{appendix}{appendices}
\Crefname{appendix}{Appendix}{Appendices}
\usepackage{csquotes}
\usepackage[disable]{todonotes}
\usepackage{tabularx}
\usepackage{tensor}
\usepackage{mdframed}
\usepackage{multirow}
\usepackage{caption}
\usepackage{subcaption}
\usepackage{pdfpages}
\usepackage{thmtools}
\usepackage{nicefrac}
\usepackage{listings}

\lstdefinelanguage{pgcl}{morekeywords={uniform,observe,assert,int,uint,wp,wlp,ert,ett,ite,if,else,cost,skip,while,@fixpoint,@unroll,@kinduction}, sensitive=false}
\crefname{lstlisting}{listing}{listings}
\Crefname{lstlisting}{Listing}{Listings}

\usepackage{tikz}
\usetikzlibrary{positioning,arrows.meta,calc,shapes}

\tikzset{
	bdd node/.style={ellipse, draw=black!70, line width=0.45pt, fill=white, inner sep=3pt, minimum size=15pt},
	bdd leaf/.style={rectangle, rounded corners=1.4pt, draw=black!70, line width=0.45pt, fill=white, inner sep=3pt, minimum size=15pt},
	bdd edge0/.style={-{Stealth[length=1.9mm]}, dashed, line width=0.65pt, draw=black!75},
	bdd edge1/.style={-{Stealth[length=1.9mm]}, solid, line width=0.65pt, draw=black!75},
}

\usepackage[ruled, linesnumbered, nokwfunc]{algorithm2e}
\SetAlgoHangIndent{2em}
\SetKwProg{Fn}{def}{}{}
\SetKw{In}{in}
\SetFuncSty{textsc}

\SetKwFunction{apply}{Apply}
\SetKwFunction{obtain}{Obtain}
\SetKwFunction{solvesmt}{SolveSMT}
\SetKwFunction{prune}{Prune}
\SetKwFunction{substitute}{Substitute}
\SetKwFunction{minfunc}{Minimum}
\SetKwFunction{entails}{Entails}
\crefname{algocf}{Algorithm}{Algorithms}
\Crefname{algocf}{Algorithm}{Algorithms}
\theoremstyle{remark}
\newtheorem*{remark}{Remark}

\newcommand{\sj}[1]{\todo[noinline,color=green]{\footnotesize SJ\@: #1}}

\newcommand{\dbinline}[1]{\todo[inline,color=orange,caption={}]{\footnotesize DB\@: #1}}
\newcommand{\kb}[1]{\todo[noinline,color=cyan]{\footnotesize KB\@: #1}}
\newcommand{\kbinline}[1]{\todo[inline,color=cyan,caption={}]{\footnotesize KB\@: #1}}

\newcommand{\highlight}[1]{#1}
\newcommand{\setfont}[1]{\mathsf{#1}}

\newcommand{\toolfont}[1]{\textnormal{\textsc{#1}}}

\newcommand{\toolzt}{\toolfont{Z3}\xspace}
\newcommand{\toolcvc}{\toolfont{CVC5}\xspace}
\newcommand{\toolddsolve}{\toolfont{DDSolve}\xspace}
\newcommand{\smtlib}{\toolfont{SMT-LIB}\xspace}

\DeclareRobustCommand{\algname}[1]{\textnormal{\textsc{#1}}}
\DeclareRobustCommand{\ApplyAlg}{\algname{Apply}\xspace}
\DeclareRobustCommand{\ObtainAlg}{\algname{Obtain}\xspace}

\DeclareRobustCommand{\PruneAlg}{\algname{Prune}\xspace}
\DeclareRobustCommand{\SubstituteAlg}{\algname{Substitute}\xspace}
\DeclareRobustCommand{\MinimumAlg}{\algname{Minimum}\xspace}
\DeclareRobustCommand{\EntailsAlg}{\algname{Entails}\xspace}

\newcommand{\mylambda}[1]{\ensuremath{\lambda #1.\,}}

\newcommand{\true}{\mathsf{true}}
\newcommand{\false}{\mathsf{false}}

\newcommand{\Nats}{\mathbb{N}}

\newcommand{\Rats}{\mathbb{Q}}
\newcommand{\Reals}{\mathbb{R}}
\newcommand{\PosRats}{\Rats_{\geq 0}}
\newcommand{\PosReals}{\Reals_{\geq 0}}
\newcommand{\PosRealsInf}{\PosReals^{\infty}}

\newcommand{\proba}{\ensuremath{p}}

\newcommand{\reala}{\ensuremath{r}}

\newcommand{\eeq}{~{}={}~}
\newcommand{\monus}{\mathbin{\dot-}}

\newcommand{\sorts}{\mathsf{Types}}
\newcommand{\sortbool}{\ensuremath{\mathsf{Bool}}}
\newcommand{\sortnat}{\ensuremath{\mathsf{Nat}}}
\newcommand{\sortint}{\ensuremath{\mathsf{Int}}}
\newcommand{\sortrat}{\ensuremath{\mathsf{Rat}}}
\newcommand{\sortreal}{\ensuremath{\mathsf{Real}}}
\newcommand{\sorteureal}{\ensuremath{\mathsf{EUReal}}}

\newcommand{\sorta}{\ensuremath{\tau}}

\newcommand{\sortarray}{\ensuremath{\mathbf{Array}}}
\newcommand{\arraya}{\ensuremath{a}}
\newcommand{\funcstore}{\mathsf{store}}
\newcommand{\funcselect}{\mathsf{select}}

\newcommand{\foeq}{\mathbin{\dot=}}
\newcommand{\foneq}{\mathbin{\dot\neq}}

\newcommand{\vars}{\ensuremath{\mathsf{Vars}}}
\newcommand{\vara}{x}
\newcommand{\varb}{y}
\newcommand{\varc}{z}

\newcommand{\hastype}[2]{\ensuremath{#1 \colon #2}}

\newcommand{\functs}{\mathsf{Func}}
\newcommand{\funca}{\ensuremath{F}}
\newcommand{\funcb}{\ensuremath{G}}

\newcommand{\term}{\ensuremath{\mathsf{Terms}}}
\newcommand{\terma}{\ensuremath{t}}

\newcommand{\termop}{\mathcal{F}}

\newcommand{\fo}{\ensuremath{\mathsf{FO}}}
\newcommand{\at}{\ensuremath{\mathsf{AT}}}
\newcommand{\forma}{\ensuremath{\varphi}}
\newcommand{\formb}{\ensuremath{\psi}}

\newcommand{\tforall}[2]{\forall \hastype{#1}{#2}.\,}
\newcommand{\texists}[2]{\exists \hastype{#1}{#2}.\,}

\newcommand{\struct}{\mathfrak{A}}
\newcommand{\sem}[2]{\llbracket #1 \rrbracket^{#2}}
\newcommand{\propsem}[2]{\ensuremath{#2\left( #1 \right)}}
\newcommand{\vala}{\ensuremath{\sigma}}
\newcommand{\valsubst}[2]{\left[ #1 \mapsto #2 \right]}
\newcommand{\interprets}{\mathsf{Intrs}}
\newcommand{\varvals}{\mathcal{V}}
\newcommand{\inta}{\mathcal{I}}
\newcommand{\propinta}{\ensuremath{\eta}}

\newcommand{\valuea}{\ensuremath{w}}

\newcommand{\theory}{\mathcal{T}}
\newcommand{\theoryla}{\mathsf{LA}}
\newcommand{\theorylaarr}{\mathsf{LA+ARR}}

\newcommand{\sats}[1]{\ensuremath{\mathsf{SAT}_{\theory}\left( #1 \right)}}

\newcommand{\switchfun}{\mathbf{e}}
\newcommand{\switchfuncase}{\mapsfrom}
\newcommand{\switchfunequiv}{\ensuremath{\equiv}}
\newcommand{\switchfunterm}[2]{\ensuremath{#1 \left( #2 \right)}}
\newcommand{\switchITEsymbol}{\mathsf{ITE}}
\newcommand{\switchITE}[3]{\ensuremath{\switchITEsymbol \left( #1,#2,#3 \right)}}
\newcommand{\switchfunset}[1]{\mathsf{CaseExp}_{#1}}

\newcommand{\switchfunopsymbol}{\mathsf{op}}
\newcommand{\switchfunop}[1]{\ensuremath{\switchfunopsymbol\left[ #1 \right]}}
\newcommand{\switchfunsubst}[2]{\left[ #1 / #2 \right]}

\newcommand{\varord}{\ensuremath{\prec}}
\newcommand{\xadd}{\mathcal{X}}
\newcommand{\xaddb}{\mathcal{Y}}
\newcommand{\xaddc}{\mathcal{Z}}
\newcommand{\succtxadd}[1]{#1^{+}}
\newcommand{\succfxadd}[1]{#1^{-}}

\newcommand{\sizeof}[1]{\ensuremath{|{#1}|}}

\newcommand{\xnodes}{\ensuremath{V}}
\newcommand{\xnodea}{\ensuremath{v}}
\newcommand{\xroot}{\ensuremath{\xnodea_0}}
\newcommand{\xtnodes}{\ensuremath{T}}
\newcommand{\xnnodes}{\ensuremath{N}}
\newcommand{\xsuccsub}[1]{\ensuremath{\mathsf{succ}_{#1}}}
\newcommand{\xsucct}{\ensuremath{\mathsf{succ}_1}}
\newcommand{\xsuccf}{\ensuremath{\mathsf{succ}_0}}

\newcommand{\xlabn}{\ensuremath{L_N}}
\newcommand{\xlabt}{\ensuremath{L_T}}

\newcommand{\xswitchfun}[1]{\switchfun_{#1}}
\newcommand{\xswitchfuns}{\switchfun_{\xadd}}
\newcommand{\xswitchfunn}[2]{\ensuremath{\switchfun_{#1}^{#2}}}

\newcommand{\patha}{\ensuremath{\pi}}
\newcommand{\contexta}{\ensuremath{\theta}}
\newcommand{\pc}{\ensuremath{\mathsf{pc}}}
\newcommand{\pathcond}[2]{\ensuremath{\pc_{#1}(#2)}}

\newcommand{\pgcl}{\ensuremath{\setfont{pGCL}}}

\newcommand{\cc}{\ensuremath{C}}

\newcommand{\SKIP}{\ensuremath{\textnormal{\texttt{skip}}}}
\newcommand{\ABORT}{\ensuremath{\textnormal{\texttt{abort}}}}
\newcommand{\AssignSymbol}{\mathrel{\textnormal{\texttt{:=}}}}
\newcommand{\ASSIGN}[2]{\ensuremath{#1 \AssignSymbol #2}}

\newcommand{\KWTICK}{\ensuremath{\textnormal{\texttt{cost}}}}
\newcommand{\TICK}[1]{\ensuremath{\KWTICK\, \left(\, {#1} \,\right)}}
\newcommand{\tickrew}{r}

\newcommand{\COMPOSE}[2]{\ensuremath{{#1}{\,;}~ {#2}}}

\newcommand{\PCHOICE}[3]{\ensuremath{\left\{\, {#1} \,\right\}\mathrel{\left[\,#2\,\right]}\left\{\, {#3} \,\right\}}}

\newcommand{\ndchoicesymb}{\mathrel{\Box}}
\newcommand{\NDCHOICE}[2]{\ensuremath{\left\{\, {#1} \,\right\}\ndchoicesymb\left\{\, {#2} \,\right\}}}

\newcommand{\IFSYMBOL}{\ensuremath{\textnormal{\texttt{if}}}}
\newcommand{\IF}[1]{\ensuremath{\IFSYMBOL\,\left(\, {#1} \,\right)\,\{}}
\newcommand{\ELSESYMBOL}{\ensuremath{\textnormal{\texttt{else}}}}

\newcommand{\ITE}[3]{\ensuremath{\IFSYMBOL\,\left(\, {#1} \,\right)\,\left\{\, {#2} \,\right\}\,\ELSESYMBOL\,\left\{\, {#3} \,\right\}}}

\newcommand{\WHILESYMBOL}{\ensuremath{\textnormal{\texttt{while}}}}
\newcommand{\WHILE}[1]{\ensuremath{\WHILESYMBOL \left(\, {#1} \,\right)\left\{\right.}}
\newcommand{\WHILEDO}[2]{\ensuremath{\WHILESYMBOL \left(\, {#1} \,\right)\left\{\, {#2} \,\right\}}}

\newcommand{\ARRAYSTORE}[3]{\ensuremath{\ASSIGN{#1 \left[ #2 \right]}{#3}}}
\newcommand{\ARRAYREAD}[2]{#1 \left[ #2 \right]}
\newcommand{\OBSERVESYMBOL}{\ensuremath{\textnormal{\texttt{observe}}}}
\newcommand{\OBSERVE}[1]{\ensuremath{\OBSERVESYMBOL \left( #1 \right)}}

\newcommand{\lfpnospace}{\ensuremath{\mathsf{lfp}}}

\newcommand{\lfp}{\lfpnospace~}

\newcommand{\E}[1]{\mathbb{E}^{#1}} %expectations
\newcommand{\Es}{\E{\struct}} %expectations

\newcommand{\eleq}{\sqsubseteq}
\newcommand{\eeleq}{~{}\eleq{}~}

\newcommand{\einf}{\sqcap}
\newcommand{\bigesup}{\bigsqcup}
\newcommand{\bigeinf}{\bigsqcap}
\newcommand{\FF}{\ensuremath{X}}
\newcommand{\FG}{\ensuremath{Y}}
\newcommand{\FH}{\ensuremath{Z}}

\newcommand{\iverson}[1]{\ensuremath{\left[ #1 \right]}}

\newcommand{\expzero}{\mathbf{0}}

\newcommand{\wpsymbolnostruct}{\mathsf{wp}}
\newcommand{\wpsymbol}{\wpsymbolnostruct^{\struct}}
\newcommand{\wptrans}[1]{\wpsymbol\llbracket#1\rrbracket}
\newcommand{\wptransnostruct}[1]{\wpsymbolnostruct\llbracket#1\rrbracket}
\renewcommand{\wp}[2]{\wptrans{#1}\left(#2\right)}
\newcommand{\wpnostruct}[2]{\wptransnostruct{#1}\left(#2\right)}

\newcommand{\xwpsymbol}{\overline{\mathsf{wp}}}
\newcommand{\xwptrans}[1]{\xwpsymbol\llbracket#1\rrbracket}
\newcommand{\xwp}[2]{\xwptrans{#1}\left(#2\right)}

\newcommand{\wcharfun}[1]{\ensuremath{\Phi_{#1}}}
\newcommand{\kindop}[2]{\ensuremath{\Psi_{{#1}, {#2}}}}
\newcommand{\kindops}{\ensuremath{\Psi_{{\FF}, {\FG}}}}

\newcommand{\xwcharfun}[1]{\ensuremath{\overline{\Phi}_{#1}}}

\newcommand{\xkindops}{\ensuremath{\overline{\Psi}_{{\xadd}, {\xaddb}}}}
\newcommand{\wcharfuniter}[2]{\wcharfun{#1}^{#2}}

\newcommand{\kindopsiter}[1]{\ensuremath{\kindops^{#1}}}
\newcommand{\xwcharfuniter}[2]{\xwcharfun{#1}^{#2}}

\newcommand{\xkindopsiter}[1]{\ensuremath{\xkindops^{#1}}}

\newcommand{\ddsolve}{\textsc{DdSolve}}
\newcommand{\caesar}{\textsc{Caesar}}
\newcommand{\sylvan}{\textsc{Sylvan}}
\newcommand{\lace}{\textsc{Lace}}
\newcommand{\dice}{\textsc{Dice}}
\newcommand{\storm}{\textsc{Storm}}

\begin{document}

%%
%% The "title" command has an optional parameter,
%% allowing the author to define a "short title" to be used in page headers.
\title{Scalable Probabilistic Program Verification via~Typed~Extended~Decision~Diagrams}
%\subtitle{Scalable Verification of Expected Outcomes}

\author{Daniel Basg{\"o}ze}
\affiliation{%
    \institution{RWTH Aachen University}
    \city{Aachen}
    \country{Germany}
}
\email{daniel.basgoeze@rwth-aachen.de}

\author{Kevin Batz}
\orcid{0000-0001-8705-2564}
\affiliation{%
    \institution{Cornell University}
    \city{Ithaca}
    \state{NY}
    \country{USA}
}
\email{ksb239@cornell.edu}

\author{Sebastian Junges}
\orcid{0000-0003-0978-8466}
\affiliation{%
    \institution{Radboud University}
    \city{Nijmegen}
    \country{The Netherlands}
}
\email{sebastian.junges@ru.nl}

\author{Joost-Pieter Katoen}
\orcid{0000-0002-6143-1926}
\affiliation{%
    \institution{RWTH Aachen University}
    \city{Aachen}
    \country{Germany}
}
\email{katoen@cs.rwth-aachen.de}

\begin{abstract}
Weakest pre-expectations are the probabilistic program analogue to weakest preconditions in classical programs. 
Deductive verification approaches aim to establish bounds on these quantitative expectations. 
Their automation has been successful in analysing a variety of discrete probabilistic programs.
Key routines in that automation require reasoning about (partially unrolled) loops, however, the logical representation of weakest pre-expectations on such unrollings often explodes.
In this paper, we develop \emph{typed extended decision diagrams} (TEDDs), inspired by various extensions to binary decision diagrams. We demonstrate computing WPs represented as TEDDs, SMT-based pruning to further shrink their representation, and we lift some proof rules to operate on TEDDs. Finally, we demonstrate that TEDDs boost the scalability of deductive probabilistic program verification by orders of magnitude over the state of the art.   
\end{abstract}

\maketitle

\section{Introduction}

\kbinline{25 pages excl refs}
\kbinline{TODOs: nested loops, PC citations, emphasize log-log scale, cite PC members, continuous paper, explain why one case study performs bad, go through reviews, cite more unno so he stops being mad. EXIST, birkedal, ellora,...}
Probabilistic programs describe probabilistic models~\cite{DBLP:conf/icse/GordonHNR14} in a concise, programmatic manner. 
Applications include implementing randomized algorithms, solving complex Bayesian inference tasks, and modeling network protocols~\cite{DBLP:conf/pldi/GehrMTVWV18,DBLP:conf/esop/FosterKMR016,DBLP:conf/popl/SmolkaKFK017,DBLP:journals/corr/SmolkaKKFK017,DBLP:conf/pldi/SmolkaKKFHK019,DBLP:journals/pacmpl/GiannarakisSW21}.
\emph{Deductively} verifying  probabilistic programs means establishing properties of a program's output distributions by means of logical inference.
Approaches come in different flavours.
Relational, coupling-based, and assertion-based approaches provide expressive logics for reasoning about probabilistic programs~\cite{DBLP:journals/pacmpl/BartheHL20,DBLP:conf/popl/BartheGHS17,DBLP:conf/popl/BartheKOB12,DBLP:journals/pacmpl/BaoDF25,DBLP:journals/pacmpl/HoWR26}.
\emph{Mechanized} approaches often target feature-rich higher-order programs~\cite{DBLP:conf/cpp/MarionneauB0B26,DBLP:journals/corr/abs-2604-12713,DBLP:journals/pacmpl/HaselwarterLAGTB25,DBLP:journals/corr/abs-2511-10135,DBLP:journals/pacmpl/StassenMZAB25,DBLP:journals/pacmpl/Li0GHTB25,DBLP:journals/pacmpl/Gregersen0HTB24,DBLP:journals/pacmpl/GregersenAHTB24,DBLP:journals/pacmpl/0001HMLGTB24,DBLP:journals/pacmpl/HaselwarterLMG024,DBLP:journals/pacmpl/AguirreB23,DBLP:journals/pacmpl/0001024}, whereas \emph{automated} approaches~\cite{kind_cav,DBLP:conf/tacas/BatzCJKKM23,DBLP:journals/pacmpl/SchroerBKKM23,DBLP:conf/qest/GretzKM13,DBLP:conf/sas/KatoenMMM10,DBLP:conf/cav/BartheEFH16,DBLP:journals/pacmpl/AlbarghouthiH18,DBLP:conf/cav/AlbarghouthiH18,DBLP:conf/cav/BaoTPHR22,DBLP:journals/fmsd/BaoTPHR25,DBLP:journals/pacmpl/SusagLHR22,DBLP:journals/pacmpl/SmithHA19,DBLP:journals/pacmpl/ChatterjeeGMZ24,DBLP:conf/cav/SunFCG23,DBLP:conf/cav/ChatterjeeGMZ22,DBLP:conf/pldi/WangS0CG21,DBLP:journals/pacmpl/Huang0CG19,DBLP:conf/pldi/Wang0GCQS19,DBLP:conf/cav/ChatterjeeFG16,DBLP:journals/pacmpl/LiLHW0025,DBLP:conf/pldi/ChenH20,DBLP:journals/corr/abs-2504-04132,DBLP:journals/pacmpl/AvanziniMS23,DBLP:journals/pacmpl/AvanziniMS20,DBLP:journals/corr/abs-1908-11343,DBLP:conf/tacas/ChistikovDM15,DBLP:journals/acta/ChistikovDM17} target first-order programs popular in probabilistic modeling.

In this paper, we build upon the weakest pre-expectation ($\wpsymbolnostruct$) calculus for first-order programs, a prominent generalization of Dijkstra's weakest pre-conditions~\cite{dijkstra_discipline} to the probabilistic setting. Predicates, i.e., Boolean-valued functions, are replaced by Real-valued functions --- \enquote{quantitative assertions} --- for capturing properties like the probability of terminating in a desired program state. \caesar~\cite{DBLP:journals/pacmpl/SchroerBKKM23} is among the state of the art  tools in automated reasoning using the $\wpsymbolnostruct$ calculus. However, it suffers from scalability issues:  For various  seemingly simple programs, the representations of quantitative assertions are prohibitively large.  
%While the $\wpsymbolnostruct$ calculus is state-of-the-art in automated deductive reasoning \kb{cite}, current implementations suffer from scalability issues:  For various  seemingly simple programs, the representations of quantitative assertions are prohibitively large.  
%%
Inspired by the widespread success of decision diagrams (DDs) in formal verification~\cite{DBLP:journals/iandc/BurchCMDH92,DBLP:books/daglib/0020348}%or concisely representing Boolean (or more general) functions, 
, this paper introduces a novel notion of DDs to represent quantitative assertions and operate on them. %alongise efficient operations for concisely representing quantitative assertions of infinite-state probabilistic programs over rich domains. In fact, 
Our extensive experimental evaluation demonstrates superior scalability when compared to state-of-the-art implementations of the $\wpsymbolnostruct$ calculus. We scale, e.g., the verification of a distributed network protocol from 5 to 25 participants and the verification of array routines from 5 to hundreds of entries in the array.

\smallskip\noindent\emph{Large assertions.}
The state-of-the-art in automated $\wpsymbolnostruct$-based deductive verification is based on expressing (comparisons on) the quantitative assertions as Satisfiability Modulo Theories (SMT, \cite{DBLP:series/faia/BarrettSST21}) formulas and discharge proving the validity of these formulas with SMT solvers.  The representation of quantitative assertions tends to grow exponentially in the number of conditionals and random choices in a program. This leads to challenges when verifying \emph{possibly unbounded probabilistic loops}. Both Hoare-style invariant-based reasoning principles and fixed point iteration-based principles rely on \emph{loop unrolling}. This is problematic for the representation of the quantitative assertions, as one essentially glues ever more conditionals and random choices together. %We put emphasis on verifying infinite-state probabilistic loops via the aforementioned reasoning principles.\sj{Do we really need this sentence?}
\smallskip\noindent\emph{DD-based reasoning.}
For verifying finite-state \emph{non}-probabilistic programs, binary decision diagrams (BDDs)~\cite{DBLP:journals/tc/Bryant86} have a long tradition~\cite{DBLP:journals/iandc/BurchCMDH92,DBLP:books/daglib/0020348} as they concisely represent sets of states and allow for efficient operations on these sets. This success has motivated first-order DDs~\cite{DBLP:journals/jlp/GrooteT03} and the use of algebraic decision diagrams (ADDs, or MTBDDs) in probabilistic model checking. ADDs concisely map \emph{finite} sets of states to numbers such as reachability probabilities in Markov chains~\cite{DBLP:conf/icalp/BaierCHKR97}.
For \emph{infinite-state probabilistic programs}, however, we have to map \emph{infinite} sets of states to numbers. For that, we take inspiration from probabilistic planning, where value functions of infinite-state Markov decision processes (MDPs) --- maps from MDP states to numbers --- have to be represented concisely. Towards this end, Sanner et al.\ have introduced  \emph{extended algebraic decision diagrams} (XADDs)~\cite{DBLP:conf/uai/SannerDB11} and \emph{first-order algebraic decision diagrams} (FOADDs)~\cite{DBLP:journals/ai/SannerB09}. XADDs specifically aim to represent value functions with discrete and continuous variables. FOADDs, on the other hand, have been specifically developed to target first-order MDPs, a type of MDP where states are defined by assigning truth values to a set of first-order predicates.

\smallskip\noindent\emph{Our approach.}
We elevate the ideas of XADDs and FOADDs to the verification of infinite-state probabilistic programs over rich domains such as arrays and unbounded integer variables, proposing  \emph{typed extended decision diagrams} (TEDDs). They generalize existing notions of decision diagrams in various ways: First, for generically capturing rich domains, inner nodes are labeled by \emph{many-sorted first-order} formulae. Second, leafs are labeled by \emph{typed expressions}, i.e., while BDDs represent functions from propositional interpretations to truth values, a $\sorta$-TEDD for a given type $\sorta$ maps \emph{first-order interpretations} to \emph{values of type $\sorta$}. Third, this formal basis enables \emph{theory-aware SMT-based pruning} mechanisms for keeping TEDDs small by removing unreachable sub-branches. Fourth, 
we implement TEDDs on top of the efficient BDD library  \sylvan~\cite{DBLP:journals/sttt/DijkP17}, which supports effective and safe multithreading. This has a highly beneficial practical impact:
Our experiments show that exploiting \sylvan's multithreading capabilities regularly leads to a superlinear speedup. 

\smallskip\noindent\emph{Contributions.} In summary, we contribute the following: (1)~We introduce \emph{typed extended decision diagrams} based on many-sorted first order logic and provide various algorithms on TEDDs. (2)~We lift the $\wpsymbolnostruct$ calculus and related proof rules for infinite-state probabilistic loops to TEDDs. (3)~We provide a flexible open-source deductive probabilistic program verifier based on TEDDs, supporting verification and exact inference on infinite-state discrete probabilistic programs with conditioning or non-determinism, costs, arrays, and nonlinear expectations, which significantly beats the state-of-the-art tool \caesar{} on many benchmarks.\sj{I hate the alignment of TEDDs on the right side of the pdf...}

\section{Deductive Verification with TEDDs: A Bird's Eye View}
\label{sec:overview}

We illustrate the type of quantitative properties and probabilistic programs studied in this paper and highlight the benefits of our novel TEDD-based techniques by examples.

\begin{figure}[t]
	%\begin{tikzpicture}[node distance=20mm and 30mm]
	%	% Leaves
	%	\node[bdd leaf] (L0) {$0$};
	%	\node[bdd leaf, right=of L0] (L1) {$1$};
	%	% A decision node x
	%	\node[bdd node, above=of $(L0)!0.5!(L1)$] (x) {$x$};
	%	\draw[bdd edge0] (x) -- (L0) node[midway, left] {};
	%	\draw[bdd edge1] (x) -- (L1) node[midway, right] {};
	%\end{tikzpicture}
	\centering
	%	\begin{subfigure}[b]{.2\textwidth}
		%		\centering
		%		\begin{align*}
			%			\begin{cases}
				%				\vara &\switchfuncase \ARRAYREAD{A}{0} \leq 0 \\
				%				%
				%				%
				%				\nicefrac{1}{2}\cdot \vara &\switchfuncase  \ARRAYREAD{A}{0} > 0~.
				%			\end{cases}
			%		\end{align*}
		%		%
		%		%
		%		%
		%		\caption{BLABLA.}
		%		\label{fig:ex_xadd_overview_1}
		%	\end{subfigure}
	%	%
	%	\hfill
	%
	\begin{subfigure}[b]{.39\textwidth}
	\centering

	\begin{align*}
	&\ASSIGN{i}{0}\, ; \\
	&\WHILE{i < N} \\
	&\qquad \IF{\ARRAYREAD{A}{i} > 0} \\
	&\qquad \qquad \PCHOICE{\ASSIGN{\vara}{0}}{\nicefrac{1}{2}}{\SKIP}\, \\
	&\qquad \}\, ; \\
	&\qquad \ASSIGN{i}{i+1} \\
	%
%	&\qquad \ELSE \\
%	%
%	&\qquad \qquad \SKIP \\
%	%
	%
	&\}
\end{align*}
\caption{Probabilistic Array Program $\cc_N$.}
\label{fig:overview:ccn}
\end{subfigure}
%
%	\begin{subfigure}[b]{.39\textwidth}
%		\centering
%		%
%		%
%	%	\scalebox{0.75}{
%		\begin{align*}
%			\begin{cases}
%				\vara &\switchfuncase \ARRAYREAD{A}{0} \leq 0 \wedge \ARRAYREAD{A}{1} \leq 0  \\
%				%
%				%
%				\nicefrac{1}{2}\cdot \vara &\switchfuncase  \ARRAYREAD{A}{0} \leq 0 \wedge \ARRAYREAD{A}{1} > 0 \\ 
%				%
%				\nicefrac{1}{2}\cdot \vara &\switchfuncase  \ARRAYREAD{A}{0} > 0 \wedge \ARRAYREAD{A}{1} \leq 0 \\ 
%				%
%				\nicefrac{1}{4}\cdot \vara &\switchfuncase  \ARRAYREAD{A}{0} > 0 \wedge \ARRAYREAD{A}{1} > 0 \\ 
%				%
%			\end{cases}
%		\end{align*}
%%	}
%		%
%		%
%		%
%		\caption{Case expression for $N=2$.}
%		\label{fig:overview:case_exprs1}
%	\end{subfigure}
	%
	%
	%	\hfill
	%	%
	\begin{subfigure}[b]{.59\textwidth}
		\centering
		%
		% 
		%\scalebox{0.75}{
		\begin{align*}
			\begin{cases}
				\vara &\switchfuncase \ARRAYREAD{A}{0} \leq 0 \wedge \ARRAYREAD{A}{1} \leq 0  \wedge  \ARRAYREAD{A}{2} \leq 0 \\
				\nicefrac{1}{2}\cdot \vara &\switchfuncase \ARRAYREAD{A}{0} \leq 0 \wedge \ARRAYREAD{A}{1} \leq 0  \wedge  \ARRAYREAD{A}{2} > 0 \\
				\nicefrac{1}{2}\cdot \vara &\switchfuncase \ARRAYREAD{A}{0} \leq 0 \wedge \ARRAYREAD{A}{1} > 0  \wedge  \ARRAYREAD{A}{2} \leq 0 \\
				\nicefrac{1}{2}\cdot \vara &\switchfuncase \ARRAYREAD{A}{0} > 0 \wedge \ARRAYREAD{A}{1} \leq 0  \wedge  \ARRAYREAD{A}{2} \leq 0 \\
				\nicefrac{1}{4}\cdot \vara  &\switchfuncase \ARRAYREAD{A}{0} \leq 0 \wedge \ARRAYREAD{A}{1} > 0  \wedge  \ARRAYREAD{A}{2} > 0 \\
				\nicefrac{1}{4}\cdot \vara  &\switchfuncase \ARRAYREAD{A}{0} > 0 \wedge \ARRAYREAD{A}{1} \leq 0  \wedge  \ARRAYREAD{A}{2} > 0 \\
				\nicefrac{1}{4}\cdot \vara  &\switchfuncase \ARRAYREAD{A}{0} > 0 \wedge \ARRAYREAD{A}{1} > 0  \wedge  \ARRAYREAD{A}{2} \leq 0 \\
				\nicefrac{1}{8}\cdot \vara  &\switchfuncase \ARRAYREAD{A}{0} > 0 \wedge \ARRAYREAD{A}{1} > 0  \wedge  \ARRAYREAD{A}{2} > 0 \\
			\end{cases}
		\end{align*}
		%}
		%
		\caption{Case expression for $N=3$.}
		\label{fig:overview:case_exprs2}
	\end{subfigure}

	\caption{Program $\cc_N$ and case expression $\wpnostruct{C_N}{\vara}$ for $N=3$.}
		\label{fig:overview:case_exprs}
\end{figure}
\subsection{TEDDs: Concise Representations of Expected Outcomes}
Consider the \emph{probabilistic array programs} $\,\cc_N$ with interleaved conditionals and probabilistic choices in Fig.~\ref{fig:overview:ccn}.
%
%\begin{align*}
%	&\ASSIGN{i}{0}\, ; \\
%	%
%	&\WHILE{i < N} \\
%	%
%	&\qquad \IF{\ARRAYREAD{A}{i} > 0} \\
%	%
%	&\qquad \qquad \PCHOICE{\ASSIGN{\vara}{0}}{\nicefrac{1}{2}}{\SKIP}\, \\
%	%
%	&\qquad \}\, ; \\
%	%
%	&\qquad \ASSIGN{i}{i+1} \\
%	%
%%	&\qquad \ELSE \\
%%	%
%%	&\qquad \qquad \SKIP \\
%%	%
%	%
%	&\}
%\end{align*}
%
We use $N \in \Nats$ as a fixed constant. The variables $i$ and $x$ are of type $\Nats$ and $A$ is an $\Nats$-index array over $\Nats$. 
 We iterate over $i$ from $0$ to $N-1$ and if $A$ at index $i$ exceeds $0$, we flip a fair coin using the \emph{probabilistic choice} construct $\PCHOICE{\ldots}{\nicefrac{1}{2}}{\ldots}$. If the coin lands heads (left branch), we set the variable $\vara$ to $0$ and otherwise  we do nothing. Verifying properties about this program thus requires reasoning about interleaved conditional \emph{and} probabilistic choices. 

In this paper, we reason about \emph{expected outcomes} of probabilistic programs. For $\cc_N$ above, we are interested in the \emph{expected final value of $\vara$} obtained upon termination of $\cc_N$. 
%How does that quantity look like? 
This expected outcome depends on the initial program state, in particular the value of $A$, as that value determines the control flow. A bit more formal, we are interested in determining the so-called \emph{weakest pre-expectation} $\wpnostruct{\cc_N}{\vara}$ of $\cc_N$ w.r.t.\ $\vara$ --- a function from initial program states to the expected value of $x$ upon termination. Details are given in \Cref{sec:wp}.
For $N=1$, this weakest pre-expectation is given by
\begin{align*}
	\wpnostruct{\cc_1}{\vara}
	\eeq 
	&
	\begin{cases}
		\vara &\switchfuncase \ARRAYREAD{A}{0} \leq 0 \\
		\nicefrac{1}{2}\cdot \vara &\switchfuncase  \ARRAYREAD{A}{0} > 0~.
	\end{cases}
\end{align*}
This pre-expectation states that if (initially) $\ARRAYREAD{A}{0} \leq 0$, the expected final value of $\vara$ is its initial value, and if $ \ARRAYREAD{A}{0} > 0$, the expected final value of $\vara$ is $\nicefrac{1}{2}$ times its initial value. 

The state-of-the-art implementations of weakest pre-expectations represent $\wpnostruct{\cc_N}{\vara}$ essentially via \emph{case expressions}, i.e., syntactic constructs denoting complete case distinctions on the initial state, just like the example above.
%$ \varvals \to \PosRealsInf$ of $\cc_N$ w.r.t.\ $\vara$},
%
%a function $\FG \colon \varvals \to \PosRealsInf$, 
%where $\varvals$ is the set of all program states and $\PosRealsInf$ is the set of non-negative reals extended by $\infty$. This weakest pre-expectation maps each initial state to the sought-after expected value in the sense that
%
%\[
 %\FG(\vala) 
 %\wpnostruct{\cc_N}{\vara}(\vala)
 %\qquad\quad\eeq\quad\quad 
 %\substack{
% 	\text{\normalsize expected final value of $\vara$ obtained from} \\
% 	\text{\normalsize executing $\cc$ on the initial program state $\vala$~.}}
%\]
%
%State-of-the-art implementations \kb{cite many} of weakest pre-expectations represent $\wpnostruct{\cc}{\vara}$ essentially via \emph{case expressions}, i.e., syntactic constructs denoting complete case distinctions on the initial state~$\vala$. For $N=1$, this case expression is
%%
%\begin{align*}
%	\wpnostruct{\cc}{\vara}
%	%
%	\eeq 
%	&
%	\begin{cases}
%		\vara &\switchfuncase \ARRAYREAD{A}{0} \leq 0 \\
%		%
%		%
%		\nicefrac{1}{2}\cdot \vara &\switchfuncase  \ARRAYREAD{A}{0} > 0~.
%	\end{cases}
%\end{align*}
%%
%That is, if (initially) $\ARRAYREAD{A}{0} \leq 0$, the expected final value of $\vara$ is its initial value, and if $ \ARRAYREAD{A}{0} > 0$, the expected final vlaue of $\vara$ is $\nicefrac{1}{2} \cdot x + \nicefrac{1}{2} \cdot 0$ times its initial value.
 %due to the probabilistic choice in $\cc$. 
 Case expressions often grow \emph{exponentially} in, e.g., the number of conditionals encountered during program execution. Consider $\cc_N$ for increasing $N$: the expected final value of $\vara$ depends on ever more entries of the array $A$, which results in case expressions consisting of $2^N$ cases. \Cref{fig:overview:case_exprs2} depicts the case expressions for $N=3$. This is intractable.

\begin{figure}[t]
%\begin{tikzpicture}[node distance=20mm and 30mm]
%	% Leaves
%	\node[bdd leaf] (L0) {$0$};
%	\node[bdd leaf, right=of L0] (L1) {$1$};
%	% A decision node x
%	\node[bdd node, above=of $(L0)!0.5!(L1)$] (x) {$x$};
%	\draw[bdd edge0] (x) -- (L0) node[midway, left] {};
%	\draw[bdd edge1] (x) -- (L1) node[midway, right] {};
%\end{tikzpicture}
\centering
%\begin{subfigure}[t]{.22\textwidth}
%	\centering
%	\scalebox{0.65}{
%\begin{tikzpicture}[node distance=6mm and 3mm]
%	% Leaves
%	
%	\node[bdd node] (x0) at (0,0) {$\ARRAYREAD{A}{0} > 0$};
%	\node[bdd leaf] (y) at (-1.5,-1.5) {$\vara$};
%	\draw[bdd edge0] (x0) -- (y)  {};
%	
%	\node[bdd leaf] (x1) at (1.5,-1.5) {$\nicefrac{1}{2}\cdot \vara$};
%	\draw[bdd edge1] (x0) -- (x1)  {};
%	
%%	\node[bdd leaf, below left = of x1] (y1) {$\varb+1$};
%%	\draw[bdd edge1] (x1) -- (y1)  {};
%%	
%%	\node[bdd node, below right = of x1] (x2) {$\vara=2$};
%%	\draw[bdd edge0] (x1) -- (x2)  {};
%%	\node[bdd leaf, below left = of x2] (y2) {$\varb+2$};
%%	\node[bdd leaf, below right = of x2] (y3) {$\varb+3$};
%%	\draw[bdd edge1] (x2) -- (y2)  {};
%%	\draw[bdd edge0] (x2) -- (y3)  {};
%	
%	%\node[bdd leaf, below left=of x0] (y) {$\varb$};
%	
%\end{tikzpicture}
%}
%\caption{$N=1$.}
%\label{fig:ex_xadd_overview_1}
%\end{subfigure}
%
%\hfill
%%
\begin{subfigure}[t]{.39\textwidth}
	\centering
	\scalebox{0.75}{
	\begin{tikzpicture}[node distance=0mm and 0mm]
		 
		 	\node[bdd node] (x0) at (0,0) {$\ARRAYREAD{A}{0} > 0$};
		 	
		 	\node[bdd node] (x10) at (-1.5,-1.5) {$\ARRAYREAD{A}{1} > 0$};
		 	\node[bdd node] (x11) at (1.5,-1.5) {$\ARRAYREAD{A}{1} > 0$};
		 	
		 	\node[bdd leaf] (x20) at (-3.0,-3.0) {$\vara$};
		 	\node[bdd leaf] (x21) at (0,-3.0) {$\nicefrac{1}{2}\cdot\vara$};
		 	\node[bdd leaf] (x22) at (3.0,-3.0) {$\nicefrac{1}{4}\cdot\vara$};
		 	
		 	%\node[bdd node] (y1) {$\ARRAYREAD{A}{1} > 0$};
		 	\draw[bdd edge0] (x0) -- (x10)  {};
		 	\draw[bdd edge1] (x0) -- (x11)  {};
		 	
		 	\draw[bdd edge0] (x10) -- (x20)  {};
		 	\draw[bdd edge1] (x10) -- (x21)  {};
		 	
		 	\draw[bdd edge0] (x11) -- (x21)  {};
		 	\draw[bdd edge1] (x11) -- (x22)  {};
		 	
		 	%\node[bdd node] (y2) {$\ARRAYREAD{A}{1} > 0$};
		 	%\draw[bdd edge1] (x0) -- (y2)  {};
		 	
		 	%\node[bdd leaf] (y) {$\vara$};
		 	%\node[bdd leaf] (y) {$\nicefrac{1}{2}\cdot \vara$};

	\end{tikzpicture}
}
	\caption{TEDD for $N=2$.}
	\label{fig:overview:tedds_1}
\end{subfigure}
\hfill
\begin{subfigure}[t]{.59\textwidth}
	\centering
\scalebox{0.75}{
\begin{tikzpicture}[node distance=0mm and 0mm]
	
	\node[bdd node] (x0) at (0,0) {$\ARRAYREAD{A}{0} > 0$};
	
	\node[bdd node] (x10) at (-1.5,-1.5) {$\ARRAYREAD{A}{1} > 0$};
	\node[bdd node] (x11) at (1.5,-1.5) {$\ARRAYREAD{A}{1} > 0$};
	
	\node[bdd node] (x20) at (-3.0,-3.0) {$\ARRAYREAD{A}{2} > 0$};
	\node[bdd node] (x21) at (0,-3.0) {$\ARRAYREAD{A}{2} > 0$};
	\node[bdd node] (x22) at (3.0,-3.0) {$\ARRAYREAD{A}{2} > 0$};
	
	\node[bdd leaf] (x30) at (-4.5,-4.5) {$\vara$};
	\node[bdd leaf] (x31) at (-1.5,-4.5) {$\nicefrac{1}{2}\cdot\vara$};
	\node[bdd leaf] (x32) at (1.5,-4.5) {$\nicefrac{1}{4}\cdot\vara$};
	\node[bdd leaf] (x33) at (4.5,-4.5) {$\nicefrac{1}{8}\cdot\vara$};
	
	\draw[bdd edge0] (x0) -- (x10)  {};
	\draw[bdd edge1] (x0) -- (x11)  {};
	
	\draw[bdd edge0] (x10) -- (x20)  {};
	\draw[bdd edge1] (x10) -- (x21)  {};
	
	\draw[bdd edge0] (x11) -- (x21)  {};
	\draw[bdd edge1] (x11) -- (x22)  {};

	\draw[bdd edge0] (x20) -- (x30)  {};
	\draw[bdd edge1] (x20) -- (x31)  {};
	
	\draw[bdd edge0] (x21) -- (x31)  {};
	\draw[bdd edge1] (x21) -- (x32)  {};
	
	\draw[bdd edge0] (x22) -- (x32)  {};
	\draw[bdd edge1] (x22) -- (x33)  {};
	
	%\node[bdd node] (y1) {$\ARRAYREAD{A}{1} > 0$};
	%\draw[bdd edge0] (x0) -- (y1)  {};
	
	%\node[bdd node] (y2) {$\ARRAYREAD{A}{1} > 0$};
	%\draw[bdd edge1] (x0) -- (y2)  {};
	
	%\node[bdd leaf] (y) {$\vara$};
	%\node[bdd leaf] (y) {$\nicefrac{1}{2}\cdot \vara$};
\end{tikzpicture}
}
\caption{TEDD for $N=3$.}
\label{fig:overview:tedds_2}
\end{subfigure}
\caption{TEDDs representing $\wpnostruct{C_N}{\vara}$ for the program $\cc_N$ in \Cref{fig:overview:ccn}.}
\label{fig:overview:tedds}
\end{figure}
This paper introduces \emph{typed extended decision diagrams} (\emph{TEDDs}) to mitigate the growth of case expressions, thereby rendering the verification of probabilistic programs more scalable. 
TEDDs extend classical BDDs~\cite{DBLP:journals/tc/Bryant86}, where now inner nodes are labeled by many-sorted first order logic formulae and terminal nodes are labelled by some typed expressions. In our case, we use expressions of type $\PosRealsInf$ and logical formulae over a signature suitable for reasoning about $\Nats$-indexed arrays. Such TEDDs express our case expressions, in particular, 
 the TEDD  in \Cref{fig:overview:tedds_1} represents $\wpnostruct{\cc_2}{\vara}$, 
 %--- in our case --- labeled by expressions of type $\PosRealsInf$. The semantics of a TEDD is a case expression. 
 %--- in our case over a signature suitable for reasoning about $\Nats$-indexed arrays. 
 while the TEDD from \Cref{fig:overview:tedds_2} concisely represents the case expression from \Cref{fig:overview:case_exprs2}: Every path from the root to a terminal node represents one case obtained from conjoining the (negated, if a $\dashrightarrow$ successor is taken) formulae encountered along a path from the root to that terminal node.
 
Indeed, while the case expressions grow exponentially, the TEDDs grow quadratically. 
%Let us now compare the case expressions from \Cref{fig:overview:case_exprs} with the corresponding TEDDs from \Cref{fig:overview:tedds}. Whereas the size of the case expressions grows \emph{exponentially} in $N$, the TEDDs grow much more slowly: The number of inner nodes is in $\mathcal{O}(N^2)$ and the number of terminale nodes is $N+1$.
 The reason is that the decision diagram-based structure \emph{naturally captures symmetries} in the function $\wpnostruct{\cc_N}{\vara}$. The TEDD in \Cref{fig:overview:tedds_1} (for $N=2$) makes it explicit that the expected final values of $\vara$ coincide whenever (i) $\ARRAYREAD{A}{0} \leq 0 \wedge \ARRAYREAD{A}{1} > 0$ and (ii) $\ARRAYREAD{A}{0} > 0 \wedge \ARRAYREAD{A}{1} \leq 0$, which follows as for both cases, the program $\cc_2$ performs \emph{exactly one} coin flip. As $N$ increases, more of these symmetries appear (cf.\ \Cref{fig:overview:tedds_2}), resulting in concise TEDDs for huge case expressions. 

To fully exploit these effects, we develop a weakest pre-expectation calculus \emph{operating entirely on TEDDs}. We implement this calculus in a novel software tool and demonstrate its superior performance over state-of-the-art implementations by means of an extensive experimental evaluation.

\subsection{First Order Theory-Aware Pruning: Keeping TEDDs Small}
\begin{figure}[t]
%\begin{tikzpicture}[node distance=20mm and 30mm]
%	% Leaves
%	\node[bdd leaf] (L0) {$0$};
%	\node[bdd leaf, right=of L0] (L1) {$1$};
%	% A decision node x
%	\node[bdd node, above=of $(L0)!0.5!(L1)$] (x) {$x$};
%	\draw[bdd edge0] (x) -- (L0) node[midway, left] {};
%	\draw[bdd edge1] (x) -- (L1) node[midway, right] {};
%\end{tikzpicture}
\centering
%\begin{subfigure}[t]{.22\textwidth}
%	\centering
%	\scalebox{0.65}{
%\begin{tikzpicture}[node distance=6mm and 3mm]
%	% Leaves
%	
%	\node[bdd node] (x0) at (0,0) {$\ARRAYREAD{A}{0} > 0$};
%	\node[bdd leaf] (y) at (-1.5,-1.5) {$\vara$};
%	\draw[bdd edge0] (x0) -- (y)  {};
%	
%	\node[bdd leaf] (x1) at (1.5,-1.5) {$\nicefrac{1}{2}\cdot \vara$};
%	\draw[bdd edge1] (x0) -- (x1)  {};
%	
%%	\node[bdd leaf, below left = of x1] (y1) {$\varb+1$};
%%	\draw[bdd edge1] (x1) -- (y1)  {};
%%	
%%	\node[bdd node, below right = of x1] (x2) {$\vara=2$};
%%	\draw[bdd edge0] (x1) -- (x2)  {};
%%	\node[bdd leaf, below left = of x2] (y2) {$\varb+2$};
%%	\node[bdd leaf, below right = of x2] (y3) {$\varb+3$};
%%	\draw[bdd edge1] (x2) -- (y2)  {};
%%	\draw[bdd edge0] (x2) -- (y3)  {};
%	
%	%\node[bdd leaf, below left=of x0] (y) {$\varb$};
%	
%\end{tikzpicture}
%}
%\caption{$N=1$.}
%\label{fig:ex_xadd_overview_1}
%\end{subfigure}
%
%\hfill
%%
\begin{subfigure}[b]{.49\textwidth}
	\centering
	\begin{align*}
		&\WHILE{a < N \wedge b <N} \\
		&\qquad \PCHOICE{\ASSIGN{a}{a+1}}{\nicefrac{1}{2}}{\ASSIGN{b}{b+1}}\, ; \\
		&\qquad \OBSERVE{a \leq b} \\
		&\}
	\end{align*}

	\scalebox{0.75}{
	\begin{tikzpicture}[
		node distance=0mm and 0mm,
		bdd node/.style={ellipse, draw=black!70, line width=0.45pt, fill=white, inner sep=2.5pt, minimum width=17mm, minimum height=8mm, font=\small},
		bdd leaf/.style={rectangle, rounded corners=1.4pt, draw=black!70, line width=0.45pt, fill=white, inner sep=1.7pt, minimum width=11.5mm, minimum height=6.8mm, font=\small}
	]	 
		 	\draw[solid, fill=blue!20, fill opacity=0.3] (0,-2.25) rectangle (3.0,-3.75);
		 
		 	\node[bdd node] (x0) at (0,0) {$a<1$};
		 	
		 	\node[bdd leaf] (x10) at (-1.5,-1.5) {$1$};
		 	\node[bdd node] (x11) at (1.5,-1.5) {$b<1$};

		 	\node[bdd leaf] (x31) at (1.5,-3.0) {$\nicefrac{1}{2}$};
		 	
		 	\draw[bdd edge0] (x0) -- (x10)  {};
		 	\draw[bdd edge1] (x0) -- (x11)  {};
		 	\draw[bdd edge0] (x11) -- (x10)  {};
		 	\draw[bdd edge1] (x11) -- (x31)  {};
	\end{tikzpicture}
}
	\caption{A program with conditioning and TEDD $\xadd_1$.}
	\label{fig:overview:tedds:observe1}
\end{subfigure}
\hfill
\begin{subfigure}[b]{.49\textwidth}
	\centering
\scalebox{0.75}{
\begin{tikzpicture}[
	node distance=0mm and 0mm,
	bdd node/.style={ellipse, draw=black!70, line width=0.45pt, fill=white, inner sep=2.5pt, minimum width=17mm, minimum height=8mm, font=\small},
	bdd leaf/.style={rectangle, rounded corners=1.4pt, draw=black!70, line width=0.45pt, fill=white, inner sep=1.7pt, minimum width=11.5mm, minimum height=6.8mm, font=\small}
]
	\draw[solid, fill=blue!20, fill opacity=0.3] (0,-2.25) rectangle (6.0,-6.75);
	
			 	\node[bdd node] (x0) at (0,0) {$a<1$};
	
	\node[bdd leaf] (x10) at (-1.5,-6.0) {$1$};
	\node[bdd node] (x11) at (1.5,-1.5) {$b<1$};
	
	\node[bdd node] (x31) at (3.0,-3.0) {$a+1 \leq b$};
	
	\node[bdd node] (x41) at (1.5,-4.5) {$a \leq b+1$};
	\node[bdd node] (x42) at (4.5,-4.5) {$a \leq b+1$};

	\node[bdd leaf] (x51) at (3.0,-6.0) {$\nicefrac{1}{2}$};
	\node[bdd leaf] (x53) at (4.5,-6.0) {$0$};

	\draw[bdd edge0] (x0) -- (x10)  {};
	\draw[bdd edge1] (x0) -- (x11)  {};
	\draw[bdd edge0] (x11) -- (x10)  {};
	\draw[bdd edge1] (x11) -- (x31)  {};
	
	\draw[bdd edge1] (x31) -- (x41)  {};
	\draw[bdd edge0] (x31) -- (x42)  {};
	
	\draw[bdd edge1] (x41) -- (x10)  {};
	\draw[bdd edge0] (x41) -- (x51)  {};
	
	\draw[bdd edge1] (x42) -- (x51)  {};
	\draw[bdd edge0] (x42) -- (x53)  {};
\end{tikzpicture}
}
\caption{This TEDD $\xadd_2$ is equivalent to $\xadd_1$.}
\label{fig:overview:tedds:observe2}
\end{subfigure}
\caption{A program with conditioning and two equivalent TEDDs, representing the function mapping initial program states to the probability of all observations encountered during execution being satisfied (for $N=1$). The \textcolor{blue!40}{marked} area in $\xadd_2$ can be pruned to the area \textcolor{blue!40}{marked} in $\xadd_1$. Variables $a$ and $b$ are of type $\Nats$.}
\label{fig:overview:tedds:observe}
\end{figure}
While TEDDs have the potential to be concise, they may not be. For any fixed theory, a path in the TEDD may not be satisfiable. Using an SMT solver, we can prune these parts of the TEDD to limit their size. We illustrate this by example:
%When reasoning with TEDDs, it is essential to keep them small. The fact that inner nodes are labeled by many-sorted first order logic formulae enables a tight integration of SMT-based, theory-aware pruning techniques. Let us illustrate this by means of an example involving conditioning.
%
%
The loop $\cc$ shown in \Cref{fig:overview:tedds:observe1} tracks the number of heads and tails by a coin flip along with intermediate observations.
The $\OBSERVESYMBOL$-statement \emph{conditions} program execution on $a \leq b$ \emph{after each iteration}. With weakest pre-expectations, we can reason about \emph{conditional} expected outcomes such as the expected final value of variable $a$ \emph{given that all observations have been satisfied} \cite{DBLP:journals/toplas/OlmedoGJKKM18}. Determining conditional probabilities requires computing the probability of all observations being satisfied during program execution, which  is a weakest pre-expectation. % of type $\varvals \to \PosRealsInf$. 
The TEDD in \Cref{fig:overview:tedds:observe1} represents this pre-expectation for $N=1$: satisfying all observations has probability $1$ if $a \geq 1 \vee b \geq 1$ (no iterations) and $\nicefrac{1}{2}$ otherwise (one iteration).

The TEDD $\xadd_1$ in \Cref{fig:overview:tedds:observe1} and the TEDD $\xadd_2$ in \Cref{fig:overview:tedds:observe2} represent the same function. TEDD $\xadd_2$ is the one actually obtained when syntactically calculating the weakest pre-expectation of $\cc$: It essentially enumerates the possible ways of (un)satisfying the loop guard and the $\OBSERVESYMBOL$-statement. If, however, one takes into account that the formulae are interpreted over the \emph{theory of linear arithmetic}, one can \emph{collapse} (or \emph{prune}) the \textcolor{blue!40}{marked} area in $\xadd_2$ to the one in $\xadd_1$. This is because paths like $a<1 \wedge b<1 \wedge a+1 \leq b$ are \emph{unsatisfiable modulo linear arithmetic}. We can therefore employ an SMT-based pruning procedure to keep TEDDs small, and this can be essential: For $N=40$, the un-pruned TEDD consists of $1\,923\,612 $ nodes whereas the pruned TEDD has  $1\,723$ nodes. Our experiments in \Cref{sec:experiments:ablation} show that pruning is particularly important when using  the k-induction principle (\Cref{sec:proof_rules_via_tedds}), accelerating TEDD-based solving by orders of magnitude.

\section{Many-Sorted First Order Logic}
\label{sec:fo}
We briefly introduce a variant of many-sorted first order logic as popularized by the \smtlib standard \cite{Barrett2010TheSS}.
The formalization is tailored to our application, i.e., to deductive (probabilistic) program verification with built-in theories from SMT solvers such as \toolzt~\cite{z3} and \toolcvc~\cite{cvc5}. %Syntax and semantics are presented in \Cref{sec:fo:syntax} and \Cref{sec:fo:semantics}, respectively.

\subsection{Syntax}
\label{sec:fo:syntax}
Let $\sorts = \{\sorta_1,\sorta_2,\ldots\}$ be a countable set of \emph{type (symbols)}, containing the standard numeric types $\sortbool$, $\sortnat$, $\sortint$, $\sortrat$, $\sortreal$, $\sorteureal$, where the latter will be interpreted by the \textbf{E}xtended \textbf{U}nsigned reals, that is, the set $\PosRealsInf = \{ r \in \Reals \mid r \geq 0 \} \cup \{ \infty \}$. 
Let $\vars = \{\vara,\varb,\varc,\ldots\}$ be a countably infinite set of \emph{typed variables}. We write $\hastype{\vara}{\sorta}$ to indicate that $\vara$ is of type $\sorta$.  Let $\functs = \{\funca_1,\funca_2,\ldots\}$ be a countable set of \emph{(typed) function (symbols)}.
We write $\hastype{\funca}{\sorta_1\times\ldots \times\sorta_n \to \sorta}$ to indicate that $\funca$ is of type $\sorta_1\times\ldots\times\sorta_n \to \sorta$, and we call $n$ the \emph{arity} of $\funca$.  If $n=0$, then $\funca$ is a \emph{constant (symbol)}. We assume that $\functs$ contains symbols for the following standard symbols:
%
%\begin{enumerate}
	%\item 
	(1)~the constants $\hastype{\true,\false}{\sortbool}$, all rationals in $\Rats$, and $\infty$,
	%
	%\item 
	(2)~symbols for the logical  connectives, i.e.,  $\hastype{\wedge,\vee}{\sortbool \times \sortbool \to \sortbool}$ and $\hastype{\neg}{\sortbool\to\sortbool}$,
	%
	%\item all rationals in $\Rats$ as well as $\infty$,
	%
	%\item
	 (3)~symbols for the arithmetic connectives\footnote{We do not distinguish between operations for the different numeric types for the sake of simplicity.} $+,\cdot,-$, and the $\monus$ connective for the subtraction truncated at $0$,
	%
	%\item 
	(4)~symbols for the arithmetic relations $<,\leq,\geq,>$.
	%
%\end{enumerate}
%
We use standard infix- and prefix notations and use parentheses to resolve ambiguities. 

%Let $\rels = \{\rela_1,\rela_2,\ldots\}$ be a countable set of \emph{(typed) predicate (symbols)}, containing the standard predicates $<,\leq,\geq,>$ for the numeric types. We write $\hastype{\rela}{\sorta_1\times\ldots \times\sorta_n}$ to indicate that $\real$ has type $\sorta_1\times\ldots \times\sorta_n$. 
%\kb{we might not need this anymore}
%
\begin{definition}
	\label{def:fo}
	 The set $\term$ of \emph{terms}  is given by the grammar
	\begin{align*}
		\terma, \terma_1,\ldots,\terma_n \quad\longrightarrow\quad \vara ~\mid~ \funca ~\mid~ \terma_1 \foeq \terma_2  ~\mid~ \funcb(\terma_1,\ldots,\terma_n)~\mid~\tforall{\varb}{\sorta} \terma ~\mid~\texists{\varb}{\sorta} \terma~,
	\end{align*}
	where $\vara\in\vars$, $\funca\in\functs$ is a constant symbol, and  $\funcb\in\functs$ is a function symbol (with $n\geq 1$). \hfill $\triangle$
\end{definition}
We assume that $\term$ contains only well-typed terms and write $\hastype{\terma}{\sorta}$ to indicate that term $\terma$ has type $\sorta$, which is defined in the standard way. In particular, the term $\terma$ appearing in \Cref{def:fo} is of type $\sortbool$ so that both $\tforall{\varb}{\sorta} \terma$ and $\texists{\varb}{\sorta} \terma$ are also of type $\sortbool$. Similarly,\footnote{$\terma_1 \foeq \terma_2$ is to be understood as an $\fo$ formula whereas $\terma_1 = \terma_2$ denotes \emph{syntactic} equality of $\terma_1$ and $\terma_2$.} terms $ \terma_1 \foeq \terma_2$ denoting equalities are of type $\sortbool$ and well-typed iff both $\terma_1$ and $\terma_2$ are of the same type. Terms of type $\sortbool$ are called \emph{formulae} and the countable set of formulae is denoted by $\fo$.
Treating formulae as terms will allow for a uniform treatment in Section \ref{sec:caseexpr}ff.
For all other types $\sorta$, we denote the countable set of all $\sorta$-typed terms by $\term_\sorta$. Similarly to how SMT solvers treat numeric types and to simplify the presentation, we assume implicit sub-typing for the numeric types. Thus, every term of type $\sortnat$ is also of type $\sortreal$ and $\sorteureal$.

%In particular, term $\terma$ from above is of type $\sortbool$ so that both $\tforall{\varb}{\sorta} \terma$ and $\texists{\varb}{\sorta} \terma$ are also of type $\sortbool$. Similarly,\footnote{$\terma_1 \foeq \terma_2$ is to be understood as an $\fo$ formula whereas $\terma_1 = \terma_2$ denotes \emph{syntactic} equality of $\terma_1$ and $\terma_2$.} terms $ \terma_1 \foeq \terma_2$ denoting equalities are of type $\sortbool$ and well-typed iff both $\terma_1$ and $\terma_2$ are of the same type. We denote the countable set of all $\sorta$-typed terms by $\term_\sorta$. Terms $\terma$ of type $\sortbool$ are called \emph{formulae} and are denoted by $\forma,\formb$, and variations thereof. Bound and free variables are defined as usual. If $\forma$ contains no free variable, then we call $\forma$ a \emph{sentence}.  We usually denote set countable set of formulae by $\fo$ instead of $\term_\sortbool$. 
%

%\emph{Formulae} $\forma$ in the set $\fo$ adhere to the grammar
%%
%\begin{align*}
%	\forma \quad\rightarrow\quad  \true~\mid~\false ~\mid~ \terma = \terma ~\mid~ \rela(\terma_1,\ldots,\terma_n)~\mid~\forma\wedge\forma~\mid~\forma\vee\forma
%	~\mid~\neg\forma~\mid~\tforall{\varb}{\sorta} \forma~\mid~\texists{\varb}{\sorta} \forma~.
%\end{align*}
%%
%We consider only well-typed formulae. \hfill $\triangle$

%Bound and free variabls are defined as usual. If $\forma \in \fo$ contains no free variable, then we call $\forma$ a \emph{sentence}.

\subsection{Semantics}
\label{sec:fo:semantics}
 We use \emph{structures} to assign meanings to type- and function symbols: To each $\sorta \in \sorts$, $\struct$ assigns a set $\sem{\sorta}{\struct}$ of values. 
 Moreover, $\struct$ assigns to each function symbol $\hastype{\funca}{\sorta_1\times\ldots\times\sorta_n \to \sorta}$ a corresponding function  $\sem{\funca}{\struct}\colon \sem{\sorta_1}{\struct}\times \ldots\times\sem{\sorta_n}{\struct}\to\sem{\sorta}{\struct}$.
  In this paper, structures  assign the standard meaning to standard symbols. 
  For the extended reals,  we define $\infty \cdot 0 = 0 \cdot \infty = 0$ and $\infty \cdot \reala = \reala\cdot \infty = \infty$ as well as $\reala \leq \infty$ for all $\reala\in\PosRealsInf$, as is standard  in measure theory \cite{Tao2011MeasureTheory} and  wp-calculi~\cite{kaminski_diss}.

A \emph{(variable) valuation} $\vala$ (for $\struct$) assigns to each variable $\hastype{\vara}{\sorta}$ a value $\vala(\vara)\in\sem{\sorta}{\struct}$. The set of all valuations (for $\struct$) is $\varvals$, where $\struct$ will always be clear from the context. Pairs $(\struct,\vala)$ are called \emph{interpretations} and we denote the class of all interpretations by $\interprets$.
The semantics $\sem{\terma}{(\struct,\vala)}\in\sem{\sorta}{\struct}$ of terms $\hastype{\terma}{\sorta}$ is defined in the standard Tarski-style \cite{DBLP:books/daglib/0068003}. For formulae $\forma$, we write $(\struct,\vala) \models \forma$ instead of $\sem{\forma}{(\struct,\vala)} = \true$. If $\forma$ is a sentence (i.e., $\forma$ does not contain free variables), we write $\struct \models \forma$. 
%\sj{we move to table 1}Given a valuation $\vala$, a variable $\hastype{\vara}{\sorta}$, and a value $\valuea \in \sem{\sorta}{\struct}$, 
%%
%\[
%\vala\valsubst{\vara}{\valuea} \eeq \mylambda{\varb}
%\begin{cases}
%	\valuea & \text{if}~ \vara = \varb \\
%	%
%	\vala(\varb) & \text{otherwise}
%\end{cases}
%\]
%%
%denotes the valuation obtained from updating the value of $\vara$ in $\vala$ by $\valuea$.

Given a set of sentences $\theory$, we write $\struct \models \theory$ to indicate that $\struct \models \forma$ for all $\forma \in \theory$, in which case we call $\struct$ a \emph{$\theory$-structure}.
The set $\theory$ is a \emph{theory}, if $\theory$ is (i) satisfiable and (ii) $\theory$ is closed under logical entailment, i.e., if $\theory \models \formb$, then $\formb\in\theory$ for all sentences $\formb\in\fo$.  Moreover, we say that $(\struct,\vala)$ is a $\theory$-interpretation, if $\struct$ is a $\theory$-structure (which is independent of $\vala$). We say that $\forma\in\fo$ is \emph{satisfiable modulo $\theory$}, denoted $\sats{\forma}$, if there is a $\theory$-interpretation $(\struct,\vala)$ with  $(\struct,\vala) \models \forma$. 

\paragraph{Standard Theories}
We exemplify our concepts by means of two concrete theories:
We refer to $\theoryla$ as the theory of (mixed real- and integer) linear arithmetic, which, via our built-in type-and functions, corresponds to reasoning with $\fo$ formulae involving only linear inequalities and Boolean connectives. Satisfiability modulo $\theoryla$ for quantifier-free formulae is decidable~\cite{DBLP:conf/fmcad/0001BT14}. Moreover, we refer to $\theorylaarr$ as the extension of $\theoryla$ by $\Nats$-indexed arrays with values in $\Nats$. This theory can be obtained by introducing the type symbol $\sortarray$, the function symbols $\hastype{\funcstore}{\sortarray \times \sortnat \times \sortnat \to \sortarray}$ and $\hastype{\funcselect}{\sortarray \times \sortnat\to \sortnat}$ with the axioms
\begin{align*}
&\forall \hastype{\arraya}{\sortarray}\colon\forall \hastype{i,j}{\sortnat}\colon \funcselect(\funcstore(\arraya,j,i), j) \foeq i \\
{}\wedge{}~&\forall \hastype{\arraya}{\sortarray}\colon\forall \hastype{i,j,k}{\sortnat}\colon i\foeq j \vee
\funcselect(\funcstore(\arraya,i,k) ,j) \foeq \funcselect(\arraya,j)~.
\end{align*}
This theory is the standard theory of arrays over $\Nats$ (without extensionality) \cite{Barrett2010TheSS,BarFT-SMTLIB}. We emphasize that our results are compatible with arbitrary first-order theories, including standard theories such as fixed-sized bitvectors and user-defined theories for, e.g., exponential functions~\cite[Section 5]{DBLP:journals/pacmpl/SchroerBKKM23}. % The computational efficiency of our approach  performance of our approach in the presence of such theories does, of course, highly dependent on the capabilities of SMT solvers for dealing with them. For $\theoryla$ and $\theorylaarr$, we evaluate the performance in \Cref{sec:implementation}.
%\kbinline{decidability? }
%We remark that our approach can handle a variety of other theory extensions, including (fixed-size) bitvectors \cite{} or user-defined domains \cite{} for reasoning about, e.g., exponential functions.
%
%
%\kbinline{maybe give example on, e.g, array theory induced by select and store axioms}
%\kbinline{maybe give standard theories considered in this paper. Still have to think about $\infty$.}
%\kbinline{We have built-in LIRA, NRA, give array as example of additional theory?}

%Similarly, we say that $\forma$ \emph{entails} $\formb$ \emph{modulo} $\theory$, denoted $\forma \models_\theory \formb$, if, whenever $(\struct,\vala) \models \forma$, also $(\struct,\vala) \models \forma$

\section{Non-Deterministic Discrete Probabilistic Programs}
\label{sec:pgcl}
We consider an imperative probabilistic programming language à la McIver \& Morgan \cite{mciver_morgan} featuring (i) discrete, (ii) (pure) non-deterministic choices, (iii) modeling costs such as run-time or memory consumption via a designated statement as in \cite{aert}, and (iv) conditioning \cite{DBLP:journals/toplas/OlmedoGJKKM18}. 
\begin{definition}[The Probabilistic Guarded Command Language]
	\label{def:prelim:wp:pgcl}
	The set $\pgcl$ of programs is
%	in the \highlight{\emph{probabilistic guarded command language}} is given \mbox{by the grammar}
	%
	\begin{align*}
		\cc ~ \longrightarrow ~& \phantom{{}\mid{}~} \SKIP  
		%\tag{effectless program}\\
		%
		%&
		{}\mid{}~ \ASSIGN{\vara}{\terma}  
		%\tag{assignment}\\
		%
		%&
		{}\mid{}~ \TICK{\tickrew} 
		 %\tag{incur cost}\\
		%
		%&
		{}\mid{}~\OBSERVE{\forma} \\
		 %\tag{conditioning} \\
		%
		%&
		&{}\mid{}~ \COMPOSE{\cc}{\cc}
		 % \tag{sequential composition}\\
		%
		%&
		{}\mid{}~ \PCHOICE{\cc}{\proba}{\cc} 
		%\tag{probabilistic choice}\\
		%
		%&
		{}\mid{}~ \NDCHOICE{\cc}{\cc}
		 %\tag{non-deterministic choice}\\
		%
		%&
		{}\mid{}~ \ITE{\forma}{\cc}{\cc}
		 %\tag{conditional choice}\\
		%
		%&
		{}\mid{}~
		 \WHILEDO{\forma}{\cc}%~, %\tag{while loop}
	\end{align*}
	where all terms are well-typed and where:
	%
	%\begin{enumerate}
		%\item
		 (1)~$\vara$ is variable of some type $\sorta$ and $\hastype{\terma}{\sorta}$ is a term of type $\sorta$,
		%
		%\item
		 (2)~$\tickrew\in\PosRats$ is a non-negative rational \emph{cost} (often a constant of type $\sortrat$), 
		(3)~$\highlight{\proba} \in [0,1] \cap \Rats$ is a \emph{\highlight{rational probability}}, and
%		%
		(4)~$\highlight{\forma} \in \fo$ is a quantifier-free formula also referred to as a \emph{\highlight{guard}}. \hfill $\triangle$ 
	%\end{enumerate}
	%
	%\hfill $\triangle$
\end{definition}
%
  % by when accessing array elements in formulae.\sj{often? Why not always?}
 % $\hastype{\vara}{\sortarray}$ and $\hastype{\terma,\terma'}{\sortnat}$, 
Let us now briefly go over each statement. $\SKIP$ does nothing. The assignment $\ASSIGN{\vara}{\terma}$ updates the value of $\vara$ by the value of $\terma$ under the current variable valuation. $\TICK{\tickrew}$ has no effect on the variable valuation but models the incurrence of $\tickrew$ costs (e.g., $\tickrew$ units of run-time). $\OBSERVE{\forma}$ conditions program execution on $\forma$ being true. $\COMPOSE{\cc_1}{\cc_2}$ first executes $\cc_1$ and then $\cc_2$. The probabilistic choice $\PCHOICE{\cc_1}{\proba}{\cc_2}$ executes $\cc_1$ with probability $\proba$, and $\cc_2$ with probability $1-\proba$. $\NDCHOICE{\cc_1}{\cc_2}$ models a \emph{purely non-deterministic} choice between $\cc_1$ and $\cc_2$. Such a choice may model an \emph{adversary} aiming at, e.g., minimizing certain success probabilities, or as \emph{controllable} for, e.g., minimizing the overall incurred costs. The conditional $\ITE{\forma}{\cc_1}{\cc_2}$ executes either $\cc_1$ or $\cc_2$, depending on whether $\forma$ holds in the current variable valuation or not. Finally, the loop $\WHILEDO{\forma}{\cc}$ keeps executing $\cc$ as long as the guard $\forma$ holds.

Furthermore, for arrays,  we write $\ARRAYSTORE{\vara}{\terma}{\terma'}$ instead of $\ASSIGN{\vara}{\funcstore(\vara, \terma,\terma')}$ and  $\ARRAYREAD{\vara}{\terma}$ instead of $\funcselect(\vara,\terma)$ to access array elements.
If $\cc$ contains no loops, then we call $\cc$ \emph{loop-free}.
 If $\cc$ contains no non-deterministic choice, then we call $\cc$ \emph{fully probabilistic}.

\section{The Weakest Pre-Expectation Calculus}
\label{sec:wp}
In this section, we introduce the \emph{weakest pre-expectation ($\wpsymbolnostruct$) calculus}.
%Reasoning about costs such as run-time was introduced by \cite{ert_journal}.
%
%To align the formalization of the $\wpsymbolnostruct$-calculus with our many-sorted first order logic setting from \Cref{sec:fo}, 
We parameterize the calculus by a structure $\struct$ (see \Cref{sec:fo}), which we fix throughout this section.  \emph{Program states} for $\pgcl$ programs then correspond to variable valuations $\vala$ in $\varvals$. 

\subsection{Expectations}

The central objects the $\wpsymbolnostruct$-calculus operates on are so-called \emph{expectations} --- random variables on a program's state space. 
The set of expectations (for $\struct$) is 
\(
	\Es = \{ \FF ~\mid~ \FF \colon \varvals \to \PosRealsInf \}
\).
Expectations are denoted by $\FF,\FG,\FH$ and variations thereof. Expectations are quantitative generalizations of predicates from classic program verification: While classical predicates map program states to Booleans, expectations map program states to quantities in $\PosRealsInf$. The tuple $(\Es, \, \eleq)$ is a complete lattice with the partial order $\eleq$, defined as 
\(
	\FF \eleq \FG ~\text{iff}~ \forall \vala \in \varvals \colon \FF(\vala) \leq \FG(\vala)
\),
and state-wise lifted suprema and infima over $\PosRealsInf$.
%, i.e., 
%%
%\[
%	\bigesup \{ \FF_1,\FF_2,\ldots \} \eeq \mylambda{\vala} \sup \{ \FF_1(\vala), \FF_2(\vala), \ldots\}
%	\quad \text{and}\quad
%	\bigeinf \{ \FF_1,\FF_2,\ldots \} \eeq \mylambda{\vala} \inf \{ \FF_1(\vala), \FF_2(\vala), \ldots\}~.
%\]
%
Finally, we usually denote pairwise minima $\bigeinf \{ \FF, \FG \}$ by $\FF \einf \FG$.
%We usually write $\FF \einf \FG$ instead of $\bigeinf \{ \FF, \FG \}$ to denote minima.

\subsection{The Calculus}
\begin{table}[t]
	\begin{center}
		\begin{tabularx}{\textwidth}{X@{\quad}l@{\quad~~}lX}
			\toprule
			\toprule
			$\boldsymbol{\cc}$ & $\mathsf{\mathbf{wp}^{\mathsf{\struct}}}\boldsymbol{\llbracket\cc \rrbracket(\FF)}$  \\[0.5ex]
			\hline 
			%\midrule
			$\SKIP$ & $\FF$  \rule{0pt}{3.5ex}\\[1.8ex]
			%		%
			$\ASSIGN{\vara}{\terma}$ & $\mylambda{\vala} \FF\big( \vala\valsubst{\vara}{\sem{\terma}{(\struct, \vala)}}\big)$   \\[1.5ex]
			%		%
			$\TICK{\tickrew}$ & $\mylambda{\vala} \tickrew + \FF(\vala)$   \\[1.5ex]
			$\OBSERVE{\forma}$ & $\mylambda{\vala} 
			\begin{cases}
				\FF(\vala) & \text{if}~ (\struct, \vala) \models \forma \\
				0& \text{if}~ (\struct, \vala) \models \neg\forma
			\end{cases}$ \\[3.5ex]
			%		%
			$\COMPOSE{\cc_1}{\cc_2}$ & $\wp{\cc_1}{\wp{\cc_2}{\FF}}$  \\[1.5ex]
			%		%
			$\NDCHOICE{\cc_1}{\cc_2}$ & $\wp{\cc_1}{\FF} \einf \wp{\cc_2}{\FF}$ \\[1.5ex]
			$\PCHOICE{\cc_1}{\proba}{\cc_2}$ & $\mylambda{\vala} \proba \cdot \wp{\cc_1}{\FF}(\vala) + (1-\proba )\cdot \wp{\cc_2}{\FF}(\vala)$ \\[1.5ex]
			$\ITE{\forma}{\cc_1}{\cc_2}$ & 
				$\mylambda{\vala} 
				\begin{cases}
						\wp{\cc_1}{\FF}(\vala) & \text{if}~ (\struct, \vala) \models \forma \\
						\wp{\cc_2}{\FF}(\vala) & \text{if}~ (\struct, \vala) \models \neg\forma
				\end{cases}$    \\[3.5ex]
			%		%$\iverson{\forma}\cdot \wp{\cc_1}{\FF} + \iverson{\neg\forma}\cdot \wp{\cc_2}{\FF} $
			%
			$\WHILEDO{\forma}{\cc'}$ & $\lfp \FG.\,\mylambda{\vala}
			\begin{cases}
				   \wp{\cc'}{\FG}(\vala) &  \text{if}~ (\struct, \vala) \models \forma \\ 
				   \FF(\vala) & \text{if}~ (\struct, \vala) \models \neg\forma
			\end{cases}$
			%$\lfp \FG.\, \iverson{\forma}\cdot \wp{\cc'}{\FG} + \iverson{\neg\forma}\cdot\FF$ 
			 \\
			\bottomrule
			\bottomrule
		\end{tabularx}
	\end{center}%
	\caption{Recursive definition of the weakest pre-expectation of program $\cc$ w.r.t.\ post-expectation $\FF$. Given $\hastype{\vara}{\sorta}$ and $\valuea\in \sem{\sorta}{\struct}$, we denote by $\vala\valsubst{\vara}{\valuea}$ the state obtained from updating the value of $\vara$ in  $\vala$ by $\valuea$.}
	\label{tab:wp}%
\end{table}%
We start with the central definition of weakest pre-expectations, lifting weakest pre-conditions.
\begin{definition}
	Let $\cc \in \pgcl$ and $\FF \in \Es$. 
	%
%	\begin{enumerate}
%	\item 
	The \emph{weakest pre-expectation} $\wp{\cc}{\FF} \in \Es$ of $\cc$ w.r.t.\ (post-expectation) $\FF$ is defined recursively on the structure of $\cc$ as shown in \Cref{tab:wp}. \hfill $\triangle$
	%
%	\item The \emph{weakest liberal pre-expectation} $\wlp{\cc}{\FF}$ of $\cc$ w.r.t.\ $\FF$ is defined for $\KWTICK$-free $\cc$ and $1$-bounded $\FF$ and obtained from \Cref{tab:wp} by replacing  $\wpsymbol$ by $\wlpsymbol$ and $\lfpnospace$ by $\gfpnospace$. \hfill $\triangle$
%	\end{enumerate}
\end{definition}

Let us gain some intuition on the quantities weakest pre-expectations determine. We focus on programs $\cc$ \emph{not} containing $\OBSERVESYMBOL$. Conditioning is treated at the end of this section. Given a post-expectation $\FF$, the weakest pre-expectation $\wp{\cc}{\FF}$ is yet another expectation --- a function mapping program states to numbers. The intuition is:
\[
\wp{\cc}{\FF}(\vala) \eeq
\substack{
	\text{\normalsize \emph{minimal} --- under all possible resolutions of the non-determinism ---} \\
	\text{\normalsize \emph{expected} cost incurred when executing $\cc$ on $\vala$} \\
	 \text{\normalsize and collecting a final cost of $\FF(\vala')$ upon termination in state $\vala'$}~.
	}
\]
Hence, in this paper, $\wpsymbol$ treats the pure non-determinism in a \emph{minimizing} manner. If $\cc$ is fully probabilistic, we can drop the topmost line from the above explanation since there is no non-determinism to be resolved.
%If $\cc$ is fully probabilistic, then $\wp{\cc}{\FF}(\vala)$ is the expected value of $\FF$ w.r.t.\ the distribution of final states reached upon executing $\cc$ on $\vala$. If $\cc$ contains non-deterministic choices, this final distribution is not necessarily unique and thus $\wp{\cc}{\FF}(\vala)$ minimizes the expected final value of $\FF$ under \emph{all} final distributions induced by the non-determinism. This can be made precise by establishing tight connections to Markov decision processes \cite{}.
The post-expectation $\FF$ can be thought of as a \emph{continuation} that is taken into account upon termination of $\cc$. More concretely, the weakest pre-expectation calculus enables us to reason, amongst others, about the following \emph{expected outcomes} of $\pgcl$ programs:
\begin{enumerate}
	\item If $\cc$ is $\KWTICK$-free and $\iverson{\forma}$ is the indicator function
	%\footnote{i.e., $\iverson{\forma}(\vala)\in\{0,1\}$ evaluating to $1$ iff $(\struct,\vala)\models \forma$.} 
	of $\forma \in \fo$, then $\wp{\cc}{\iverson{\forma}}(\vala)$ is the (minimal) probability of terminating in a state satisfying $\forma$ when executing $\cc$ on $\vala$.
	\item If $\cc$ is $\KWTICK$-free and $\vara$ is a variable of type $\Nats$, then $\wp{\cc}{\mylambda{\vala} \vala(\vara)}(\vala)$ is the (minimal) expected final value of $\vara$ when executing $\cc$ on the initial state $\vala$.
	\item For $\cc$ possibly containing $\KWTICK$, $\wp{\cc}{0}(\vala)$ is the (minimal) expected cost incurred  (as determined by the $\KWTICK$ statements in $\cc$) when executing $\cc$ on the initial state $\vala$.
\end{enumerate}
\begin{example}
	\label{ex:wp:array}
	Let $\struct$  be a structure with $\sem{\sortarray}{\struct} = \Nats \to \Nats$, assigning to $\funcstore$ and $\funcselect$ the obvious functions for manipulating and accessing these functions. Now consider the program $\cc$
	\begin{align*}
		\ITE{
			\ARRAYREAD{A}{\vara} \leq \varb
		}{\quad 
			\PCHOICE{\ARRAYSTORE{A}{\vara}{\varb +2}}{\nicefrac{1}{2}}{\ARRAYSTORE{A}{\vara}{\varb +4}}\quad
		}{
			\SKIP
		}~,
	\end{align*}
	where $\hastype{A}{\sortarray}$ and $\hastype{\vara,\varb}{\sortnat}$, and fix the post-expectation $\FF = \mylambda{\vala} \sem{A}{(\struct, \vala)} (\vala(\vara))$, evaluating to the value of $A$ at index $\vala(\vara)$ under every state $\vala$. Using the rules in \Cref{tab:wp}, we calculate:
	\begin{align*}
		\wp{\cc}{\FF} \eeq
		\mylambda{\vala} 
		\begin{cases}
			\vala(\varb) + 3&\text{if}~(\struct,\vala) \models  \ARRAYREAD{A}{\vara} \leq \varb \\
			\sem{A}{(\struct, \vala)} (\vala(\vara))&\text{if}~ (\struct,\vala) \not\models  \ARRAYREAD{A}{\vara} \leq \varb~,
		\end{cases}
	\end{align*}
	i.e., if initially $\ARRAYREAD{A}{\vara} \leq \varb$ then the expected final value of $\ARRAYREAD{A}{\vara}$ is $\varb$'s initial value plus $3$. Otherwise, the expected final value of $\ARRAYREAD{A}{\vara}$ is its initial value. \hfill $\triangle$
\end{example}

Let us now go over the rules from \Cref{tab:wp}. Since $\SKIP$ does nothing, $\wp{\SKIP}{\FF}$ is just $\FF$. For assignments, $\wp{\ASSIGN{\vara}{\terma}}{\FF}$, intuitively, simply replaces $\vara$ by $\terma$ in $\FF$. Since, for now, we treat expectations as semantic rather than syntactic objects, this substitution process is formalized semantically. $\wp{\TICK{\tickrew}}{\FF}$ adds the cost $\tickrew$ state-wise to the post-expectation. The rule for sequential composition suggests that weakest pre-expectations are --- analogously to Dijkstra's weakest pre-\emph{conditions} \cite{dijkstra_discipline} --- determined in a \emph{backward-moving} fashion:  We first determine $\wp{\cc_2}{\FF}$ as an intermediate expectation, which is then plugged into $\wp{\cc_1}{\cdot}$, yielding the weakest pre-expectation of $\COMPOSE{\cc_1}{\cc_2}$ w.r.t.\ $\FF$. The weakest pre-expectation of a non-deterministic choice $\NDCHOICE{\cc_1}{\cc_2}$ is the state-wise minimum of the weakest pre-expectations of the two branches, which reflects that non-determinism is treated in a minimizing fashion. For conciseness, we omit the straightforward adaptions for resolving the non-determinism in a maximizing manner. The weakest pre-expectation of a probabilistic choice $\PCHOICE{\cc_1}{\proba}{\cc_2}$ is the $\proba$-weighted convex sum of the pre-expectations of the respective branches. The $\wpsymbolnostruct$ of a conditional evaluates to either of the two branches, depending on whether the current state satisfies the guard or not. 

It is evident that, for the loop-free constructs, determining $\wpsymbolnostruct$ boils down to recursing on the program structure and is hence a rather syntactic process. Reasoning about $\wpsymbolnostruct$'s of loops is more involved as it requires a least fixed-point construction, which we detail next.
%
%

%\paragraph{Loops}
%Let us now detail $\wpsymbolnostruct$'s of loops.
%We now detail weakest pre-expectations of loops. 
Given a loop $\cc = \WHILEDO{\forma}{\cc'}$ and a post-expectation $\FF$, we call\footnote{We omit $\struct$ from $\wcharfun{}$ to avoid clutter as it will always be clear from the context.}
\[
	\wcharfun{\FF} \colon \Es \to \Es, \qquad \wcharfun{\FF}(\FG) \eeq \mylambda{\vala}
	\begin{cases}
		\wp{\cc'}{\FG}(\vala) &  \text{if}~ (\struct, \vala) \models \forma \\ 
		\FF(\vala) & \text{if}~ (\struct, \vala) \models \neg\forma
	\end{cases}
\]
the \emph{characteristic function of $\cc$ w.r.t.\ $\FF$}. By Kleene's fixed point theorem, we have
%Since $\wpsymbol$ satisfies the necessary continuity conditions \cite{kaminski_diss}, Kleene's fixed-point theorem applies so that we have
%
\begin{align*}
	\wp{\WHILEDO{\forma}{\cc'}}{\FF} \eeq \lfp \wcharfun{\FF} \eeq \bigesup_{n\in\Nats} \wcharfuniter{\FF}{n}(\expzero)~,
	\tag{see \cite{kaminski_diss}}
\end{align*}
where $\expzero$ denotes the constantly-$0$-expectation and $\wcharfuniter{\FF}{n}(\expzero)$ is the $n$-fold application of $\wcharfun{\FF}$ to $0$. Intuitively, $\wcharfuniter{\FF}{n}(\expzero)$ is the weakest pre-expectation of $\cc$ w.r.t.\ $\FF$ when aborting $\cc$ after at most $n$ iterations. Thus, iterating $\wcharfun{\FF}$ --- so to speak --- ad infinitum on $\FF$ on $\expzero$ yields the precise $\wpsymbolnostruct$. %This fixed-point iteration is one of the principles we will implement via TEDDs.

\begin{remark}[Conditioning]
	\label{rem:conditioning}
	This presentation of $\wpsymbolnostruct$-calculus  treats conditioning as in~\cite{DBLP:journals/toplas/OlmedoGJKKM18}. That is, for fully probabilistic programs $\cc$ not containing $\KWTICK$-statements but possibly containing $\OBSERVESYMBOL$, $\wp{\cc}{\FF}$ maps each initial program state to the expected final of $\FF$ when \emph{excluding} all runs not satisfying all encountered observations. The \emph{conditional} expected final value of $\FF$ \emph{given} that all observations are satisfied is then given by the function\footnote{This characterization assumes $\cc$ to be almost-surely terminating, which can be avoided by taking weakest \emph{liberal} pre-expectations in the denominator \cite{DBLP:journals/toplas/OlmedoGJKKM18}, as adopted by our implementation.}
	\begin{align*}
		\mylambda{\vala}
		\begin{cases}
			\frac{\wp{\cc}{\FF}(\vala)}{\wp{\cc}{\mylambda{\vala'} 1}(\vala)} & \text{if}~ \wp{\cc}{\mylambda{\vala'} 1}(\vala) > 0, \text{ and}\\
			\text{undefined} & \text{otherwise}~.
		\end{cases}
		\tag*{$\triangle$}
	\end{align*}
\end{remark}

\section{Case Expressions}
\label{sec:caseexpr}
We now introduce (typed) \emph{case expressions} --- the semantic basis of TEDDs. \Cref{sec:caseexpr:syntax_semantics} introduces their syntax and semantics of case expressions, and \Cref{sec:caseexprs:ops} treats important operations.
\subsection{Syntax and Semantics}
\label{sec:caseexpr:syntax_semantics}
Towards defining the syntax of case expressions, we introduce the following notions.
Let $\at \subseteq \fo$ be the set of \emph{Boolean atoms}, i.e., formulae whose outermost constructor is none of the Boolean connectives $\wedge,\vee,\neg$. Notice that formulae starting with a quantifier such as $\exists \hastype{\vara}{\sortnat}\colon \vara \neq 0$ are considered Boolean atoms here. A \emph{propositional interpretation} is a valuation $\propinta \colon \at \to \{0,1\}$ with $\propinta(\true) =1$ and $\propinta(\false)=0$. We extend $\propinta$ homomorphically to Boolean combinations of atoms by
\begin{align*}
\propsem{\formb_1 \wedge \formb_2}{\propinta}
\eeq &\;
\min\{\propsem{\formb_1}{\propinta},\propsem{\formb_2}{\propinta}\}
\\[0.3ex]
\propsem{\formb_1 \vee \formb_2}{\propinta}
\eeq &\;
\max\{\propsem{\formb_1}{\propinta},\propsem{\formb_2}{\propinta}\}
\\[0.3ex]
\propsem{\neg\formb}{\propinta}
\eeq &\;
1-\propsem{\formb}{\propinta}~.
\end{align*}
We call $\propsem{\forma}{\propinta}$ the \emph{propositional truth value of $\forma$ w.r.t.\ $\propinta$}.
Formulae of the form $\forma = \bigwedge_i \formb_i$ with $\formb_i \in \fo$ are called \emph{cubes}. We call $n\geq 1$ cubes\sj{why restrict to cubes here?}\kb{we could generalize but we don't need to}\sj{I think i would define this for a formula.} $\forma_1,\ldots,\forma_n$ \emph{propositionally partitioning}, if for all propositional interpretations $\propinta$ there is exactly one $i\in\{1,\ldots,n\}$ with $\propsem{\forma_i}{\propinta}  = 1$.
%\begin{definition}
%	Let $\sorta\in\sorts$ be a type, let $\hastype{\terma_1,\ldots,\terma_n}{\sorta}$ be terms, and let $\forma_1,\ldots,\forma_n\in\fo$ be partitioning formulae. We call functions $\switchfun \colon \interprets \to \term_\sorta$ of the form
%	%
%	\[
%		\switchfun(\inta) \eeq \begin{cases}
	%			\terma_1 &\text{if}~\inta\models\forma_1 \\
	%			%
	%			\vdots \\
	%			%
	%			\terma_n &\text{if}~\inta\models\forma_n
	%		\end{cases}
%	\]
%	\emph{$\sorta$-switching functions}. Moreover, we define the \emph{value function}\footnote{This function is dependently typed since its codomain depends on the structure $\struct$ determining the semantics of the type $\sorta$.}  $\valfun{\switchfun}\colon ((\struct, \vala) \colon \interprets) \to \sem{\sorta}{\struct}$ as
%	%
%	\[
%		\valfun{\switchfun}(\inta) \eeq \sem{\switchfun(\inta)}{\inta}~.
%	\]
%	%
%	and call $\valfun{\switchfun}(\inta)$ the \emph{value of $\switchfun$ at $\inta$}.
%	\hfill$\triangle$
%\end{definition}
%
\begin{definition}
	Let $\sorta\in\sorts$. A \emph{$\sorta$-case expression $\switchfun$} is a syntactic construct of the form \sj{spacing below is a bit weird?}\kb{i don't see that}
	\[
	\switchfun \eeq \begin{cases}
		\terma_1 &\switchfuncase \forma_1 \\
		\vdots  &  \\
		\terma_n &\switchfuncase\forma_n~,
	\end{cases}
	\]
	where $n\geq 1$, % :
	%
	%\begin{enumerate}
	%	\item 
		$\hastype{\terma_1,\ldots,\terma_n}{\sorta}$ are terms, % of type $\sorta$, 
		and
		%
		%\item 
		$\forma_1,\ldots,\forma_n$
		%\in\fo$
		 are propositionally partitioning cubes. \hfill $\triangle$
	%\end{enumerate}
	%
\end{definition}
We denote the set of all $\sorta$-case expressions by $\switchfunset{\sorta}$. For $\sortbool$-case expressions in the set $\switchfunset{\sortbool}$, we impose the additional restriction that all left-hand side's terms are one of the Boolean constants $\true$ or $\false$, and we define the \emph{canonical $\sortbool$-case expression} of $\forma\in\fo$ as
\[
	\switchfun_\forma \eeq \begin{cases}
		\true &\switchfuncase \forma \\
		\false &\switchfuncase \neg \forma~.
	\end{cases}
\]

A $\tau$-case expression $\switchfun$ represents functions from interpretations $\inta\in\interprets$ to terms of type $\tau$. The \emph{term selected by $\switchfun$ under $\inta$} is
\begin{align*}
	\switchfunterm{\switchfun}{\inta}
	\eeq 
	\begin{cases}
		{\terma_1}&\text{if}~ \inta\models\forma_1 \\
		\vdots &  \\
		{\terma_n} & \text{if}~\inta\models\forma_n~. 
	\end{cases}
\end{align*}
The \emph{value of $\switchfun$ under $\inta$} is $\sem{\switchfunterm{\switchfun}{\inta}}{\inta}$, written $\sem{\switchfun}{\inta}$. Well-definedness  is ensured as $\forma_1,\ldots,\forma_n$ are propositionally partitioning: For every $\inta\in\interprets$, there is a unique $i$ with $\inta\models\forma_i$. The  property is generally undecidable, thus, we impose this property by its decidable \emph{propositional} variant.
\begin{example}
	\label{ex:eurealswitchfun}
	$\sorteureal$-case expressions provide us with a syntactic means to specify (classes of) expectations. Let $\hastype{\vara,\varb}{\sortnat}$ and consider the $\sorteureal$-case expression
	\[
		\switchfun \eeq 
		\begin{cases}
			\varb &\switchfuncase \vara \foeq 0 \\
			\varb+1 &\switchfuncase \vara \foneq 0 \wedge \vara\foeq 1 \\
			\varb+2 &\switchfuncase \vara \foneq0 \wedge \vara \foneq 1 \wedge \vara\foeq 2 \\
			\varb+3 &\switchfuncase \vara \foneq0 \wedge \vara \foneq 1 \wedge \vara\foneq 2 ~.
		\end{cases}
	\]
	We have $\sem{\switchfun}{(\struct,\vala)}\in\PosRealsInf$ for all interpretations $(\struct,\vala)$. Consequently, we have $\mylambda{\vala}\sem{\switchfun}{(\struct,\vala)}\in~\Es$ for every structure $\struct$. Here, $\struct$ is irrelevant since $\switchfun$ only involves standard arithmetic.
	\hfill $\triangle$
\end{example}

\subsection{Operations on Case Expressions}
\label{sec:caseexprs:ops}
Next, we introduce two operations on case expressions central to our approach: Operations induced by operators on terms and substitutions of variables by terms.
%
%
%
%\paragraph{Lifting Operators on Terms to Case Expressions}
\begin{definition}
	Let $n\geq 1$, let $\sorta_1,\ldots,\sorta_n,\sorta\in\sorts$, and let $\termop\colon\term_{\sorta_1}\times\ldots\times\term_{\sorta_n} \to \term_\sorta$ be an operator on terms. 
	We lift $\termop$ to an  \emph{operator} on case expressions:
	 \begin{align*}
&	\switchfunop{\termop} \colon \switchfunset{\sorta_1} \times \ldots \times \switchfunset{\sorta_n} \to \switchfunset{\sorta}, \\
	& 	\switchfunop{\termop} \Big(
	 	\begin{cases}
	 		\terma_{1,1} & \switchfuncase \forma_{1,1} \\
	 		\vdots & \\
	 		\terma_{1,m_1} & \switchfuncase \forma_{1,m_1}
	 	\end{cases}
	 	,\,
	 	\ldots,\,
	 	\begin{cases}
	 		\terma_{n,1} & \switchfuncase \forma_{n,1} \\
	 		\vdots & \\
	 		\terma_{n,m_n} & \switchfuncase \forma_{n,m_n}
	 	\end{cases}
	 	\Big)
	 	\quad\eeq\quad
	 	\underset{\mathclap{\text{for each}~(i_1,\ldots,i_n) \in \{1,\ldots,m_1 \}\times\ldots\times\{ 1,\ldots,m_n \} }}{
	 	\left\{
	 	\begin{array}{ll}
	 		\termop(\terma_{1,i_1},\ldots,\terma_{n,i_n}) & \switchfuncase \bigwedge_{j=1}^n \forma_{j,i_j}\\
	 	\end{array}
	 	\right.
	 }~.
	 \tag*{$\triangle$}
	 \end{align*}
	 %
	 %\hfill $\triangle$
\end{definition}
%
%\sj{Maybe introduce notation for $\{ 1, \dots, m_n \}$}
%
This operator satisfies 
$
	\sem{\switchfunop{\termop}(\switchfun_1,\ldots,\switchfun_n)}{\inta}
	 = 
	 \sem{\termop(\switchfunterm{\switchfun_1}{\inta},\ldots,\switchfunterm{\switchfun_n}{\inta})}{\inta}~,
$
i.e., applying $\switchfunop{\termop}$ to $\switchfun_1,\ldots,\switchfun_n$ is equivalent to applying $\termop$ to the terms selected by $\switchfun_1,\ldots,\switchfun_n$ for every $\inta$.
If $\termop$ simply applies some function symbol $\funca$ to its arguments, we often write $\switchfunop{\funca}$ \mbox{instead of $\switchfunop{\termop} $.}

Our formalization via term operators enables a uniform treatment of various operations on case expressions, which will also generalize to TEDDs. For instance, case distinctions can be implemented via the operator $\switchITEsymbol\colon \fo \times \term_{\sorta} \times \term_{\sorta} \to \term_\sorta$, defined for every type $\sorta$ as
\[
	\switchITE{\forma}{\terma_1}{\terma_2} \eeq
	\begin{cases}
		 \terma_1 & \text{if}~\forma = \true \\
		 \terma_2 & \text{otherwise}~,
	\end{cases}
	\quad\text{so that}\quad
	\sem{\switchfunop{\switchITEsymbol}(\switchfun_\forma, \switchfun_1,\switchfun_2)}{\inta}
	\eeq
	\begin{cases}
		\sem{\switchfun_1}{\inta} & \text{if}~\inta \models \forma \\
		\sem{\switchfun_2}{\inta} & \text{if}~\inta \not\models \forma~.
	\end{cases}
\]
%
%so that 
%%
%\[
%	\sem{\switchfunop{\switchITEsymbol}(\switchfun_\forma, \switchfun_1,\switchfun_2)}{\inta}
%	\eeq
%	\begin{cases}
%		\sem{\switchfun_1}{\inta} & \text{if}~\inta \models \forma \\
%		%
%		\sem{\switchfun_2}{\inta} & \text{if}~\inta \not\models \forma~.
%	\end{cases}
%\]

\begin{example}
	Let $\switchfun_1,\switchfun_2$ be $\sorteureal$-case expressions. Using $\switchfunopsymbol$, we obtain the switching function representing the interpretation-wise addition of $\switchfun_1$ and $\switchfun_2$ by $\switchfunop{+}(\switchfun_1,\switchfun_2)$, i.e., 
	\begin{align*}
	  \forall{\inta\in\interprets}\colon\quad
		\sem{\switchfunop{+}(\switchfun_1,\switchfun_2)}{\inta}
		\eeq
		\sem{\switchfun_1}{\inta} + \sem{\switchfun_2}{\inta} ~.
		\tag*{$\triangle$}
	\end{align*}
\end{example}
Next, we define substitution of variables by terms in case expressions:
\begin{definition}
	Let $\sorta,\sorta'$ be a types, $\switchfun$ a $\sorta$-case expression and let $\hastype{\vara}{\sorta'}, \hastype{\terma}{\sorta'}$ be a variable and term of type $\sorta'$, respectively. We define $\switchfun\switchfunsubst{\vara}{\terma}$ to be the case expression obtained from $\switchfun$ by replacing every occurrence of $\vara$ by $\terma$ in all terms and formulae in $\switchfun$ (in a capture-avoiding manner). \hfill $\triangle$
\end{definition}

Naturally, this operator satisfies 
\[
	\sem{\switchfun}{(\struct, \vala\valsubst{x}{\sem{\terma}{(\struct,\vala)}})} = \sem{\switchfun\switchfunsubst{\vara}{\terma}}{(\struct,\vala)}~,
\]
i.e., syntactically replacing $\vara$ by $\sorta$ in $\switchfun$ corresponds to semantical substitution.

\section{Typed Extended Decision Diagrams}
\label{sec:xadds}
%Let $\pvars = \{\pvara_1,\pvara_2,\ldots\}$ be a countably infinite set of (propositional) variables, let $\prec$ be a linear order on $\pvars$,  and let $\adom$ be a (possibly infinite) set.
%%
%\begin{definition}
%	An \emph{algebraic decision diagram} (over $\adom$) is a structure
%	%
%	\[
%		\add \eeq
%		(\anodes,\, \annodes, \, \atnodes, \, \asuccf, \, \asucct, \, \alab, \, \aroot)~,
%	\]
%	%
%	where: 
%	%
%	\begin{enumerate}
%		\item $V$ non-empty partitioned intro $\annodes$,$\atnodes$
%	\end{enumerate}
%\end{definition}
%
%
In this section, we introduce our typed extended decision diagrams (TEDDs). We introduce syntax and semantics in \Cref{sec:xadds:syntax_semantics} and discuss reducedness and canonicity of TEDDs in \Cref{sec:tedds:reducedness}. 
% and present algorithms in \Cref{sec:xadds:operations}.\sj{Why discuss what is in 7.1 and in 8, but not what is in 7.2, 7.3?}\kb{because that section is new and I forgot, thanks} In \Cref{sec:xadds:operations}, we treat operations on TEDDs.\sj{Why mention that here?}\kb{because this is a new section now and I forgot to adapt }

\subsection{Syntax and Semantics}
\label{sec:xadds:syntax_semantics}
\begin{figure}[t]
%\begin{tikzpicture}[node distance=20mm and 30mm]
%	% Leaves
%	\node[bdd leaf] (L0) {$0$};
%	\node[bdd leaf, right=of L0] (L1) {$1$};
%	% A decision node x
%	\node[bdd node, above=of $(L0)!0.5!(L1)$] (x) {$x$};
%	\draw[bdd edge0] (x) -- (L0) node[midway, left] {};
%	\draw[bdd edge1] (x) -- (L1) node[midway, right] {};
%\end{tikzpicture}
\centering
\begin{subfigure}[t]{0.48\textwidth}
	\centering
	\scalebox{0.75}{%
	\begin{tikzpicture}[
		node distance=6mm and 3mm,
		decision/.style={ellipse, draw=black!70, line width=0.45pt, fill=white, inner sep=2.5pt, minimum width=17mm, minimum height=8mm, font=\small},
		leaf/.style={rectangle, rounded corners=1.4pt, draw=black!70, line width=0.45pt, fill=white, inner sep=1.7pt, minimum width=11.5mm, minimum height=6.8mm, font=\small},
		edgefalse/.style={-{Stealth[length=1.9mm]}, dashed, line width=0.65pt, draw=black!75},
		edgetrue/.style={-{Stealth[length=1.9mm]}, line width=0.65pt, draw=black!75}
	]
		 
		 	\node[decision] (x0) {$\vara=0$};
		 	\node[decision, below right=of x0] (x1) {$\vara=1$};
		 	\node[decision, below left=of x1] (x21) {$\vara=2$};
		 	\node[decision, below right=of x1] (x22) {$\vara=2$};
		 	
		 	\node[leaf, below left=10mm and 3mm of x21] (y) {$\varb$};
		 	\node[leaf, below=10mm of x21] (y1) {$\varb+1$};
		 	
		 	\node[leaf, below=10mm of x22] (y2) {$\varb+2$};
		 	\node[leaf, below right=10mm and 3mm of x22] (y3) {$\varb+3$};
		 	
		 	\draw[edgetrue] (x0) edge[bend right] (y)  {};
		 	\draw[edgefalse] (x0) edge (x1)  {};
		 	\draw[edgetrue] (x1) edge (x21)  {};
		 	\draw[edgefalse] (x1) edge (x22)  {};
		 	\draw[edgetrue] (x21) edge (y)  {};
		 	\draw[edgefalse] (x21) edge (y1)  {};
		 	\draw[edgetrue] (x22) edge (y2)  {};
		 	\draw[edgefalse] (x22) edge (y3)  {};
		
	\end{tikzpicture}
}
	\caption{This TEDD is \emph{not} pruned modulo $\theoryla$.}
	\label{fig:ex_xadd2}
\end{subfigure}\hfill
\begin{subfigure}[t]{0.48\textwidth}
	\centering
	\scalebox{0.75}{%
\begin{tikzpicture}[
	node distance=6mm and 3mm,
	decision/.style={ellipse, draw=black!70, line width=0.45pt, fill=white, inner sep=2.5pt, minimum width=17mm, minimum height=8mm, font=\small},
	leaf/.style={rectangle, rounded corners=1.4pt, draw=black!70, line width=0.45pt, fill=white, inner sep=1.7pt, minimum width=11.5mm, minimum height=6.8mm, font=\small},
	edgefalse/.style={-{Stealth[length=1.9mm]}, dashed, line width=0.65pt, draw=black!75},
	edgetrue/.style={-{Stealth[length=1.9mm]}, line width=0.65pt, draw=black!75}
]
	% Leaves
	
	\node[decision] (x0) {$\vara=0$};
	\node[leaf, below left=10mm and 3mm of x0] (y) {$\varb$};
	\draw[edgetrue] (x0) -- (y)  {};
	
	\node[decision, below right=of x0] (x1) {$\vara=1$};
	\draw[edgefalse] (x0) -- (x1)  {};
	\node[leaf, below left=10mm and 3mm of x1] (y1) {$\varb+1$};
	\draw[edgetrue] (x1) -- (y1)  {};
	
	\node[decision, below right=of x1] (x2) {$\vara=2$};
	\draw[edgefalse] (x1) -- (x2)  {};
	\node[leaf, below left=10mm and 3mm of x2] (y2) {$\varb+2$};
	\node[leaf, below right=10mm and 3mm of x2] (y3) {$\varb+3$};
	\draw[edgetrue] (x2) -- (y2)  {};
	\draw[edgefalse] (x2) -- (y3)  {};
	
	%\node[bdd leaf, below left=of x0] (y) {$\varb$};
	
\end{tikzpicture}
}
\caption{This TEDD is pruned modulo $\theoryla$.}
\label{fig:ex_xadd1}
\end{subfigure}
\caption{Two $\sorteureal$-TEDDs, which are equivalent modulo $\theoryla$. All variables are of type $\sortnat$.}
\label{fig:ex_xadd}
\end{figure}
Fix a linear order
%\footnote{Our implementation generates this order on-the-fly by ordering formulae chronologically by the time they are encountered.}\sj{This footnote feels out of place, we are talking syntax en semantics} 
$\varord$ on the countably infinite set of atomic formulae in $\at$ (cf.\ \Cref{sec:caseexpr:syntax_semantics}).
\begin{definition}\kb{labels now map tp $\at$!! Check consistent}
	Let $\sorta\in\sorts$ be a type. A \emph{$\sorta$-typed extended decision diagram ($\sorta$-TEDD)} is a tuple
	\[
		\xadd \eeq (\xnodes, \, \xnnodes, \, \xtnodes, \, \xroot, \, \xsuccf, \, \xsucct, \, \xlabn, \xlabt)~, %\quad \text{ where }
	\]
	%\sj{I would consider having explicitly labels for inner nodes and terminals... i personally find that way clearer}
	%
		%
	%\begin{enumerate}
		%\item 
		where $\xnodes$ is a finite set of \emph{nodes}, partitioned into \emph{non-terminal nodes} $\xnnodes$ and \emph{terminal nodes} $\xtnodes$, $\xroot\in\xnodes$ is the \emph{root node},
		%
		%\item
		 $\xsuccf,\xsucct\colon \xnnodes \to \xnodes$ are \emph{successor functions} such that all nodes $\xnodea \in \xnodes\setminus\{\xroot\}$ have at least one predecessor. We call $\xsuccf(\xnodea)$ and $\xsucct(\xnodea)$, the \emph{$\false$- \ and $\true$-, successors of $\xnodea$}, respectively.
		%
		%\item 
		The functions
		$\xlabn\colon \xnnodes \to \at$ and $\xlabt \colon \xtnodes \to\term_\sorta $ are \emph{labelings}, where
		%\begin{enumerate}
			%\item $\forall \xnodea\in\xnnodes\colon \xlab(\xnodea) \in \fo$ and $\forall \xnodea\in\xtnodes\colon \xlab(\xnodea) \in \term_\sorta$,
			%
			 $\xlabn$ is compatible with
			$\varord$, i.e., where for any non-terminal  successor $\xnodea'$ of $\xnodea$, $\xlabn(\xnodea) \varord\xlabn(\xnodea')$.
			%
%			\begin{align*}
%				\forall \xnodea,\xnodea'\in\xnnodes\colon \qquad \xsuccf(\xnodea)=\xnodea' ~\text{or}~ \xsucct(\xnodea)=\xnodea' 
%					\qquad\text{implies}\qquad\xlabn(\xnodea) \varord\xlabn(\xnodea')~.
%					\tag*{$\triangle$}
%			\end{align*}
	%	\end{enumerate}
	%\end{enumerate}
\end{definition}
%
%We write $\xlab\colon \xnodes \to \fo \cup \term_\sorta$ for the union of $\xlabn$ and $\xlabt$.
A $\sorta$-TEDD is thus a labeled acyclic directed graph, where the topology and the labeling of the non-terminal nodes respects the order $\varord$ on formulae.  We denote by $\succtxadd{\xadd}$ and $\succfxadd{\xadd}$ the sub-TEDDs of $\xadd$ with root $\xsucct(\xroot)$ and $\xsuccf(\xroot)$, resp. $\xadd$ is called \emph{terminal} if its root is.
% We call a TEDD \emph{atomic}, if all of its non-terminal nodes are labeled by atomic formulae, i.e., by formulae not containing Boolean connectives or quantifiers. \sj{where do we use this?}\kb{TODO, checkmaybe only once} 
We require that terminal nodes of a $\sortbool$-TEDD are labeled by $\true$ or $\false$. The \emph{size} of $\xadd$, written $\sizeof{\xadd}$, is the number of nodes of $\xadd$, i.e., $\sizeof{\xadd}=\sizeof{\xnodes}$.

A $\tau$-TEDD gives rise to a $\tau$-case expression, where each path from the root to a terminal node determines one case by conjoining the (negated) labels encountered along the path: 
%Put formally:
\begin{definition}
	\label{def:switchfun_of_xadd}
	Let $\xadd$ be a $\sorta$-TEDD and let $\xnodea\in\xnodes$ be a node. We define the \emph{$\sorta$-case expression $\xswitchfunn{\xadd}{\xnodea}$ of $\xadd$ and $\xnodea$} recursively on the acyclic topology of $\xadd$:
	%
%	\begin{align*}
%		\xswitchfunn{\xadd}{\xnodea} \eeq 
%		\begin{cases}
%			  \begin{cases}
%			  	 \xlab(\xnodea) & \switchfuncase \true
%			  \end{cases}
%			  & \text{if}~\xnodea \in \xtnodes \\[1.3em]
%			  %
%			  \switchfunop{\switchITEsymbol} \left(\switchfun_{\xlab(\xnodea)}, \xswitchfunn{\xadd}{\xsucct(\xnodea)},\xswitchfunn{\xadd}{\xsuccf(\xnodea)}  \right)
%			  % \switchITE{\xlab(\xnodea)}{\xswitchfunn{\xadd}{\xsucct(\xnodea)}}{\xswitchfunn{\xadd}{\xsuccf(\xnodea)}}
%			  & \text{if}~\xnodea \in \xnnodes ~.
%			  % \tag*{$\triangle$}
%		\end{cases}
%	\end{align*}
	%
	\begin{enumerate}
		\item If $\xnodea \in \xtnodes$, then $\xswitchfunn{\xadd}{\xnodea}= \begin{cases}
			\xlabt(\xnodea) & \switchfuncase \true
		\end{cases}$.
		\item If $\xnodea \in \xnnodes$, then $\xswitchfunn{\xadd}{\xnodea}= \switchfunop{\switchITEsymbol} \Big((\switchfun_{\xlabn(\xnodea)}, \xswitchfunn{\xadd}{\xsucct(\xnodea)},\xswitchfunn{\xadd}{\xsuccf(\xnodea)}  \Big)$. \hfill $\triangle$
	\end{enumerate}
	%\hfill $\triangle$
\end{definition}
We often write $\xswitchfuns$ instead of $\xswitchfunn{\xadd}{\xroot}$ and $\sem{\xadd}{\inta}$ instead of $\sem{\xswitchfuns}{\inta}$.

\begin{example}
	The DAG in \Cref{fig:ex_xadd1} depicts an $\sorteureal$-TEDD $\xadd$ graphically. Solid (resp.\ dashed) arrows point to $\true$- (resp.\ $\false$-) successors. Inner nodes are labeled by formulae whereas terminal nodes are labeled by terms of type $\sorteureal$. The $\sorteureal$-case expression of $\xadd$ as defined in \Cref{def:switchfun_of_xadd} is the expression from \Cref{ex:eurealswitchfun}.  \hfill $\triangle$
\end{example}
%
%
%\begin{definition}
\subsection{Reducedness and Canonicity of TEDDs}
\label{sec:tedds:reducedness}
	\begin{figure}[t]
\centering
\begin{subfigure}[t]{0.48\textwidth}
\centering
\scalebox{0.75}{%
\begin{tikzpicture}[
	node distance=6mm and 3mm,
	decision/.style={ellipse, draw=black!70, line width=0.45pt, fill=white, inner sep=2.5pt, minimum width=17mm, minimum height=8mm, font=\small},
	leaf/.style={rectangle, rounded corners=1.4pt, draw=black!70, line width=0.45pt, fill=white, inner sep=1.7pt, minimum width=11.5mm, minimum height=6.8mm, font=\small},
	edgefalse/.style={-{Stealth[length=1.9mm]}, dashed, line width=0.65pt, draw=black!75},
	edgetrue/.style={-{Stealth[length=1.9mm]}, line width=0.65pt, draw=black!75}
]
	\node[decision] (l0) {$\vara \foeq \vara$};
	\node[leaf, below left=10mm and 3mm of l0] (l1) {$\terma_1$};
	\node[leaf, below right=10mm and 3mm of l0] (l2) {$\terma_2$};

	\draw[edgetrue] (l0) -- (l1);
	\draw[edgefalse] (l0) -- (l2);
\end{tikzpicture}%
}
\caption{TEDD $\xadd$.}
\label{fig:fo-canonicity-counterexample:x}
\end{subfigure}\hfill
\begin{subfigure}[t]{0.48\textwidth}
\centering
\scalebox{0.75}{%
\begin{tikzpicture}[
	node distance=6mm and 3mm,
	decision/.style={ellipse, draw=black!70, line width=0.45pt, fill=white, inner sep=2.5pt, minimum width=17mm, minimum height=8mm, font=\small},
	leaf/.style={rectangle, rounded corners=1.4pt, draw=black!70, line width=0.45pt, fill=white, inner sep=1.7pt, minimum width=11.5mm, minimum height=6.8mm, font=\small},
	edgefalse/.style={-{Stealth[length=1.9mm]}, dashed, line width=0.65pt, draw=black!75},
	edgetrue/.style={-{Stealth[length=1.9mm]}, line width=0.65pt, draw=black!75}
]
	\node[decision] (r0) {$\varb \foeq \varb$};
	\node[leaf, below left=10mm and 3mm of r0] (r1) {$\terma_1$};
	\node[leaf, below right=10mm and 3mm of r0] (r2) {$\terma_2$};

	\draw[edgetrue] (r0) -- (r1);
	\draw[edgefalse] (r0) -- (r2);
\end{tikzpicture}%
}
\caption{TEDD $\xaddb$.}
\label{fig:fo-canonicity-counterexample:y}
\end{subfigure}

\caption{Counterexample to canonicity modulo first-order logical equivalence. Let $\hastype{\vara,\varb}{\sortint}$. The labels $\vara\foeq\vara$ and $\varb\foeq\varb$ are distinct Boolean atoms, so the two reduced TEDDs are not propositionally equivalent. Under first-order semantics, however, both labels are valid in every structure, and therefore both TEDDs represent the same function from first-order interpretations to values.}
\label{fig:fo-canonicity-counterexample}
\end{figure}
Reducedness and canonicity are important aspects of classic BDDs \cite{DBLP:journals/tc/Bryant86} and ADDs \cite{adds}.
Let us now define an appropriate notion of reducedness for TEDDs and investigate in what sense TEDDs are canonical representations of case expressions.

	Let $\sorta \in \sorts$. Given a $\sorta$-case expression $\switchfun$ and a propositional interpretation $\propinta$, denote by $\propsem{\switchfun}{\propinta}$ the unique term $\terma_i$ such that $\propsem{\forma_i}{\propinta} =1$. We call two $\sorta$-case expressions $\switchfun_1,\switchfun_2$ \emph{propositionally equivalent}, denoted $\switchfun_1 \switchfunequiv \switchfun_2$, if $\propsem{\switchfun_1}{\propinta} = \propsem{\switchfun_2}{\propinta}$ for all propositional interpretations $\propinta$.
	%
	% \[
	% 	\forall\, \text{propositional interpretations $\propinta$}\colon \propsem{\switchfun_1}{\propinta} = \propsem{\switchfun_2}{\propinta}~.
	% \]
	\begin{definition}
	We call a $\sorta$-TEDD $\xadd$ \emph{reduced}, if for all nodes $\xnodea_1,\xnodea_2\in\xnodes$, we have 
	\[
		\xswitchfunn{\xadd}{\xnodea_1} \switchfunequiv \xswitchfunn{\xadd}{\xnodea_2}
		\quad\text{implies}\quad
		\xnodea_1 = \xnodea_2~.
	\]
	\end{definition}
	Reducedness implies that $\xsuccf(\xnodea) \neq \xsucct(\xnodea)$ for all $\xnodea\in\xnnodes$. From now on \emph{we assume all TEDDs to be reduced}. With this notion of reducedness, we get the following canonicity property for TEDDs:
	\begin{theorem}
		Fix a linear order $\varord$ on $\at$. Then $\sorta$-TEDDs are canonical representations of $\sorta$-case expressions modulo propositional equivalence $\switchfunequiv$, i.e., for every $\switchfunequiv$-equivalence class $\mathcal{C}$ of $\sorta$-case expressions, there is unique (up to isomorphism) $\sorta$-TEDD $\xadd$ such that $\xswitchfuns\in\mathcal{C}$.
	\end{theorem}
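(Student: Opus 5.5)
The plan is to reduce the statement to the classical canonicity theorem for (multi-terminal) ordered decision diagrams, treating Boolean atoms as propositional variables and terms of type $\sorta$ as uninterpreted leaf values. The key observation is that propositional equivalence $\switchfunequiv$ compares case expressions only through the maps $\propinta \mapsto \propsem{\switchfun}{\propinta}$, which send propositional interpretations to terms of $\term_\sorta$ (compared syntactically). Each case expression mentions only finitely many atoms. So each $\switchfunequiv$-class $\mathcal{C}$ is determined by a function $f_\mathcal{C}$ from assignments of a finite set $A\subseteq\at$ of relevant atoms to $\term_\sorta$. Here an atom counts as relevant if $f$ genuinely depends on it, and $\true$ and $\false$ are excluded since their values are fixed.

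First I would fix the semantics of TEDD nodes propositionally. By induction along the acyclic topology, using \Cref{def:switchfun_of_xadd} and the defining property of $\switchfunop{\switchITEsymbol}$ (applied at the propositional level), I would show that $\propsem{\xswitchfunn{\xadd}{\xnodea}}{\propinta}$ equals the label of the terminal reached from $\xnodea$ by following $\xsucct$ when $\propinta(\xlabn(\cdot))=1$ and $\xsuccf$ otherwise. The lifted operator also produces cubes that are again propositionally partitioning, so these case expressions are well defined.

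For \textbf{existence}, given $\mathcal{C}$ with function $f$ and relevant atoms $a_1\varord\cdots\varord a_k$, I would build the TEDD by Shannon expansion on $a_1$, then $a_2$, and so on. Nodes are the distinct (modulo $\switchfunequiv$) nonconstant cofactors, with the least relevant atom as label, and terminals are the distinct terms in the image of $f$. Sharing equal cofactors and skipping atoms on which a cofactor does not depend yields a reduced TEDD. Its labels are $\varord$-increasing along edges, and the previous step shows that its case expression lies in $\mathcal{C}$.

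For \textbf{uniqueness}, the main obstacle, I would follow Bryant's argument. Given two reduced TEDDs $\xadd,\xaddb$ with $\xswitchfuns\switchfunequiv\switchfun_\xaddb$, I would define the relation $R$ pairing $\xnodea\in\xnodes_\xadd$ and $w\in\xnodes_\xaddb$ whenever their case expressions are propositionally equivalent. By reducedness, $R$ is a partial injective map. I then show by induction on the depth of $\xnodea$ that every node of $\xadd$ is related to some node of $\xaddb$ and that $R$ preserves labels and successors.

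The crux is to show that a non-terminal node $\xnodea$ depends genuinely on its label $\xlabn(\xnodea)$ and on no $\varord$-smaller atom. The second part holds because only larger atoms occur below it. The first part holds because $\xsucct(\xnodea)\neq\xsuccf(\xnodea)$ together with reducedness gives non-equivalent cofactors, and the cofactors do not mention the label atom. Hence the label equals the $\varord$-least atom on which the represented function depends, a quantity intrinsic to the equivalence class, so $R$-related nodes carry equal labels. Their cofactors on that atom are then equivalent, so the successors are $R$-related. Terminal nodes represent constant functions, so they relate exactly to terminals carrying the same term.

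Root-reachability, meaning every non-root node has a predecessor, makes $R$ total and surjective, giving the isomorphism. One subtlety to check is that a TEDD node could carry the label $\true$ or $\false$. Reducedness excludes this, since one branch would then be propositionally equivalent to the node itself.
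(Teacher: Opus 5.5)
Your proposal is correct. There is no proof in the paper to compare it with: the theorem is stated without proof, none of the appendix results covers it, and the only justification offered is the pointer to the classical canonicity of Bryant's BDDs and of ADDs.

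Your argument is that classical argument carried over by reading Boolean atoms as propositional variables and leaf terms as uninterpreted values. You also handle the two places where this setting departs from Bryant's:
\begin{itemize}
\item The paper's reducedness is a global condition: distinct nodes never have $\switchfunequiv$-equivalent case expressions. This is what makes your relation $R$ a partial bijection immediately.
\item The atoms $\true$ and $\false$ have fixed propositional values. Your ``genuine dependence on the label'' step therefore needs your observation that reducedness together with acyclicity rules them out as node labels.
\end{itemize}
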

	Importantly, this canonicity result is propositional and does not identify TEDDs whose labels are merely equivalent under first-order semantics, as illustrated in \Cref{fig:fo-canonicity-counterexample}: Both $\xadd$ and $\xaddb$ are reduced, yet $\xadd$ and $\xaddb$ are not isomorphic and $\sem{\xadd}{\inta}=\sem{\xaddb}{\inta}$ for all interpretations $\inta$.

TEDDs subsume classic BDDs \cite{DBLP:journals/tc/Bryant86} and ADDs \cite{adds}, as propositional variables are  formulae:
\begin{theorem} 
	%We have:
	%\begin{enumerate}
		Classic BDDs are $\sortbool$-TEDDs where all non-terminal nodes are labeled by propositional variables $\hastype{\vara}{\sortbool}$ and all terminals nodes are labeled with either $\true$ or $\false$, while % are classic BDDs.
		classic ADDs are $\sortrat$-TEDDs with the same restriction as above. % are classic ADDs.
	%\end{enumerate}
	
\end{theorem}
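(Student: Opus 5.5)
The plan is to make the informal claim precise and then check it by unfolding definitions. The claim is that the classical (reduced, ordered) BDDs over a variable order are exactly the $\sortbool$-TEDDs satisfying the stated restriction, and the classical ADDs are exactly the corresponding $\sortrat$-TEDDs. I will fix an ordered set of propositional variables $p_1 < p_2 < \ldots$ and identify each $p_i$ with a variable $\hastype{p_i}{\sortbool}$. Since a variable of type $\sortbool$ is a formula whose outermost constructor is not $\wedge,\vee,\neg$, it lies in $\at$. I then choose the linear order $\varord$ on $\at$ so that it extends the BDD order on these variables; any extension to the remaining atoms will do.

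\textbf{Structural correspondence.} A classical BDD is given by nodes, a root, low/high successors, variable labels on inner nodes, and $0/1$ labels on terminals, subject to the ordering condition. I map this to the tuple $(\xnodes,\xnnodes,\xtnodes,\xroot,\xsuccf,\xsucct,\xlabn,\xlabt)$ by setting $\xsuccf=$ low, $\xsucct=$ high, $\xlabn(v)=p_i$, and $\xlabt$ sending $0,1$ to $\false,\true$. The TEDD side conditions then follow directly:
\begin{itemize}
\item the ordering condition on $\xlabn$ is the BDD ordering condition, by the choice of $\varord$;
\item every non-root node has a predecessor, which holds because BDDs are rooted DAGs;
\item terminals of a $\sortbool$-TEDD are labeled $\true$ or $\false$.
\end{itemize}
The inverse map is equally direct, so the correspondence is a bijection up to isomorphism. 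For ADDs the only change is that terminals carry rational constants in $\Rats$, and these are constant symbols of type $\sortrat$.

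\textbf{Semantic correspondence.} Next I show that the meanings agree. By induction on the acyclic topology, using \Cref{def:switchfun_of_xadd}, I prove that for every propositional interpretation $\propinta$ the term $\propsem{\xswitchfunn{\xadd}{v}}{\propinta}$ is the terminal reached by following the high edge at $p_i$ exactly when $\propinta(p_i)=1$. This holds because the lifted $\switchITEsymbol$ selects its second argument exactly under the cube containing $p_i$, and its third argument under the cube containing $\neg p_i$. It follows that the Boolean (respectively rational) function computed by the BDD/ADD coincides with the function $\propinta\mapsto\propsem{\xswitchfuns}{\propinta}$. The same holds for first-order interpretations $\inta$, since under $\inta$ the atom $p_i$ evaluates to the Boolean value $\inta$ assigns to it.

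\textbf{Main obstacle: reducedness.} The one step needing care is matching the two notions of reducedness. Bryant's conditions are (i) no node has identical low and high successors and (ii) no two distinct nodes are isomorphic. The TEDD definition instead forbids two distinct nodes with propositionally equivalent case expressions. I would argue as follows:
\begin{itemize}
\item TEDD reducedness implies Bryant's conditions (i) and (ii), as the paper already notes for (i), since isomorphic sub-DAGs yield syntactically equal case expressions.
\item Conversely, Bryant's canonicity theorem says that in a BDD satisfying (i) and (ii), distinct nodes compute distinct Boolean functions. By the semantic correspondence, these functions are exactly the $\switchfunequiv$-classes of the node case expressions, so Bryant-reduced implies TEDD-reduced. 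The ADD case uses the analogous canonicity result for ADDs.
\end{itemize}
This argument relies on atoms other than the $p_i$ not occurring, which is guaranteed by the restriction in the statement.
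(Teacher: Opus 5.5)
Your proposal is correct and follows the paper's reasoning. The paper gives no formal proof, only the observation that Boolean variables are atoms in $\at$, so that BDDs and ADDs literally instantiate the TEDD tuple once $\varord$ extends the variable order. Your additional check that reducedness via $\switchfunequiv$ coincides with Bryant-reducedness is a sound elaboration of a point the paper leaves implicit; it rests on Bryant's canonicity theorem together with the fact that distinct terminal constants denote distinct values.
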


%\kbinline{old def start}
%Next, we define what it means for an XADD to be \emph{pruned} modulo a given theory.
%A \emph{path} in $\xadd$ is a non-empty finite sequence $\xroot,\xnodea_1,\ldots,\xnodea_n$ of non-terminal nodes , where $\xroot$ is the root and $\xnodea_{i+1}$ is some successor of $\xnodea_i$ for all $i\in\{1,\ldots,n-1\}$. Reducedness of $\xadd$ implies that $\xsuccf(\xnodea) \neq \xsucct(\xnodea)$ for all $\xnodea\in\xnnodes$. Hence, every path in $\xadd$ can be assgined a unique cube $\xpathform{\xroot,\xnodea_1,\ldots,\xnodea_n} = \xlab(\xroot) \wedge \ldots \wedge \xlab(\xnodea_n)$ obtained from conjoining the labels of the nodes along the path. This yields the following:
%%
%%
%\begin{definition}
%	Let $\theory$ be a theory. We say that a $\sorta$-XADD $\xadd$ is \emph{pruned modulo $\theory$}, if $\xpathform{\xroot,\xnodea_1,\ldots,\xnodea_n}$ is satisfiable modulo $\theory$ for all\footnote{Notice that if $\xroot$ is terminal, then there is no path in $\xadd$, which yields this condition to be vacuously satisfied.} paths $\xroot,\xnodea_1,\ldots,\xnodea_n$ in $\xadd$.
%	\hfill $\triangle$
%\end{definition}
%%
%Hence, a pruned XADD does not contains \enquote{superfluous} paths when working within the theory $\theory$. In \Cref{sec:xadds:operations}, we provide an algorithm for pruning XADDs.
%\kbinline{old def end}
\subsection{Theory-Aware Treatment of TEDDs}
Reducedness of TEDDs refers to propositional interpretations, therefore disregarding theory-aware facts such as the unsatisfiability of certain conjunctions of decisions. We now take first-order reasoning into account. Towards this end, we introduce:

\begin{definition}
	\label{def:pruned_tedd}
	Let $\xadd$ be a $\tau$-TEDD and let $\theory$ be a theory. A \emph{rooted path} in $\xadd$ is a sequence
	\[
		\patha \eeq u_0 \xrightarrow{b_1} u_1 \xrightarrow{b_2} \cdots \xrightarrow{b_k} u_k
	\]
	where $u_0=\xroot$, $b_i\in\{0,1\}$, and $u_i=\xsuccsub{b_i}(u_{i-1})$. The \emph{path condition} of $\patha$ is
	\[
		\pathcond{\xadd}{\patha}
		\eeq
		\bigwedge_{i=1}^{k}
		\begin{cases}
			\xlabn(u_{i-1}) & \text{if } b_i=1,\\
			\neg \xlabn(u_{i-1}) & \text{if } b_i=0.
		\end{cases}
	\]
	with the empty conjunction equal to $\true$. For a formula $\contexta\in\fo$, we say that $\xadd$ is \emph{pruned modulo $\theory$ under context $\contexta$} if $\contexta\wedge\pathcond{\xadd}{\patha}$ is satisfiable modulo $\theory$ for every rooted path $\patha$ ending in a terminal node. We say that $\xadd$ is \emph{pruned modulo $\theory$} if it is pruned modulo $\theory$ under context $\true$.
\end{definition}
Consider the $\sorteureal$-TEDD $\xadd$ depicted in \Cref{fig:ex_xadd1}. $\xadd$ is pruned modulo $\theoryla$ because every terminal rooted path has a path condition satisfiable modulo $\theoryla$. The TEDD $\xaddb$ depicted in \Cref{fig:ex_xadd2}, on the other hand, is \emph{not} pruned modulo $\theoryla$: The path condition $\vara\neq 0 \wedge \vara=1\wedge\vara=2$ of the longest path ending in the terminal labeled with $\varb$ is unsatisfiable modulo $\theoryla$. 
%Let $\xadd$ be a $\tau$-XADD with 
%%
%\[
%	\xswitchfuns \eeq 
%	\begin{cases}
%		\terma_1 &\switchfuncase \forma_1 \\
%		%
%		\vdots  &  \\
%		%
%		\terma_n &\switchfuncase\forma_n~,
%	\end{cases}
%\]
%

Finally, given a theory $\theory$, we say that two $\sorta$-TEDDs $\xadd_1,\xadd_2$ are \emph{equivalent modulo $\theory$}, if $\sem{\xadd_1}{\inta}= \sem{\xadd_2}{\inta}$ for all $\theory$-interpretations $\inta$. As an example, the two TEDDs depicted in \Cref{fig:ex_xadd1,fig:ex_xadd2} are equivalent modulo $\theoryla$. In the next subsection, we present an algorithm that transforms every TEDD into an equivalent pruned one modulo $\theory$, assuming decidability of satisfiability modulo $\theory$.
%
%\[
%	\forall~ \theory\text{-interpretations}~\inta \colon \quad \xswitchfunn{\xadd_1}{}(\inta) \eeq \xswitchfunn{\xadd_2}{}(\inta)~,
%\]
%

\section{Operations on Typed Extended Decision Diagrams}
\label{sec:xadds:operations}
We now provide central algorithms on TEDDs, ultimately enabling TEDD-based deductive verification. All algorithms
%for applying term operators, substituting variables by terms, and for pruning. 
produce (reduced) TEDDs, which respect the fixed order $\varord$ on formulae in $\at$. Moreover, we equip all algorithms with their formal specifications.

These algorithms use an overloaded procedure \ObtainAlg. First, given a term $\hastype{\terma}{\sorta}$, $\obtain(\terma)$ returns the $1$-node $\sorta$-TEDD labelled with $\terma$. If $\sorta=\sortbool$, we require that $\terma \in \at$, and $\obtain(\terma)$ returns the natural $3$-node $\sortbool$-TEDD with root label $\terma$ and terminal label $\true$ and $\false$. Second, given $\forma\in\at$ and $\sorta$-TEDDs $\xadd,\xaddb$ with $\forma \varord \xlabn(\xadd)$ (resp.\ $\forma \varord \xlabn(\xaddb$) if $\xadd$ (resp.\ $\xaddb$) is non-terminal, $\obtain(\forma,\xadd,\xaddb)$ returns a $\sorta$-TEDD $\xaddc$ with root label $\forma$ and $\xsucct(\xaddc) = \xadd$ and $\xsuccf(\xaddc) = \xaddb$, i.e.,
\begin{align}
	\label{eq:obtainspec}
\forall \inta\in\interprets\colon\quad
				\xswitchfun{\xaddc}(\inta) \eeq
				\begin{cases}
					\xswitchfun{\xadd}(\inta) & \text{if $\inta \models \forma$,} \\
					\xswitchfun{\xaddb}(\inta) & \text{if $\inta~\not\models~ \forma$ .} 
				\end{cases}
\end{align}
We ensure reducedness of $\xaddc$ by returning $\xadd$ in case $\xadd=\xaddb$, so that \eqref{eq:obtainspec} is trivially satisfied, and assume that repeated invocations of $\ObtainAlg(\terma)$ for the same $\terma$ always return the same node.\sj{So obtain is a function? or it is deterministic procedure? }\kb{above we say "procedure". What is the difference?}
% 
%\kb{commented out obtain for now}
% \subsubsection{Obtain}
% \kbinline{consider reducedness problem later}
% %
% \input{appendix/obtain}
%
\subsection{Applying Term Operators}
\sj{Should we capitilize captions of algorithms like this?}
\begin{algorithm}[t]
	\caption{Applying Term Operators to TEDDs} 	\label{alg:apply}
	\Fn{\apply{$\termop, \xadd_1, \ldots, \xadd_n$}}{
		\KwIn{%
			A term operator $\termop\colon\term_{\sorta_1}\times\ldots\times\term_{\sorta_n} \to \term_\sorta$.\newline
			If $\sorta=\sortbool$, then $\termop$ returns only $\true$ or $\false$ on the terminal labels reached by this call.\newline
			$\sorta_1$-TEDD $\xadd_1$, $\ldots$, $\sorta_n$-TEDD $\xadd_n$.
		}
		\KwResult{%
			$\sorta$-TEDD $\xadd$ satisfying
			$\forall \inta\in\interprets\colon
			\sem{\xadd}{\inta}\eeq
			\sem{\switchfunop{\termop}(\xswitchfun{\xadd_1},\ldots,\xswitchfun{\xadd_n})}{\inta}$.
		}
		\BlankLine
		\If{\nllabel{alg:apply:cache1} there is a cache entry $\xadd$ for $(\termop, \xadd_1, \ldots, \xadd_n)$}{%
			\KwRet $\xadd$\;
		}\nllabel{alg:apply:cache2}
		\eIf{\label{alg:apply:base_case1}$\xadd_1, \ldots, \xadd_n$ are all terminal}{%
			$\terma \leftarrow \termop(\xlabt(\xadd_1), \ldots, \xlabt(\xadd_n))$\;\label{aloapply:base_case1}
			$\xadd \leftarrow$ \obtain{$\terma$}\;\label{aloapply:base_case2}
			%Obtain MTBDD $\treea$ with terminal node $\node$ that is labeled with $\labelfunc(\node) = \selma$
		}{\nllabel{algo_apply:recur_case1}
			$\forma \leftarrow \min_{\varord} \{\xlabn(\xadd_i) ~|~ \xadd_i~\text{is non-terminal and}~i \in \{1,\ldots,n\} \}$\;\nllabel{alg:apply:topmost}
			\ForEach{\nllabel{alg:apply:traverse1}$\xadd_i$ \In $\xadd_1, \ldots, \xadd_n$}{%
				\eIf{$\xadd_i$ is non-terminal and $\xlabn(\xadd_i)=\forma$}{%
					$\xaddb_i \leftarrow \succtxadd{\xadd_{i}}$;\quad
					$\xaddc_i \leftarrow \succfxadd{\xadd_{i}}$\;\nllabel{alg:apply:traverse3}
				}{%
					$\xaddb_i \leftarrow \xadd_{i}\,$;\quad
					$\xaddc_i \leftarrow \xadd_{i}$\;\nllabel{alg:apply:traverse4}
				}
			}\nllabel{algapply:traverse2}
			$\xaddb \leftarrow \apply{$\termop, \xaddb_1, \ldots, \xaddb_n$}$;\quad
			$\xaddc \leftarrow \apply{$\termop, \xaddc_1, \ldots, \xaddc_n$}$\;\nllabel{alg:apply:recur}
			$\xadd \leftarrow$ \obtain{$\forma$, $\xaddb$, $\xaddc$}\;\nllabel{alg:apply:reduce}
		}\nllabel{algo:apply:recur_case2}
		Store $\xadd$ in the cache for $(\termop, \xadd_1, \ldots, \xadd_n)$\;\nllabel{alg:apply:cache3}
		\KwRet $\xadd$\;\nllabel{alg:apply:return}
}
\end{algorithm}
\sj{I didnt understand the sentences regarding the input in the algorithm 1}
\sj{Do we formalize what an operation cache is? should we?}
\sj{I would maybe write a cache entry exists rather than there is a cache entry, but as this appears multiple times, lets first discuss}
Let  $\termop\colon\term_{\sorta_1}\times\ldots\times\term_{\sorta_n} \to \term_\sorta$ be an operator on terms, such as addition (for $\sorta=\sorteureal$) or disjunction (for $\sorta=\sortbool$). \Cref{alg:apply} gives \ApplyAlg, a generalization of Bryant's \ApplyAlg algorithm \cite{DBLP:journals/tc/Bryant86} for BDDs to $n$-ary term operators $\termop$. Given $\sorta_1$-TEDD $\xadd_1$, \ldots, $\sorta_n$-TEDD $\xadd_n$, \apply{$\termop,\xadd_1,\ldots,\xadd_n$} operates on these TEDDs and yields a $\sorta$-TEDD $\xadd$ representing the $\sorta$-case expression obtained from applying $\termop$ to the case expressions represented by $\xadd_1,\ldots,\xadd_n$, i.e.,
\[
	\xswitchfun{\xadd} \eeq
	\switchfunop{\termop}(\xswitchfun{\xadd_1},\ldots,\xswitchfun{\xadd_n})~.
\]%\kb{make sure right equality used...}
\ApplyAlg traverses the input TEDDs in a $\varord$-preserving\sj{traversing a tree in an order-preserving manner sounds very strange. Is the point that the order is preserved, or that we follow the order when traversing?} manner (ll.\ \ref{alg:apply:traverse1}-\ref{alg:apply:reduce}) and invokes itself recursively (l.\ \ref{alg:apply:recur}) until hitting terminal nodes, in which case $\termop$ is applied to the labels of the respective leafs (l.\ \ref{aloapply:base_case1}). As with BDDs, we \sj{Should we use the operation cache terminology?} cache intermediate results from recursive calls (l.\ \ref{alg:apply:cache3}) to avoid re-computations that may arise due to the DAG-structure of TEDDs. This ensures the following:
\begin{theorem}
\label{thm:apply_soundness_main}
	\Cref{alg:apply} is sound and encounters at most
	$
		\sizeof{\xadd_1}\cdot\ldots\cdot\sizeof{\xadd_n}
	$
	cache misses.
\end{theorem}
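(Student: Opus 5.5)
Soundness is proved by well-founded induction on the recursion of \ApplyAlg. As measure we take the sum of the depths of $\xadd_1,\ldots,\xadd_n$ (longest rooted path to a terminal). Each recursive call on line~\ref{alg:apply:recur} replaces every $\xadd_i$ whose root label equals $\forma$ by one of its successors, and leaves the others unchanged. Since $\forma$ is the $\varord$-minimum of the root labels, at least one TEDD strictly descends, so the measure strictly decreases.

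We prove a statement slightly stronger than the specification: the returned TEDD is reduced, respects $\varord$, and satisfies $\sem{\xadd}{\inta}=\sem{\switchfunop{\termop}(\xswitchfun{\xadd_1},\ldots,\xswitchfun{\xadd_n})}{\inta}$ for all $\inta\in\interprets$.
\begin{itemize}
\item \emph{Cache hits.} These are sound by an auxiliary invariant: every cache entry for $(\termop,\xadd_1,\ldots,\xadd_n)$ was produced by a completed call on exactly those arguments. By the induction hypothesis, that call was correct.
\item \emph{Base case (all terminal).} Each $\xswitchfun{\xadd_i}$ is the one-case expression $\xlabt(\xadd_i)\switchfuncase\true$. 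The lifted operator then yields $\termop(\ldots)\switchfuncase\true$, which is exactly what $\obtain(\terma)$ represents. The precondition on $\termop$ guarantees that a $\sortbool$ result is a Boolean constant.
\item \emph{Recursive case.} Fix $\inta$ and assume $\inta\models\forma$ (the other case is symmetric). For each $i$ whose root is labelled $\forma$, Definition~\ref{def:switchfun_of_xadd} and the semantics of $\switchfunop{\switchITEsymbol}$ give $\sem{\xswitchfun{\xadd_i}}{\inta}=\sem{\xswitchfun{\xaddb_i}}{\inta}$, and more precisely the selected terms coincide: $\xswitchfun{\xadd_i}(\inta)=\xswitchfun{\xaddb_i}(\inta)$. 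For the other $i$ this holds trivially, since $\xaddb_i=\xadd_i$. Using the characterisation $\sem{\switchfunop{\termop}(\switchfun_1,\ldots,\switchfun_n)}{\inta}=\sem{\termop(\switchfun_1(\inta),\ldots,\switchfun_n(\inta))}{\inta}$, the value required of $\xadd$ at $\inta$ equals the value required of $\xaddb$ at $\inta$. The induction hypothesis gives this value for $\xaddb$, and \eqref{eq:obtainspec} transfers it to $\obtain(\forma,\xaddb,\xaddc)$.
\end{itemize}
The precondition of \ObtainAlg, namely $\forma\varord$ the root labels of $\xaddb$ and $\xaddc$, holds for two reasons. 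First, every root label of a $\xaddb_i$ or $\xaddc_i$ is either a successor label of a node labelled $\forma$, hence $\varord$-larger by order compatibility, or a root label of an untouched $\xadd_i$, which is strictly larger than $\forma$ by minimality. Second, by the induction hypothesis, the root label of $\xaddb$ (resp.\ $\xaddc$) is among those labels or deeper. Reducedness then follows because \ObtainAlg uses hash-consing and returns $\xaddb$ when $\xaddb=\xaddc$.

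For the complexity bound, every argument tuple ever passed to \apply{} is a tuple $(\xnodea_1,\ldots,\xnodea_n)\in\xnodes_1\times\cdots\times\xnodes_n$ of sub-TEDDs rooted at nodes of the inputs. This follows by induction on the call tree, because recursive calls only replace components by successor sub-TEDDs. A cache miss on a tuple stores its result before the call returns, and $\termop$ is fixed throughout. Hence each tuple misses at most once, which bounds the number of cache misses by $\sizeof{\xadd_1}\cdot\ldots\cdot\sizeof{\xadd_n}$.

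The main obstacle is the recursive case of soundness. It requires the equation relating selected terms, not just values, because $\termop$ acts syntactically on terms. The argument must therefore work at the level of $\switchfun(\inta)$ rather than $\sem{\cdot}{\inta}$. The $\varord$-bookkeeping needed to meet \ObtainAlg's precondition also has to be carried along in the same induction.
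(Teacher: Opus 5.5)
Your proposal is correct and follows essentially the same proof as the paper: induction on tuple rank with a cache invariant, a split of interpretations on the $\varord$-least root label $\forma$, and a bound on misses by counting node tuples. Your version is more explicit about working with selected terms rather than values and about \ObtainAlg's order precondition, but for that step your strengthened hypothesis must also state that every non-terminal label of the result occurs in the input TEDDs, since that is the fact you actually invoke.
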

\begin{proof}
	See \Cref{app:proof_apply}.
\end{proof}
\sj{Do we need/want a proof environment for referring to the appendix?}
\begin{example}
	Let $\xaddb$ be a $\sortbool$-TEDD and let $\xaddb_1,\xaddb_2$ be $\sorteureal$-TEDDs. Then 
	\begin{align*}
	\text{$\xadd = \apply(\switchITEsymbol, \xaddb, \xaddb_1, \xaddb_2)$}
	\quad\text{satisfies}\quad
		\forall \inta\in\interprets\colon 
		\sem{\xadd}{\inta} 
		\eeq
		\begin{cases}
			\sem{\xaddb_1}{\inta} & \text{if}~\sem{\xaddb}{\inta} = \true \\
			\sem{\xaddb_2}{\inta} & \text{if}~\sem{\xaddb}{\inta} = \false~. 
		\end{cases}
		%\tag*{$\triangle$}
	\end{align*}
   Moreover, 
   \begin{align*}
	\text{$\xadd = \apply(+, \xaddb_1, \xaddb_2)$}
	\quad\text{satisfies}\quad
		\forall \inta\in\interprets\colon 
		\sem{\xadd}{\inta} 
		\eeq
		\sem{\xaddb_1}{\inta} + \sem{\xaddb_1}{\inta}~.
		\tag*{$\triangle$}
	\end{align*}
\end{example}
%
%As with BDDs, we cache intermediate results (l.\ \ref{alg:apply:cache1} and l.\ \ref{alg:apply:cache3}) from recursive calls to ensure that the number of recursive calls is linear in the number of nodes of the input XADDs. We then traverse the input XADDs recursively: If all input XADDs are terminal, we apply $\termop$ to the terms given by the respective labels and construct\footnote{This is realized by the straightforward algorithm \obtain provided in \Cref{app:obtain}.} the $\sorta$-XADD representing the so-obtained term (l.\ \ref{aloapply:base_case2}). Otherwise, we keep traversing the input XADDs in a $\varord$-preserving manner (ll.\ \ref{alg:apply:topmost}-\ref{algapply:traverse2}), recursively invoke \apply on the so-obtained sub-XADDs (l.\ \ref{alg:apply:recur}), and construct the resulting $\sorta$-XADD (l. \ref{alg:apply:reduce}). This yields:
%
%%
%\begin{theorem}
%	Algorithm \ref{alg:apply} is sound and terminates. Moreover, \apply{$\termop, \xadd_1, \ldots, \xadd_n$} requires at most $\sizeof{\xadd_1} \cdot \ldots\cdot \sizeof{\xadd_n}$ recursive calls, where $\sizeof{\xadd_i}$ denotes the number of nodes in $\xadd_i$.
%\end{theorem}
%
%\kb{maybe give concrete example also illustrating why caching is necessary}

\subsection{Substitution}
%\kb{Since TEDDs are DAGs, different paths may reach the same sub-TEDD. We cache calls to Substitute so that each shared sub-TEDD is substituted at most once for a fixed variable and replacement term.}
\label{sec:xadds:substitution}
\begin{algorithm}[t]
	\caption{Substitution of Variables by Terms in TEDDs}
	\label{alg:substitute}
	\Fn{\substitute{$\xadd, \vara, \terma$}}{%
		\KwIn{%
			A $\sorta$-TEDD $\xadd$, a variable $\hastype{\vara}{\sorta'}$, and a term $\hastype{\terma}{\sorta'}$.
		}
		\KwResult{%
			A $\sorta$-TEDD $\xaddb$ satisfying
			$\forall \inta\in\interprets\colon
			\sem{\xaddb}{\inta}\eeq
			\sem{\xswitchfunn{\xadd}{}\switchfunsubst{\vara}{\terma}}{\inta}$.
		}
	}
	\BlankLine
	\If{\nllabel{algosubst:cache_lookup}there is a cache entry $\xaddb$ for $(\xadd, \vara, \terma)$}{%
		\KwRet $\xaddb$\;
	}
	\eIf{\nllabel{algosubst:terminal_test}$\xadd$ is terminal}
	{
		$\xaddb \leftarrow \obtain{$\xlabt(\xadd)\switchfunsubst{\vara}{\terma}$}$\nllabel{algosubst:terminal_obtain}
	}{
		$\forma \leftarrow \xlabn(\xadd)\switchfunsubst{\vara}{\terma}$\;\nllabel{algosubst:subst_guard}
		$\xaddb^+ \leftarrow \substitute(\succtxadd{\xadd}, \vara,\terma)$;\quad $\xaddb^- \leftarrow \substitute(\succfxadd{\xadd}, \vara,\terma)$\;\label{algosubst:recurse}
		$\xaddb \leftarrow \apply{$\switchITEsymbol, \obtain(\forma), \xaddb^+, \xaddb^-$}$\label{algosubst:finalobtain}
	}
	Store $\xaddb$ in the cache for $(\xadd, \vara, \terma)$\;\label{algosubst:cache}
	\KwRet $\xaddb$\;
\end{algorithm}

 While substituting variables by terms is --- in principle --- a rather straightforward operation, we must be careful: Replacing variables by terms modifies the labels \sj{of the inner nodes?}\kb{potentially all of them, no?}\sj{but the point is that the inner nodes may need reordering?}\kb{yes} so that a naive implementation might violate the order $\varord$. \Cref{alg:substitute} gives \SubstituteAlg, which remedies this:
\substitute{$\xadd, \vara, \terma$} yields a TEDD $\xaddb$ with
$
	\xswitchfunn{\xaddb}{} = \xswitchfun{\xadd}\switchfunsubst{\vara}{\terma}
$ by recursively traversing $\xadd$. Assuming $\xaddb^+$ (resp.\ $\xaddb^-$) are obtained from replacing $\vara$ by $\terma$ in $\succtxadd{\xadd}$ (resp.\ $\succfxadd{\xadd}$) (l.\ \ref{algosubst:recurse}), we obtain the sought-after TEDD by invoking 
$
	\apply(\switchITEsymbol, \obtain(\forma), \xaddb^+, \xaddb^-)%,
	%~\text{where}~
	%\underbrace{\xadd_\forma = \obtain(\forma,\obtain(\true),\obtain(\false)}_{\text{$2$-node $\sortbool$-TEDD representing $
	%\forma$}}
$ 
in (l.\ \ref{algosubst:finalobtain}).
 Since \ApplyAlg preserves $\varord$, this procedure is sound. Again, we cache intermediate results to prevent unnecessary re-computations (l.\ \ref{algosubst:cache}). %This gives us:
\begin{theorem}
\label{thm:substitute_soundness_main}
	\Cref{alg:substitute} is sound. Moreover, it performs at most
	$
		\sizeof{\xadd}
	$
	recursive calls.
\end{theorem}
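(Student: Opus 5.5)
We prove soundness by well-founded induction on the acyclic structure of $\xadd$, i.e.\ on the length of the longest path from the root of the input TEDD to a terminal node. The induction hypothesis is the stated postcondition: for every sub-TEDD $\xadd'$ of $\xadd$, the TEDD returned by \substitute{$\xadd',\vara,\terma$} satisfies $\sem{\cdot}{\inta} = \sem{\xswitchfun{\xadd'}\switchfunsubst{\vara}{\terma}}{\inta}$ for all $\inta\in\interprets$. Cache hits are handled by a separate invariant. Every cache entry for $(\xadd',\vara,\terma)$ was stored at l.~\ref{algosubst:cache} after a full computation on the same arguments. Since the output depends only on $(\xadd',\vara,\terma)$, it suffices to prove correctness of the non-cached branch.

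In the \emph{terminal case}, $\xswitchfun{\xadd}$ is the single case $\xlabt(\xadd)\switchfuncase\true$, and substitution yields $\xlabt(\xadd)\switchfunsubst{\vara}{\terma}\switchfuncase\true$. This is exactly what $\obtain$ returns at l.~\ref{algosubst:terminal_obtain}. One point needs care: when $\sorta=\sortbool$ the label is $\true$ or $\false$, which contains no $\vara$, so the result is again a legal $\sortbool$ terminal.

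In the \emph{non-terminal case}, \Cref{def:switchfun_of_xadd} gives $\xswitchfun{\xadd}=\switchfunop{\switchITEsymbol}(\switchfun_{\xlabn(\xadd)},\xswitchfun{\succtxadd{\xadd}},\xswitchfun{\succfxadd{\xadd}})$. Substitution commutes with this construction syntactically, because it acts on every term and formula independently. Using the semantic substitution identity stated after the definition of $\switchfun\switchfunsubst{\vara}{\terma}$, together with the semantics of $\switchITEsymbol$, we get for every $\inta$ that $\sem{\xswitchfun{\xadd}\switchfunsubst{\vara}{\terma}}{\inta}$ equals $\sem{\xswitchfun{\succtxadd{\xadd}}\switchfunsubst{\vara}{\terma}}{\inta}$ if $\inta\models\forma$, and $\sem{\xswitchfun{\succfxadd{\xadd}}\switchfunsubst{\vara}{\terma}}{\inta}$ otherwise, where $\forma=\xlabn(\xadd)\switchfunsubst{\vara}{\terma}$. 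By the induction hypothesis these two values are $\sem{\xaddb^+}{\inta}$ and $\sem{\xaddb^-}{\inta}$.

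The main subtlety is that $\forma$ need not be an atom; for instance, substitution into a Boolean variable can produce a compound formula. In that case $\obtain(\forma)$ must be read as building the $\sortbool$-TEDD of $\forma$, e.g.\ via \ApplyAlg over its atoms. We will state and use only the property $\sem{\obtain(\forma)}{\inta}=\true$ iff $\inta\models\forma$. With that property, soundness of \ApplyAlg (\Cref{thm:apply_soundness_main}) applied to $\switchITEsymbol$ gives exactly the required case split. Order-compatibility and reducedness of the result come for free from \ApplyAlg.

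For the complexity bound, we count calls that miss the cache. Each miss is on a triple $(\xadd',\vara,\terma)$ with fixed $\vara,\terma$, where $\xadd'$ is the sub-TEDD rooted at a node of $\xadd$, so there are at most $\sizeof{\xadd}$ distinct arguments. Each miss stores its result before returning, so no argument misses twice. A recursion cycle cannot occur, because recursion strictly descends in the DAG. Hence there are at most $\sizeof{\xadd}$ recursive calls performing work; the remaining calls are constant-time cache hits. We will phrase the bound as cache misses, in the same spirit as \Cref{thm:apply_soundness_main}. The internal calls to \ApplyAlg are not counted as recursive calls of \SubstituteAlg.
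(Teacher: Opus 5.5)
Your proposal is correct and follows the paper's proof essentially step for step. Both use induction on the rank with a cache invariant, \algname{Obtain} at terminals, and \algname{Apply} with $\switchITEsymbol$ on the guard TEDD of the substituted label at inner nodes, and both phrase the bound as at most one cache miss per node of $\xadd$, excluding the nested \algname{Apply} calls. Your remark that the substituted guard need not be an atom, so its $\sortbool$-TEDD must be built as for $\xadd_\forma$ in \Cref{tab:xwp}, makes explicit a point the paper's proof passes over when it speaks of ``the guard TEDD for $\forma'$''.
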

\begin{proof}
	See \Cref{app:proof_substitute}.
\end{proof}
% \kb{elaborate on the obtain thingy and on caching, maybe with example (better for apply)}

%\kb{complexity?}

\subsection{Pruning}
\label{sec:xadds:pruning}
\begin{algorithm}[t]
	\caption{Pruning of TEDDs Modulo a Theory}
	\label{alg:prune}
	%\SetKwFunction{unsat}{UNSAT}
	%\setcounter{AlgoLine}{0}
	%\newcommand{\rep}{\text{result}^{+}}
	%\newcommand{\rem}{\text{result}^{-}}
	\Fn{\prune{$\theory$, $\xadd$, $\contexta$}}{%
		\KwIn{%
			A theory $\theory$, a $\sorta$-TEDD $\xadd$, 
			\newline
			and $\contexta\in\fo$ satisfiable modulo $\theory$.%\newline
			%A history $\hist$ (that is SAT in $\theo$) of the current path to $\xda$.
			%Initially $\fot$.
		}
		\KwResult{%
			A $\sorta$-TEDD $\xaddb$ equivalent to $\xadd$ under $\contexta$ and pruned modulo $\theory$ under context $\contexta$.%, and\newline
%			and if $\hist' \in \hists[\xdb]$ then
%			$\hist \wedge \hist'$ is SAT in $\theo$.
		}
		\BlankLine
		\If{$\xadd$ is terminal}{%
			\KwRet{$\xadd$};\nllabel{alg:prune:base}
		}
		\If{there is a cache entry $\xaddb$ for $(\theory, \xadd, \contexta)$}{%
			\KwRet $\xaddb$\;
		}
		$\contexta^{+} \leftarrow \contexta \wedge \xlabn(\xadd)$; %\;%\nllabel{alg:prune:hist1}
		$\contexta^{-} \leftarrow \contexta \wedge \neg \xlabn(\xadd)$\;\nllabel{alg:prune:hist2}
%			$\rep \leftarrow \solvesmt(\theo, \hist^{+})$;
%			$\rem \leftarrow \solvesmt(\theo, \hist^{-})$\;\nllabel{algo_prune:sat1}
		\uIf{$\neg\sats{\contexta^{+}}$}{%
			$\xaddb \leftarrow \prune(\theory, \succfxadd{\xadd}, \contexta^{-})$\;\nllabel{alg:prune:prune1}
		}\uElseIf{$\neg\sats{\contexta^{-}}$}{%
			$\xaddb \leftarrow \prune(\theory, \succtxadd{\xadd}, \contexta^{+})$\;\nllabel{alg:prune:prune2}
		}\Else{%
			$\xaddb^{+} \leftarrow \prune(\theory, \succtxadd{\xadd}, \contexta^{+})$;\quad 
			$\xaddb^{-} \leftarrow \prune(\theory, \succfxadd{\xadd}, \contexta^{-})$\;\nllabel{alg:prune:prune3}
			$\xaddb \leftarrow \obtain(\xlabn(\xadd), \xaddb^{+}, \xaddb^{-})$\;\nllabel{algo_prune:prune4}
		}
		Store $\xaddb$ in the cache for $(\theory, \xadd, \contexta)$\;
		\KwRet $\xaddb$;
	}
\end{algorithm}

\begin{figure}[t]
\centering
\begin{subfigure}[t]{0.48\textwidth}
\centering
\scalebox{0.75}{%
\begin{tikzpicture}[
	node distance=6mm and 3mm,
	decision/.style={ellipse, draw=black!70, line width=0.45pt, fill=white, inner sep=2.5pt, minimum width=17mm, minimum height=8mm, font=\small},
	deaddecision/.style={ellipse, draw=red!70!black, line width=0.7pt, fill=red!8, inner sep=2.5pt, minimum width=17mm, minimum height=8mm, font=\small},
	leaf/.style={rectangle, rounded corners=1.4pt, draw=black!70, line width=0.45pt, fill=white, inner sep=1.7pt, minimum width=11.5mm, minimum height=6.8mm, font=\small},
	keepleaf/.style={rectangle, rounded corners=1.4pt, draw=green!45!black, line width=0.7pt, fill=green!8, inner sep=1.7pt, minimum width=11.5mm, minimum height=6.8mm, font=\small},
	edgefalse/.style={-{Stealth[length=1.9mm]}, dashed, line width=0.65pt, draw=black!75},
	edgetrue/.style={-{Stealth[length=1.9mm]}, line width=0.65pt, draw=black!75},
	deadedge/.style={-{Stealth[length=1.9mm]}, line width=0.95pt, draw=red!70!black},
	keepedge/.style={-{Stealth[length=1.9mm]}, dashed, line width=0.95pt, draw=green!45!black},
	redirectedge/.style={-{Stealth[length=2.2mm]}, line width=1.15pt, draw=blue!65!black},
	smalllabel/.style={font=\large, inner sep=0.6pt, minimum width=7mm, align=center}
]
	\node[decision] (l0) {$\vara=0$};
	\node[decision, below right=of l0] (l1) {$\vara=1$};
	\node[deaddecision, below left=of l1] (l2dead) {$\vara=2$};
	\node[decision, below right=of l1] (l2keep) {$\vara=2$};
	\node[leaf, below left=10mm and 3mm of l2dead] (ly) {$\varb$};
	\node[keepleaf, below=10mm of l2dead] (ly1) {$\varb+1$};
	\node[leaf, below=10mm of l2keep] (ly2) {$\varb+2$};
	\node[leaf, below right=10mm and 3mm of l2keep] (ly3) {$\varb+3$};

	\draw[edgetrue] (l0) edge[bend right] (ly);
	\draw[edgefalse] (l0) -- (l1);
	\draw[redirectedge] (l1) -- (l2dead);
	\draw[edgefalse] (l1) -- (l2keep);
	\draw[deadedge] (l2dead) -- node[smalllabel, text=red!70!black, pos=0.50, left=0.2mm] {$\varphi^+$} (ly);
	\draw[keepedge] (l2dead) -- node[smalllabel, text=green!45!black, pos=0.50, right=0.2mm] {$\varphi^-$} (ly1);
	\draw[edgetrue] (l2keep) -- (ly2);
	\draw[edgefalse] (l2keep) -- (ly3);
\end{tikzpicture}%
}
\caption{$\sorteureal$-TEDD $\xadd$ not pruned modulo $\theoryla$.}
\label{fig:prune_rewire:nonpruned}
\end{subfigure}\hfill
\begin{subfigure}[t]{0.48\textwidth}
\centering
\scalebox{0.75}{%
\begin{tikzpicture}[
	node distance=6mm and 3mm,
	decision/.style={ellipse, draw=black!70, line width=0.45pt, fill=white, inner sep=2.5pt, minimum width=17mm, minimum height=8mm, font=\small},
	leaf/.style={rectangle, rounded corners=1.4pt, draw=black!70, line width=0.45pt, fill=white, inner sep=1.7pt, minimum width=11.5mm, minimum height=6.8mm, font=\small},
	keepleaf/.style={rectangle, rounded corners=1.4pt, draw=green!45!black, line width=0.7pt, fill=green!8, inner sep=1.7pt, minimum width=11.5mm, minimum height=6.8mm, font=\small},
	edgefalse/.style={-{Stealth[length=1.9mm]}, dashed, line width=0.65pt, draw=black!75},
	edgetrue/.style={-{Stealth[length=1.9mm]}, line width=0.65pt, draw=black!75},
	redirectedge/.style={-{Stealth[length=2.2mm]}, line width=1.15pt, draw=blue!65!black}
]
	\node[decision] (r0) {$\vara=0$};
	\node[leaf, below left=10mm and 3mm of r0] (ry) {$\varb$};
	\node[decision, below right=of r0] (r1) {$\vara=1$};
	\node[keepleaf, below left=10mm and 3mm of r1] (ry1) {$\varb+1$};
	\node[decision, below right=of r1] (r2) {$\vara=2$};
	\node[leaf, below left=10mm and 3mm of r2] (ry2) {$\varb+2$};
	\node[leaf, below right=10mm and 3mm of r2] (ry3) {$\varb+3$};

	\draw[edgetrue] (r0) -- (ry);
	\draw[edgefalse] (r0) -- (r1);
	\draw[redirectedge] (r1) -- (ry1);
	\draw[edgefalse] (r1) -- (r2);
	\draw[edgetrue] (r2) -- (ry2);
	\draw[edgefalse] (r2) -- (ry3);
\end{tikzpicture}%
}
\caption{Result of \PruneAlg($\theoryla,\xadd$).}
\label{fig:prune_rewire:pruned}
\end{subfigure}
\caption{Illustration of applying \PruneAlg to the TEDD from \Cref{fig:ex_xadd2}.}
\label{fig:prune_rewire}
\end{figure}
\Cref{alg:prune} implements theory-aware pruning of TEDDs in the sense of \Cref{def:pruned_tedd}. Assuming an oracle --- an SMT solver --- for checking the satisfiability of $\fo$ formulae modulo a given theory $\theory$, \prune{$\theory$, $\xadd$, $\true$} is a pruned $\sorta$-TEDD equivalent to the $\sorta$-TEDD $\xadd$ modulo $\theory$. The third argument of \PruneAlg is the path condition accumulated before the current sub-TEDD. The algorithm traverses $\xadd$ depth-first while conjoining the (negated, depending on the chosen successor) formulae encountered (l.\ \ref{alg:prune:hist2}). If the so-obtained path condition becomes unsatisfiable, the algorithm deletes the \enquote{unreachable}\sj{if the tree is a dag, then a path being unreachable doesnt mean a subtree is unreachable?  Maybe we can be a bit more specific here?}\sj{in the presence of an operations cache, what does it mean to delete a subtree.} (modulo $\theory$) sub-tree and redirects the edges accordingly. %This is demonstrated in the following example:
\begin{example}
\prune is illustrated in \Cref{fig:prune_rewire} for the TEDDs shown in \Cref{fig:ex_xadd}: Since $\vara\neq 0 \wedge \vara=1 \wedge \vara=2$ is unsatisfiable modulo $\theoryla$, the node labeled by $\vara=2$ can be deleted, and the respective edge of the predecessor labeled by $\vara=1$ can be redirected to $\varb+1$. \hfill $\triangle$
\end{example}
%
%For soundness and time complexity, we get:
%
\begin{theorem}
\label{thm:prune_soundness_main}
	Assume that $\contexta \in \fo$ is satisfiable modulo $\theory$. If \prune{$\theory$, $\xadd$, $\contexta$} (\Cref{alg:prune}) returns $\xaddb$, then $\xaddb$ is equivalent to $\xadd$ under all $\theory$-interpretations satisfying $\contexta$ and is pruned modulo $\theory$ under context $\contexta$. Moreover, \Cref{alg:prune} performs at most $2^{m+1}-1$ recursive calls, where $m$ is the maximum number of edges\sj{number of edges of a path = length of a path?}\kb{I find number of edges more precise. Otherwise I'm always wondering: number of nodes on the path? edges?} of a directed path from $\xadd$'s root to a terminal node (cf.\ \Cref{def:tedd_rank}).
\end{theorem}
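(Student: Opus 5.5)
We prove the statement by well-founded induction on the rank of $\xadd$, i.e., on the maximal number $m$ of edges of a directed path from the root of $\xadd$ to a terminal node. Everything is proved for an arbitrary context $\contexta$ satisfiable modulo $\theory$, so the hypothesis is available for the recursive calls. These calls are made only with $\contexta^{+}$ or $\contexta^{-}$, and each is satisfiable whenever it is passed on: in the first branch $\neg\sats{\contexta^{+}}$ together with $\sats{\contexta}$ forces $\sats{\contexta^{-}}$, symmetrically in the second branch, and in the third branch both are satisfiable by the failed tests. The sub-TEDDs $\succtxadd{\xadd},\succfxadd{\xadd}$ have strictly smaller rank. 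The cache is harmless: a cached entry for $(\theory,\xadd,\contexta)$ was computed by the same deterministic procedure on identical arguments, so it satisfies the same specification, and it only reduces the number of calls.

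\emph{Equivalence.} In the base case $\xadd$ is terminal and returned unchanged, and a single terminal node is trivially pruned: the only rooted path is empty, with path condition $\true$, and $\contexta\wedge\true$ is satisfiable by assumption. For the inductive step, fix a $\theory$-interpretation $\inta\models\contexta$ and split on the three branches. In branch one, $\contexta\wedge\xlabn(\xadd)$ is unsatisfiable, so $\inta\not\models\xlabn(\xadd)$. Hence $\inta\models\contexta^{-}$ and $\sem{\xadd}{\inta}=\sem{\succfxadd{\xadd}}{\inta}$, which by induction equals $\sem{\xaddb}{\inta}$. Branch two is symmetric. In branch three, use specification \eqref{eq:obtainspec} of \ObtainAlg and apply the induction hypothesis to whichever of $\contexta^{\pm}$ the interpretation $\inta$ satisfies. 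Here the unfolding of $\sem{\xadd}{\inta}$ via the $\switchfunop{\switchITEsymbol}$ semantics of \Cref{def:switchfun_of_xadd} is used.

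\emph{Prunedness.} In branches one and two, the result is literally the pruned result for the subcontext. A rooted path $\patha$ in $\xaddb$ therefore has $\contexta^{\mp}\wedge\pathcond{\xaddb}{\patha}$ satisfiable, and this formula entails $\contexta\wedge\pathcond{\xaddb}{\patha}$. In branch three, every rooted terminal path of $\xaddb$ starts with the root labeled $\xlabn(\xadd)$ and continues as a terminal path of $\xaddb^{\pm}$, so its condition conjoined with $\contexta$ is exactly $\contexta^{\pm}$ conjoined with the subpath condition, which is satisfiable by induction. The main obstacle is the reducedness shortcut of \ObtainAlg. If $\xaddb^{+}=\xaddb^{-}$, it returns $\xaddb^{+}$ and the root is dropped. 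A path $\patha$ of $\xaddb^{+}$ then only comes with satisfiability of $\contexta^{+}\wedge\pathcond{}{\patha}$, which still entails $\contexta\wedge\pathcond{}{\patha}$, so this case is fine. In the other case we must also check that the order $\varord$ is respected: $\xlabn(\xadd)$ precedes every label in $\xaddb^{\pm}$, since pruning only removes nodes from $\succtxadd{\xadd}$ and $\succfxadd{\xadd}$ and never introduces new labels. This follows from a small auxiliary claim, proved in the same induction, that the labels of $\prune(\theory,\xadd,\contexta)$ are a subset of those of $\xadd$.

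\emph{Call bound.} Let $R(m)$ be the maximal number of calls, including the top one, on a TEDD of rank at most $m$. Then $R(0)=1$, and each call issues at most two recursive calls on sub-TEDDs of rank at most $m-1$, so $R(m)\le 1+2R(m-1)$. Solving this recurrence gives $R(m)\le 2^{m+1}-1$. A call that is answered from the cache issues no further calls, so it only lowers the count.
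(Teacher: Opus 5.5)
Your proposal is correct and follows essentially the same route as the paper: induction on rank under a satisfiable context, with the three-way split on the satisfiability of $\contexta^{+}$ and $\contexta^{-}$ (cache entries are sound because they are keyed on the full argument triple), and a binary recursion tree of height at most $m$ giving the $2^{m+1}-1$ bound. Your explicit handling of the reducedness shortcut of \textsc{Obtain} when $\xaddb^{+}=\xaddb^{-}$, and of $\varord$-compatibility via the label-subset claim, fills in details the paper only asserts.
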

\begin{proof}
	See \Cref{app:proof_prune}.
\end{proof}

%We now consider theory-aware pruning of XADDs. Algorithm \ref{alg:prune} takes as input a $\sorta$-XADD $\xadd$, a theory $\theory$, and a formula $\forma$ (initially $\true$) and returns a pruned $\sorta$-XADD $\xaddb$ equivalent to $\xadd$ modulo $\theory$. This is is done in a recursive manner and assumes an oracle for checking the satisfiability of formulae modulo $\theory$: 
%%
%\kbinline{caching?}
%\input{algorithms/prune}
%\begin{theorem}
%	Algorithm \ref{alg:apply} is sound and terminates.
%\end{theorem}

%

\section{Syntactic Weakest Pre-Expectations via TEDDs}

%\subsection{Syntactic Weakest Pre-Expectations via TEDDs}
\label{sec:xwp}
\begin{table}[t]
	\begin{center}
		\begin{tabularx}{\textwidth}{X@{\quad}l@{\quad~~}lX}
			\toprule
			\toprule
			$\boldsymbol{\cc}$ & $\mathsf{\mathbf{\overline{wp}}}\boldsymbol{\llbracket\cc \rrbracket(\xadd)}$  \\[0.5ex]
			\hline 
			%\midrule
			$\SKIP$ & $\xadd$  \rule{0pt}{3.5ex}\\[1.8ex]
			%		%
			$\ASSIGN{\vara}{\terma}$ & $\substitute(\xadd, \vara, \terma)$   \\[1.5ex]
			%		%
			$\TICK{\tickrew}$ & $\apply(+, \obtain(\tickrew), \xadd)$   \\[1.5ex]
			$\OBSERVE{\forma}$ &  $\apply(\switchITEsymbol, \xadd_\forma, \xadd, \obtain(0))$ \\[1.5ex]
			%		%
			$\COMPOSE{\cc_1}{\cc_2}$ & $\xwp{\cc_1}{\xwp{\cc_2}{\xadd}}$  \\[1.5ex]
			%		%
			$\NDCHOICE{\cc_1}{\cc_2}$ & $\minfunc(\xwp{\cc_1}{\xadd}, \xwp{\cc_2}{\xadd} )$ \\[1.5ex]
			\multirow{2}{*}{$\PCHOICE{\cc_1}{\proba}{\cc_2}$} & \multirow{2}{*}
			{\shortstack[l]{
				$\apply(+, \apply(\cdot, \obtain(\proba), \xwp{\cc_1}{\xadd}),$ \\
				 $\qquad\qquad\qquad\apply(\cdot, \obtain(1-\proba), \xwp{\cc_2}{\xadd}))$
			}
			}
			%
			%
			%$\mylambda{\vala} \proba \cdot \wp{\cc_1}{\FF}(\vala) + (1-\proba )\cdot \wp{\cc_2}{\FF}(\vala)$ 
			\\[4.2ex]
			$\ITE{\forma}{\cc_1}{\cc_2}$ & 
			$\apply(\switchITEsymbol, \xadd_\forma, \xwp{\cc_1}{\xadd}, \xwp{\cc_2}{\xadd})$   
			%$\lfp \FG.\, \iverson{\forma}\cdot \wp{\cc'}{\FG} + \iverson{\neg\forma}\cdot\FF$ 
			\\
			\bottomrule
			\bottomrule
		\end{tabularx}
	\end{center}%
	\caption{A syntactic variant of the weakest pre-expectation calculus for loop-free programs operating on $\sorteureal$-TEDDs.
		Here $\xadd_\forma$ is the $\sortbool$-TEDD obtained from the quantifier-free formula $\forma$ by recursing on the structure of $\forma$ using \ApplyAlg from \Cref{alg:apply} on the Boolean connectives. 
		% \ObtainAlg with argument $\tickrew$ (resp.\ $\proba$, $1-\proba$) returns the $1$-node $\sorteureal$-TEDD labelled with $\tickrew$ (resp.\ $\proba$, $1-\proba$) (cf.\ \Cref{alg:obtain}).
		}%
	\label{tab:xwp}%
\end{table}%

This section marries the WP calculus from \Cref{sec:wp} with the TEDD-based representation of, e.g., expectations, from \Cref{sec:xadds}. By careful construction, the integration is rather natural. 

To facilitate the marriage, we introduce $\xwpsymbol$, a syntactic variant of $\wpsymbolnostruct$ for loop-free programs that operates on $\sorteureal$-TEDDs. 
First recall that every $\sorteureal$-TEDD $\xadd$ represents a class of expectations, i.e., given a structure $\struct$, we have that
$
\mylambda{\vala} \sem{\xadd}{(\struct,\vala)} \in  \Es
$ is an expectation. This perspective enables defining $\xwpsymbol$ recursively on the structure of program $\cc$, using the previously developed operations \ApplyAlg, \ObtainAlg, and \SubstituteAlg, as well as the \MinimumAlg operation introduced below. Given an $\sorteureal$-TEDD $\xadd$, $\xwp{\cc}{\xadd}$ is an $\sorteureal$-TEDD as defined in \Cref{tab:xwp}. The definition constructs this TEDD algorithmically. By construction, the approach is sound: applying $\xwpsymbol$ to $\xadd$ is equivalent to applying $\wpsymbol$ to the expectation represented by $\xadd$ for all structures $\struct$:
%
% Before we detail $\xwpsymbol$, consider its key property:
%
\begin{theorem}
	\label{thm:xwp_sound}
	For every loop-free program $\cc$ and every $\sorteureal$-TEDD $\xadd$, we have 
	\[
	\forall~\text{structures $\struct$}\colon\quad \mylambda{\vala} \sem{\xwp{\cc}{\xadd}}{(\struct, \vala)} \eeq \wp{\cc}{\mylambda{\vala} \sem{\xadd}{(\struct,\vala)}}~.
	\]
\end{theorem}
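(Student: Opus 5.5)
We prove the statement by structural induction on the loop-free program $\cc$, simultaneously for all $\sorteureal$-TEDDs $\xadd$ and all structures $\struct$. Throughout we write $\llbracket\xadd\rrbracket_\struct$ for the expectation $\mylambda{\vala} \sem{\xadd}{(\struct,\vala)}$. The workhorses are the soundness statements already established: \Cref{thm:apply_soundness_main}, which gives $\sem{\apply(\termop,\xadd_1,\ldots,\xadd_n)}{\inta} = \sem{\switchfunop{\termop}(\xswitchfun{\xadd_1},\ldots,\xswitchfun{\xadd_n})}{\inta} = \sem{\termop(\switchfunterm{\xswitchfun{\xadd_1}}{\inta},\ldots)}{\inta}$; \Cref{thm:substitute_soundness_main} combined with the semantic-substitution property of case expressions; and the specification \eqref{eq:obtainspec} of \ObtainAlg, including that $\sem{\obtain(\terma)}{\inta}=\sem{\terma}{\inta}$.

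Before the induction we prove an auxiliary lemma for guards. For every quantifier-free $\forma$, the $\sortbool$-TEDD $\xadd_\forma$ satisfies $\sem{\xadd_\forma}{\inta}=\true$ iff $\inta\models\forma$. The proof is a sub-induction on $\forma$. For an atom, $\obtain(\forma)$ is the 3-node TEDD, and by \Cref{def:switchfun_of_xadd} its case expression is $\switchfun_\forma$. For $\wedge,\vee,\neg$, the claim follows from \Cref{thm:apply_soundness_main}, using that the Boolean connectives on the constant terminals $\true,\false$ are evaluated by the standard structure. With this lemma, \Cref{thm:apply_soundness_main} instantiated at $\switchITEsymbol$ gives that $\apply(\switchITEsymbol,\xadd_\forma,\xadd_1,\xadd_2)$ evaluates to $\sem{\xadd_1}{\inta}$ if $\inta\models\forma$ and to $\sem{\xadd_2}{\inta}$ otherwise. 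This is exactly the ITE example given after \Cref{alg:apply}.

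The cases then go as follows.
\begin{itemize}
\item $\SKIP$ is trivial.
\item For assignment, \Cref{thm:substitute_soundness_main} gives $\sem{\substitute(\xadd,\vara,\terma)}{(\struct,\vala)}=\sem{\xswitchfuns\switchfunsubst{\vara}{\terma}}{(\struct,\vala)}=\sem{\xswitchfuns}{(\struct,\vala\valsubst{\vara}{\sem{\terma}{(\struct,\vala)}})}$, which matches \Cref{tab:wp}.
\item $\TICK{\tickrew}$ follows from \Cref{thm:apply_soundness_main} with $\termop=+$ and $\obtain(\tickrew)$.
\item $\OBSERVE{\forma}$ and the conditional follow from the ITE fact above; for the conditional we additionally use the induction hypotheses for $\cc_1$ and $\cc_2$.
\item Sequential composition is immediate: apply the induction hypothesis for $\cc_2$ to $\xadd$, then the induction hypothesis for $\cc_1$ to the TEDD $\xwp{\cc_2}{\xadd}$. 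This is why the hypothesis must quantify over all TEDDs.
\item Probabilistic choice uses three applications of \Cref{thm:apply_soundness_main}, for $\cdot$, $\cdot$ and $+$, with the constants $\proba$ and $1-\proba$. Here we check that the semantics of $\cdot$ and $+$ on $\PosRealsInf$ (with $0\cdot\infty=0$) agree with the arithmetic in \Cref{tab:wp}.
\item Non-deterministic choice needs the specification of \MinimumAlg, namely $\sem{\minfunc(\xadd_1,\xadd_2)}{\inta}=\min\{\sem{\xadd_1}{\inta},\sem{\xadd_2}{\inta}\}$. We intend to realize \MinimumAlg as $\apply(\switchITEsymbol,\cdot,\cdot,\cdot)$ applied to a comparison-based Boolean TEDD. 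Equivalently, it is \ApplyAlg with the term operator $(\terma_1,\terma_2)\mapsto \switchITE{\terma_1\leq\terma_2}{\terma_1}{\terma_2}$, lifted so that the comparison becomes a decision atom, or using a built-in $\min$ function symbol.
\end{itemize}

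The main obstacle is the non-deterministic choice case. \MinimumAlg is "introduced below" and has no stated specification here, and a naive term-level ITE whose guard is a non-constant formula is not a valid $\switchITEsymbol$ over terms. We therefore first fix the specification of \MinimumAlg as above and argue it either via a $\min$ function symbol interpreted standardly, so that \Cref{thm:apply_soundness_main} applies directly, or via inserting the atom $\terma_1\leq\terma_2$ at each pair of leaves. In the latter case, correctness follows from the \ObtainAlg specification, even though order-compatibility with $\varord$ is maintained by rebuilding through \ApplyAlg.

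A minor second point concerns the ITE case. The theorem is quantified over all structures, not only $\theory$-structures, so we must not use pruning anywhere. We also note that all soundness results for \ApplyAlg and \SubstituteAlg hold for arbitrary interpretations, so the quantification over all structures is legitimate.
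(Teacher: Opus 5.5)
Your proposal is correct and matches the paper's proof essentially step for step: structural induction on $\cc$ (for all TEDDs and a fixed but arbitrary structure), the same auxiliary lemma that $\sem{\xadd_\forma}{(\struct,\vala)}=\true$ iff $(\struct,\vala)\models\forma$ proved by induction on $\forma$, and each case closed via the soundness of \textsc{Apply}, \textsc{Substitute} and \textsc{Obtain} together with the induction hypotheses. The non-deterministic choice case you flag as the main obstacle is not one: the paper closes it by citing the separately proved soundness of \Cref{alg:minimum} (\Cref{thm:minimum_soundness_main}), which already inserts the comparison atom at the leaves and re-merges via \textsc{Apply}, essentially your second option, so you should cite that theorem rather than choose another realization (a $\min$ symbol would prove the claim for a different transformer than the one in \Cref{tab:xwp}).
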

\begin{proof}
See \Cref{app:proof_xwp_sound}.
\end{proof}
%We note that the soundness is not affected by any pruning. We also note that (only) the use of nondeterministic programs requires the minimum operator.
%
%In words: applying $\xwpsymbol$ to $\xadd$ is equivalent to applying $\wpsymbol$ to the expectation represented by $\xadd$ for all structures $\struct$. %Using the algorithms $\apply$, $\obtain$, and $\substitute$ enables defining $\xwpsymbol$ for all constructs 
%but non-deterministic choice. 
\begin{figure}[t]
\begin{center}
\vspace{-0.5em}
\begin{subfigure}[t]{.3\textwidth}
	\centering
	\scalebox{0.8}{
		\begin{tikzpicture}[node distance=6mm and 3mm]
			% Leaves
			
%			\node[bdd node] (x0) {$\funcselect(A, \vara) \leq \varb$};
%			\node[bdd leaf, below left = of x0,align=center] (y) {$\nicefrac{1}{2}\cdot \funcselect(\funcstore(A,\vara,\varb+2), \vara)$ \\
%				${}+ \nicefrac{1}{2}\cdot \funcselect(\funcstore(A,\vara,\varb+4), \vara)$};
%			\draw[bdd edge1] (x0) -- (y)  {};
			
			\node[bdd leaf,align=center, minimum height=1cm] (y2) {$\funcselect(A,\vara)$};
			%\draw[bdd edge0] (x0) -- (y2)  {};

			%	\node[bdd node, below right = of x0] (x1) {$\vara=1$};
			%	\draw[bdd edge0] (x0) -- (x1)  {};
			%	\node[bdd leaf, below left = of x1] (y1) {$\varb+1$};
			%	\draw[bdd edge1] (x1) -- (y1)  {};
			%	
			%	\node[bdd node, below right = of x1] (x2) {$\vara=2$};
			%	\draw[bdd edge0] (x1) -- (x2)  {};
			%	\node[bdd leaf, below left = of x2] (y2) {$\varb+2$};
			%	\node[bdd leaf, below right = of x2] (y3) {$\varb+3$};
			%	\draw[bdd edge1] (x2) -- (y2)  {};
			%	\draw[bdd edge0] (x2) -- (y3)  {};
			
			%\node[bdd leaf, below left=of x0] (y) {$\varb$};
			
		\end{tikzpicture}
	}
	\caption{An $\sorteureal$-TEDD $\xadd$.}
	\end{subfigure}
\hfill
\begin{subfigure}[t]{.69\textwidth}
	\centering
	\scalebox{0.8}{
\begin{tikzpicture}[node distance=6mm and 3mm]
	% Leaves
	
	\node[bdd node] (x0) {$\funcselect(A, \vara) \leq \varb$};
	\node[bdd leaf, below left =0.8cm of x0,align=center, minimum height=1cm,minimum width=5.2cm] (y) {$\nicefrac{1}{2}\cdot \funcselect(\funcstore(A,\vara,\varb+2), \vara)$ \\
	${}+ \nicefrac{1}{2}\cdot \funcselect(\funcstore(A,\vara,\varb+4), \vara)$};
	\draw[bdd edge1] (x0) -- (y)  {};
	
	\node[bdd leaf, below right =0.8cm  of x0,align=center, minimum height=1cm] (y2) {$\funcselect(A,\vara)$};
	\draw[bdd edge0] (x0) -- (y2)  {};

%	\node[bdd node, below right = of x0] (x1) {$\vara=1$};
%	\draw[bdd edge0] (x0) -- (x1)  {};
%	\node[bdd leaf, below left = of x1] (y1) {$\varb+1$};
%	\draw[bdd edge1] (x1) -- (y1)  {};
%	
%	\node[bdd node, below right = of x1] (x2) {$\vara=2$};
%	\draw[bdd edge0] (x1) -- (x2)  {};
%	\node[bdd leaf, below left = of x2] (y2) {$\varb+2$};
%	\node[bdd leaf, below right = of x2] (y3) {$\varb+3$};
%	\draw[bdd edge1] (x2) -- (y2)  {};
%	\draw[bdd edge0] (x2) -- (y3)  {};
	
	%\node[bdd leaf, below left=of x0] (y) {$\varb$};
	
\end{tikzpicture}
}
\caption{The $\sorteureal$-TEDD $\xwp{\cc}{\xadd}$ for $\cc$ from \Cref{ex:wp:array}.}
\end{subfigure}
\end{center}
\caption{Exemplification of $\xwpsymbol$ by applying it to the setting from \Cref{ex:wp:array}.}
\end{figure}

\begin{algorithm}[t]
	\caption{Computing Pointwise Minima of $\sorteureal$-TEDDs}
	\label{alg:minimum}
	%\SetKwFunction{unsat}{UNSAT}
	%\setcounter{AlgoLine}{0}
	%\newcommand{\rep}{\text{result}^{+}}
	%\newcommand{\rem}{\text{result}^{-}}
	\Fn{\nllabel{alg:min:def}\minfunc{$\xadd_1$, $\xadd_2$}}{%
		\KwIn{%
			$\sorteureal$-TEDDs $\xadd_1$ and $\xadd_2$.
		}
		\KwResult{%
			An $\sorteureal$-TEDD $\xaddb$ satisfying $\forall ~\text{structures $\struct$}\colon \mylambda{\vala} \sem{\xaddb}{(\struct,\vala)} = \mylambda{\vala} \min \{ \sem{\xadd_1}{(\struct,\vala)}, \sem{\xadd_2}{(\struct,\vala)} \}$. %, and\newline
%			and if $\hist' \in \hists[\xdb]$ then
%			$\hist \wedge \hist'$ is SAT in $\theo$.
		}
		\BlankLine
%		%
		\If{\nllabel{alg:min:cache_lookup}there is a cache entry $\xaddb$ for $(\xadd_1, \xadd_2)$}{%
			\KwRet $\xaddb$\;\nllabel{alg:min:cache_return}
		}\nllabel{alg:min:cache_end}
	    \eIf{\nllabel{alg:min:terminal_test}\label{alg:min-previous:base_case1}$\xadd_1$ and $\xadd_2$ are terminal}{%
		$\xaddb \leftarrow \obtain(\xlabt(\xadd_1) < \xlabt(\xadd_2), \xadd_1, \xadd_2)$\;\nllabel{alg:min:terminal_obtain}\label{alg:min-previous:base_case2}
		}{\nllabel{alg:min:recursive_case}\label{alg:min-previous:recur_case1}
		\uIf{\nllabel{alg:min:first_case}$\xadd_2$ is terminal or $\xlabn(\xadd_1) \varord \xlabn(\xadd_2)$}{%
			$\forma \leftarrow \xlabn(\xadd_1)$;\quad
			$\xaddb^{+} \leftarrow \minfunc(\succtxadd{\xadd_1}, {\xadd_2})$;\quad
			$\xaddb^{-} \leftarrow \minfunc(\succfxadd{\xadd_1}, {\xadd_2})$\nllabel{alg:min:first_branch}
		}\uElseIf{\nllabel{alg:min:second_case}$\xadd_1$ is terminal or $\xlabn(\xadd_2) \varord \xlabn(\xadd_1)$}{%
			$\forma \leftarrow \xlabn(\xadd_2)$;\quad
			$\xaddb^{+} \leftarrow \minfunc(\xadd_1, \succtxadd{\xadd_2})$;\quad 
			$\xaddb^{-} \leftarrow \minfunc(\xadd_1, \succfxadd{\xadd_2})$\nllabel{alg:min:second_branch}
		}\uElse{\nllabel{alg:min:equal_case}
			$\forma \leftarrow \xlabn(\xadd_1)$;\quad
			$\xaddb^{+} \leftarrow \minfunc(\succtxadd{\xadd_1}, \succtxadd{\xadd_2})$;\quad 
			$\xaddb^{-} \leftarrow \minfunc(\succfxadd{\xadd_1}, \succfxadd{\xadd_2})$\nllabel{alg:min:equal_branch}
		}
		$\xaddb \leftarrow \apply(\switchITEsymbol, \obtain(\forma), \succtxadd{\xaddb}, \succfxadd{\xaddb})$\;\nllabel{alg:min:merge}
	}\nllabel{alg:min:recursive_end}\label{alg:min-previous:recur_case2}
	Store $\xaddb$ in the cache for $(\xadd_1, \xadd_2)$\;\nllabel{alg:min:cache_store}
\KwRet $\xaddb$\;\nllabel{alg:min:return}\label{alg:min-previous:return}
	}
\end{algorithm}

The non-deterministic choice requires computing pointwise minima of expectations. Since standard arithmetical theories such as $\theoryla$ do not provide a function symbol for minima, we instead provide a dedicated algorithm for computing pointwise minima of two $\sorteureal$-TEDDs in \Cref{alg:minimum}. Given $\sorteureal$-TEDDs $\xadd_1,\xadd_2$, $\MinimumAlg(\xadd_1,\xadd_2)$ returns an $\sorteureal$-TEDD representing their point-wise minimum. Similarly to \ApplyAlg (\Cref{alg:apply}), \Cref{alg:minimum} traverses both $\xadd_1$ and $\xadd_2$ in a $\varord$-preserving manner (ll.\ \ref{alg:min-previous:recur_case1}-\ref{alg:min:recursive_end}). The base case in l.\ \ref{alg:min:terminal_obtain} creates a $3$-node TEDD representing the pointwise minimum of the respective terminal labels, using that $\min{\reala,\reala'}$ euquals $\reala$ if $\reala<\reala'$, and $\reala'$ otherwise. To ensure that the resulting TEDD respects $\varord$, we invoke \ApplyAlg in l.\ \ref{alg:min:merge}, which soundly merges the TEDDs obtained from the recursive calls. We get:
\begin{theorem}
\label{thm:minimum_soundness_main}
	\Cref{alg:minimum} is sound and encounters at most $\sizeof{\xadd_1}\cdot\sizeof{\xadd_2}$ cache misses in \Cref{alg:min:cache_lookup}.
\end{theorem}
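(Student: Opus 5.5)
We argue soundness by well-founded induction on the pair $(\xadd_1,\xadd_2)$, ordered by the sum of the heights (maximal number of edges from root to a terminal) of the two TEDDs. Each recursive call in ll.~\ref{alg:min:first_branch}--\ref{alg:min:equal_branch} replaces at least one argument by one of its successors, and leaves the other unchanged. So this measure strictly decreases, which also gives termination. Cache hits are harmless: the cache only stores results computed for the very same pair, and these are correct by the induction hypothesis.

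\emph{Base case (both terminal).} Let $\terma_i=\xlabt(\xadd_i)$. Line~\ref{alg:min:terminal_obtain} calls \ObtainAlg with the atom $\terma_1<\terma_2$ and the terminal TEDDs $\xadd_1,\xadd_2$, so the order precondition is vacuous. By the specification \eqref{eq:obtainspec}, every interpretation $(\struct,\vala)$ with $(\struct,\vala)\models \terma_1<\terma_2$ is mapped to $\sem{\terma_1}{(\struct,\vala)}$, and every other interpretation to $\sem{\terma_2}{(\struct,\vala)}$. Over $\PosRealsInf$, with the stated conventions for $\infty$, this is exactly $\min\{\sem{\terma_1}{(\struct,\vala)},\sem{\terma_2}{(\struct,\vala)}\}$ in both cases; ties are fine.

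\emph{Recursive case.} Suppose the first branch applies, so $\forma=\xlabn(\xadd_1)$. By \Cref{def:switchfun_of_xadd} and the semantics of $\switchfunop{\switchITEsymbol}$, we have $\sem{\xadd_1}{\inta}=\sem{\succtxadd{\xadd_1}}{\inta}$ if $\inta\models\forma$, and $\sem{\xadd_1}{\inta}=\sem{\succfxadd{\xadd_1}}{\inta}$ otherwise. By the induction hypothesis, $\xaddb^{\pm}$ represent the pointwise minima of the respective successor of $\xadd_1$ with $\xadd_2$. Line~\ref{alg:min:merge} combines them via $\apply(\switchITEsymbol,\obtain(\forma),\xaddb^+,\xaddb^-)$ (reading $\succtxadd{\xaddb},\succfxadd{\xaddb}$ there as $\xaddb^{+},\xaddb^{-}$). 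By \Cref{thm:apply_soundness_main}, the result evaluates to $\sem{\xaddb^+}{\inta}$ if $\inta\models\forma$ and to $\sem{\xaddb^-}{\inta}$ otherwise. This yields $\min\{\sem{\xadd_1}{\inta},\sem{\xadd_2}{\inta}\}$ pointwise. The second and third branches are symmetric; in the third, both arguments are split on the same atom.

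The use of \ApplyAlg rather than \ObtainAlg here is essential. The recursive results may contain fresh atoms $\terma<\terma'$ created at the leaves, which are not necessarily $\varord$-larger than $\forma$. \ApplyAlg restores the order and reducedness, so the output is again a valid TEDD. I expect this to be the main subtlety: one must state that soundness is semantic equality under all interpretations, which is independent of where the new atoms sit in $\varord$, and rely on \Cref{thm:apply_soundness_main} for well-formedness.

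\emph{Cache-miss bound.} Every call's arguments are pairs $(\xnodea_1,\xnodea_2)$ of nodes of the original $\xadd_1$ and $\xadd_2$, since recursion only descends to successors. There are at most $\sizeof{\xadd_1}\cdot\sizeof{\xadd_2}$ such pairs. After the first call on a given pair completes, its result is stored (l.~\ref{alg:min:cache_store}), so every later lookup for that pair hits. Because the recursion is well-founded, no pair is re-entered while a call on the same pair is still active. Hence each pair causes at most one miss in \Cref{alg:min:cache_lookup}. Internal \ApplyAlg calls use their own cache and are not counted here.
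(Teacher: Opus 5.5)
Your proposal is correct and follows essentially the same route as the paper's proof. Both use induction on the tuple rank (your sum of heights), the strict-comparison base case, the $\mathsf{ITE}$-merge via \textsc{Apply} in the recursive case, and the count of one cache miss per pair of nodes. The paper justifies cache hits through an explicit cache invariant, namely that every stored entry denotes the minimum of its pair. That is the cleaner way to state your remark, because your induction hypothesis strictly covers only pairs of smaller rank.
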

\begin{proof}
	See \Cref{app:proof_minimum}.
\end{proof}

\section{Deductive Verification of Probabilistic Loops via TEDDs and SMT Solving}
In this section, we lift proof rules for WP-based reasoning to the TEDD-based representation. 
% We proceed in three steps: In \Cref{sec:xwp}, we provide a syntactic variant of $\wpsymbolnostruct$ for \emph{loop-free} programs operating on $\sorteureal$-XADDs. 
In \Cref{sec:wp:loops}, we recap both fixed-point iteration- and induction-based proof rules from the literature, which reduce reasoning about loops to reasoning about loop-free programs. The sought-after techniques are then obtained in \Cref{sec:proof_rules_via_tedds} by combining the results from \Cref{sec:xwp,sec:wp:loops}.

\label{sec:verification}

\subsection{Proof Rules for Reasoning about Loops}
\label{sec:wp:loops}
Whereas determining weakest pre-expectations of loops reduces to \emph{syntactic} reasoning, reasoning about loops is much more challenging. Therefore, we employ \emph{proof rules} for loops. %which, in fact, reduce reasoning about loops to reasoning about loop-free programs. 
Our proof rules target verifying or refuting whether for given loop $\cc$ and $\FF,\FG\in\Es$, we have
\begin{align*}
	\label{eqn:wp:upper_bound}
	\wp{\cc}{\FF} \eeleq \FG~,
	\tag{$\dagger$}
\end{align*}
i.e., whether or not $\FG(\vala)$ soundly upper-bounds the weakest pre-expectation of $\cc$ w.r.t.\ $\FF$ on initial state $\vala$ \emph{for all} $\vala$. We leverage two classes of proof rules for tackling this problem: \emph{fixed-point iteration}-based rules and state-of-the-art \emph{induction-based} rules, which we introduce next. 
Leveraging TEDDs for related proof-rule targeting probabilistic termination, randomized distributed algorithms, and supermartingale principles~\cite{DBLP:journals/pacmpl/MajumdarS25,DBLP:journals/pacmpl/MajumdarS24,DBLP:journals/pacmpl/EneaMMS26,DBLP:journals/corr/abs-2203-04422,DBLP:journals/corr/abs-2512-00270,10.1145/3808348}, is left future work.\kb{check}
\begin{example}
	\label{ex:wp_upper_bound_ motivate}
	Consider the post-expectation $\FF = \mylambda{\vala} 0$, the loop $\cc$ given by
	\[
	\WHILEDO{\varb \foeq 0}{ \PCHOICE{\ASSIGN{\varb}{1}}{\nicefrac{1}{2}}{\TICK{1}}}~,
	\]
	where $\hastype{\varb}{\sortnat}$,
	and let $\FG =\mylambda{\vala} 1$. The loop $\cc$ keeps flipping a fair coin. If it lands heads, the loop terminates. Otherwise, a cost of $1$ is incurred. By proving that $\wp{\cc}{\FF} \eleq \FG$, we verify that, for every initial state, the expected cost  when executing $\cc$ is upper-bounded by $1$. 
	%The structure $\struct$ is irrelevant, the  variable is of type $\sortnat$, which is always assigned its standard meaning.
	%
	\hfill $\triangle$
\end{example}

%\begin{remark}
%	\sj{I didnt understand this remark here?}
%	As is standard in SMT-based program verification, we will, in the subsequent sections, work in a more axiomatic setting: Rather than fixing a structure $\struct$, we consider a theory $\theory$ induced by a set of axioms and verify or refute that (\ref{eqn:wp:upper_bound}) holds for all $\theory$-structures $\struct$.  In this section, however, we work in a setting where $\struct$ is fixed, which will tightly integrate to the axiomatic setting. \hfill $\triangle$
%\end{remark}

\paragraph{Fixed-Point Iteration-Based Reasoning}
This technique is based on \emph{loop unrolling} and best understood by recalling that the $n$-th iterate $\wcharfuniter{\FF}{n}(\expzero)$ of a loop $\cc$'s characteristic function (w.r.t.\ post-expectation $\FF$) corresponds to the weakest pre-expectation of $\cc$ w.r.t.\ $\FF$ when aborting after at most $n$ iterations. With this in mind, consider the following: 

% given loop $\cc$, post-expectation $\FF$, and $n \in \Nats$, $\wcharfuniter{\FF}{n}(0)$ is the weakest pre-expectation of the $n$-th unrolling of $\cc$, i.e., the loop that behaves like $\cc$ but aborts after at most $n$ iterations. Based on this understanding, consider the following:
%
\begin{theorem}[\cite{kind_cav}]
	\label{thm:proof_rules:iter}
	Let $\cc = \WHILEDO{\forma}{\cc'}$, $\FF,\FG \in \Es$, and $n\in\Nats$. We have :
	\begin{enumerate}
		\item\label{thm:proof_rules:iter:verify} If $\wcharfuniter{\FF}{n+1}(\expzero) \eleq \wcharfuniter{\FF}{n}(\expzero)$ and $\wcharfuniter{\FF}{n}(\expzero) \eleq \FG$, then $\wp{\cc}{\FF} \eleq \FG$. 
		\item\label{thm:proof_rules:iter:refute} If $\wcharfuniter{\FF}{n}(\expzero) \not\eleq \FG$, then $\wp{\cc}{\FF} \not\eleq \FG$. 
	\end{enumerate}
\end{theorem}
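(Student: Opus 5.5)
The plan is to use only the lattice-theoretic facts already recalled in \Cref{sec:wp}. These are that $(\Es,\eleq)$ is a complete lattice, that $\wcharfun{\FF}$ is monotone (indeed $\omega$-continuous, which is what justifies the Kleene characterization $\wp{\cc}{\FF} = \lfp \wcharfun{\FF} = \bigesup_{n\in\Nats} \wcharfuniter{\FF}{n}(\expzero)$), and Park induction: every $\FH\in\Es$ with $\wcharfun{\FF}(\FH)\eleq\FH$ satisfies $\lfp\wcharfun{\FF}\eleq\FH$. Monotonicity of $\wcharfun{\FF}$ follows from monotonicity of $\wpsymbol\llbracket\cc'\rrbracket$, which is standard (cf.\ \cite{kaminski_diss}), and I will take it as given.

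For \eqref{thm:proof_rules:iter:verify}, set $\FH = \wcharfuniter{\FF}{n}(\expzero)$. The first hypothesis reads $\wcharfun{\FF}(\FH) = \wcharfuniter{\FF}{n+1}(\expzero) \eleq \FH$, so $\FH$ is a pre-fixed point of $\wcharfun{\FF}$. Park induction (equivalently, the Knaster--Tarski characterization of $\lfp$ as the meet of all pre-fixed points) then gives $\wp{\cc}{\FF} = \lfp\wcharfun{\FF} \eleq \FH$. Chaining this with the second hypothesis $\FH\eleq\FG$ by transitivity of $\eleq$ yields $\wp{\cc}{\FF}\eleq\FG$.

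For \eqref{thm:proof_rules:iter:refute}, argue by contraposition. By the Kleene characterization, $\wcharfuniter{\FF}{n}(\expzero) \eleq \bigesup_{k} \wcharfuniter{\FF}{k}(\expzero) = \wp{\cc}{\FF}$. Alternatively, induction on $n$ using monotonicity and $\expzero\eleq\lfp\wcharfun{\FF}$ shows $\wcharfuniter{\FF}{n}(\expzero)\eleq\lfp\wcharfun{\FF}$ without continuity. So if $\wp{\cc}{\FF}\eleq\FG$ held, transitivity would give $\wcharfuniter{\FF}{n}(\expzero)\eleq\FG$, contradicting the hypothesis.

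No step is a genuine obstacle. The only point needing care is ensuring that $\wcharfun{\FF}$ is monotone, since it involves $\wpsymbol\llbracket\cc'\rrbracket$ for an arbitrary loop body that may contain non-determinism and nested loops. I would handle this by citing the standard monotonicity of $\wpsymbol$, proved by structural induction over \Cref{tab:wp}: minima, convex sums, substitution and least fixed points of monotone functionals are all monotone.
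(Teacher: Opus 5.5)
Your proof is correct and follows essentially the same lattice-theoretic argument the paper sketches: part (2) via $\wcharfuniter{\FF}{n}(\expzero) \eleq \lfp \wcharfun{\FF}$, using Kleene or your continuity-free induction, and part (1) via the pre-fixed-point hypothesis. The only minor difference is in part (1): the paper says the hypothesis forces $\wcharfuniter{\FF}{n}(\expzero) = \lfp \wcharfun{\FF}$ (implicitly because the Kleene chain ascends), whereas you use Park induction to get just the inequality $\lfp \wcharfun{\FF} \eleq \wcharfuniter{\FF}{n}(\expzero)$ you need, which is slightly more economical.
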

\Cref{thm:proof_rules:iter}.\ref{thm:proof_rules:iter:verify} is an immediate consequence of the fact $\wcharfuniter{\FF}{n+1}(\expzero) \eleq \wcharfuniter{\FF}{n}(\expzero)$ implies $\wcharfuniter{\FF}{n}(\expzero)  = \lfp \wcharfun{\FF}= \wp{\cc}{\FF}$ and thus suitable for \emph{verifying} that (\ref{eqn:wp:upper_bound}) holds. \Cref{thm:proof_rules:iter}.\ref{thm:proof_rules:iter:refute}, on the other, is for \emph{refuting} (\ref{eqn:wp:upper_bound}) and an instance of what \citet{kind_cav} call \emph{latticed bounded model checking}:  Kleene's fixed-point theorem  implies that if iterating $\wcharfun{\FF}$ on $0$ exceeds $\FG$, then \mbox{$\wp{\cc}{\FF}$ exceeds $\FG$, as well.}

\paragraph{Induction-Based Reasoning} 
Induction-based reasoning is suitable for \emph{verifying} (\ref{eqn:wp:upper_bound}). We leverage a probabilistic generalization \cite{kind_cav} of the $k$-induction verification technique~\cite{DBLP:conf/fmcad/SheeranSS00,sofware_k_induction,boosting_k_induction,property_directed_k_induction,k_induction_without_unrolling}: Given expectations $\FF,\FG$ and a loop $\cc$ with  characteristic function $\wcharfun{\FF}$, the \emph{$k$-induction operator $\kindop{\FF}{\FG} \colon \Es \to \Es$ for $\cc$ w.r.t.\ $\FF$ and $\FG$} is
\(
\kindops(\FH) = \wcharfun{\FF}(\FH) \einf \FG.
\)
%
%This\sj{this what?} yields a $k$-induction principle:
%
\begin{theorem}[\cite{kind_cav}]
	\label{thm:proof_rules:kind}
	Let $\cc = \WHILEDO{\forma}{\cc'}$, let $\FF,\FG \in \Es$, and $k\in\Nats$.  We have:
	\[
	\wcharfun{\FF}\big( \kindopsiter{k}(\FG)\big) \eeleq \FG
	\qquad\text{implies}\qquad
	\wp{\cc}{\FF} \eeleq \FG.
	\]
	%	
	%	If $\wcharfun{\FF}\big( \kindopsiter{n}(\FG)\big) \eleq \FG$, then $\wp{\cc}{\FF} \eleq \FG$.
	%
\end{theorem}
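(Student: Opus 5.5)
The argument is the standard Park-induction proof of $k$-induction, carried out in the complete lattice $(\Es,\eleq)$. Set $\FH \coloneqq \kindopsiter{k}(\FG)$. I would show that $\FH$ is a pre-fixed point of $\wcharfun{\FF}$, that is, $\wcharfun{\FF}(\FH) \eleq \FH$. Park induction (equivalently, Knaster--Tarski, which gives $\lfp \wcharfun{\FF} = \bigeinf\{\FH' \mid \wcharfun{\FF}(\FH')\eleq \FH'\}$) then yields $\wp{\cc}{\FF} = \lfp\wcharfun{\FF} \eleq \FH$. The remaining inequality $\FH \eleq \FG$ is immediate: for $k=0$ we have $\FH=\FG$, and for $k\geq 1$ we have $\FH = \wcharfun{\FF}(\kindopsiter{k-1}(\FG)) \einf \FG \eleq \FG$. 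The only ingredient needed from earlier is the monotonicity of $\wcharfun{\FF}$, which holds because $\wpsymbol\llbracket\cc'\rrbracket$ is monotone. This follows by a routine induction on $\cc'$: every rule in \Cref{tab:wp} is monotone in $\FF$, and for loops the least fixed point is monotone in the parameter.

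First I would show that the iterates decrease: $\kindopsiter{i+1}(\FG) \eleq \kindopsiter{i}(\FG)$ for all $i$. For $i=0$ this holds because $\kindops(\FG) = \wcharfun{\FF}(\FG)\einf\FG \eleq \FG$. The step case follows because $\kindops$ is monotone, being the composition of the monotone map $\wcharfun{\FF}$ with the monotone map $\,\cdot\einf\FG$.

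The central step is $\wcharfun{\FF}(\FH) \eleq \FH$, which I would prove by case distinction on $k$. If $k=0$, this is exactly the hypothesis $\wcharfun{\FF}(\FG)\eleq\FG$. If $k\geq1$, then $\FH = \wcharfun{\FF}(\kindopsiter{k-1}(\FG))\einf \FG$, so it suffices to show that $\wcharfun{\FF}(\FH)$ lies below both arguments of the minimum. It lies below $\FG$ by hypothesis. For the other argument, the decreasing property gives $\FH = \kindopsiter{k}(\FG) \eleq \kindopsiter{k-1}(\FG)$, and monotonicity of $\wcharfun{\FF}$ turns this into $\wcharfun{\FF}(\FH)\eleq \wcharfun{\FF}(\kindopsiter{k-1}(\FG))$. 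Since $\einf$ is the pointwise greatest lower bound, these two bounds combine to give $\wcharfun{\FF}(\FH)\eleq\FH$.

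I do not expect a real obstacle. The step needing the most care is justifying monotonicity of $\wcharfun{\FF}$ for arbitrary loop bodies $\cc'$, which may contain nested loops and nondeterminism. If monotonicity of $\wpsymbol$ is not already taken from \cite{kaminski_diss}, I would prove it by structural induction on $\cc'$. The nondeterministic-choice case uses monotonicity of $\einf$, and the loop case uses the fact that $\FF\eleq\FF'$ implies $\wcharfun{\FF}\eleq\wcharfun{\FF'}$ pointwise, hence $\lfp\wcharfun{\FF}\eleq\lfp\wcharfun{\FF'}$ by Park induction.
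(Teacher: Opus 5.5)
Your proof is correct. The paper does not reprove this theorem; it imports it from \cite{kind_cav}. Your argument is essentially the standard proof given there: the descending chain $\kindopsiter{i+1}(\FG)\eleq\kindopsiter{i}(\FG)$ and monotonicity of $\wcharfun{\FF}$ make $\kindopsiter{k}(\FG)$ a Park pre-fixed point of $\wcharfun{\FF}$ lying below $\FG$.
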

If the above premise holds, then we say that $\FG$ is $(k+1)$-inductive w.r.t.\ $\cc$ and $\FF$. \Cref{thm:proof_rules:kind} is thus an infinite class of proof rules\sj{You could say the same for Thm 5.3 above, right?}\kb{which one}\sj{I think this is a very old comment of mine, I think the point was this was not the first set of infinite proof rules we introduce.} of increasing strength, i.e, we have a $k$-induction rule for every $k\in\Nats$ and for all such $k$, $k$-inductivity implies $k+1$-inductivity but not vice versa.
\begin{example}
	\label{ex:kind}
	%	Consider the post-expectation $\FF = \mylambda{\vala} 0$ and the loop $\cc$ given by
	%	%
	%	\[
	%		\WHILEDO{\varb \foeq 0}{ \PCHOICE{\ASSIGN{\varb}{1}}{\nicefrac{1}{2}}{\TICK{1}}}~.
	%	\]
	%	%
	%	Now let $\FG =\mylambda{\vala} 1$ and suppose we wish to prove that $\wp{\cc}{\FF} \eleq \FG$, i.e., the expected cost incurred when executing the \emph{unbounded} loop $\cc$ on any initial state is upper-bounded by $1$. 
	Reconsider \Cref{ex:wp_upper_bound_ motivate}.
	We employ \Cref{thm:proof_rules:kind} in an iterative fashion: $\FG$ is \emph{not} $1$-inductive since $\wcharfun{\FF}(\FG) \not\eleq \FG$. However, $\FG$ is $2$-inductive, 
	%i.e., we have $\wcharfun{\FF}(\kindop{\FF}{\FG}(\FG)) = \wcharfun{\FF}(\wcharfun{\FF}(\FG) \einf \FG) \eleq \FG$,
	 so we conclude $\wp{\cc}{\FF} \eleq \FG$.
	\hfill $\triangle$
\end{example}
Notice that both \Cref{thm:proof_rules:iter,thm:proof_rules:kind} only require us to (i) compute $\wpsymbolnostruct$'s of the loop \emph{body} and (ii) checking %\enquote{quantitative entailments} between expectations, i.e.,
 whether a given expectation is state-wise upper-bounded by another expectation. %In this paper, we realize these two tasks by marrying the power of TEDD-based reasoning with SMT solving.
\begin{remark}
	As expected, due to undecidability of $\wp{\cc}{\FF} \eleq \FG$ \cite{hardness1}, the  proof rules are incomplete: Regarding \Cref{thm:proof_rules:iter}.\ref{thm:proof_rules:iter:verify}, there are loops for which no suitable $n$ exists, which is, e.g., the case for the loop from \Cref{ex:kind}. Regarding \Cref{thm:proof_rules:kind}, there are $\cc,\FF,\FG$ such that $\FG$ is not $k$-inductive for any $k$ (cf.\ \cite[Example 2]{kind_cav}). \Cref{thm:proof_rules:iter}.\ref{thm:proof_rules:iter:refute} is, however, \emph{refutation complete} \cite[Corollary 2]{kind_cav}: If $\wp{\cc}{\FF} \not\eleq \FG$, the existence of a suitable \mbox{$n$ witnessing this is guaranteed.} \hfill $\triangle$
\end{remark}

\subsection{Implementing Proof Rules for Loops via TEDDs and SMT Solving}
\label{sec:proof_rules_via_tedds}
With the preceding concepts at hand, we tackle a variation of $\dagger$ above:
\begin{center}
	Given a theory $\theory$, a loop $\cc$, $\sorteureal$-TEDDs $\xadd,\xaddb$, decide \\
	$\forall ~\text{$\theory$-structures $\struct$}\colon \wp{\cc}{\mylambda{\vala}  \sem{\xadd}{(\struct,\vala)}} \eleq \mylambda{\vala} \sem{\xaddb}{(\struct,\vala)}$~?
	%
%	hold?
\end{center}
If the above holds, then we say that \emph{$\xaddb$ upper-bounds $\cc$ w.r.t.\ $\xadd$ modulo $\theory$}. As is standard for many \cite{kind_cav,DBLP:conf/ijcai/BaoTPH023,DBLP:conf/tacas/BatzCJKKM23,DBLP:conf/qest/GretzKM13,DBLP:conf/sas/KatoenMMM10,DBLP:conf/cav/SunFCG23} but not all \cite{DBLP:journals/pacmpl/ChatterjeeGMZ24,DBLP:conf/pldi/WangS0CG21,DBLP:conf/pldi/Wang0GCQS19,DBLP:journals/pacmpl/AvanziniMS23,DBLP:journals/pacmpl/AvanziniMS20} automated probabilistic program verification techniques, we restrict to non-nested loops. Reasoning about nested loops is currently realized by simultaneous loop unrolling~\cite[Sec.~7.2 and App.~B.2]{DBLP:journals/corr/abs-2502-19388} for fixed-point iteration-based reasoning. For induction-based reasoning, nested loops appear rarely in existing benchmark suites and are re-written to a single loop with loop-free body (for now, manually, as has been done in \cite{kind_cav,DBLP:journals/pacmpl/SchroerBKKM23}).  \kb{TODO:discuss}

% Reasoning about nested loops can be realized by simultaneous loop unrolling~\cite[Sec.~7.2 and App.~B.2]{DBLP:journals/corr/abs-2502-19388} for fixed-point iteration-based reasoning, as recalled in \Cref{app:simultaneous_unrolling}; for induction-based reasoning, nested loops are re-written to a single loop with loop-free body. This is standard for many automatic SMT-based approaches \cite{kind_cav,DBLP:conf/ijcai/BaoTPH023,DBLP:conf/tacas/BatzCJKKM23,DBLP:conf/qest/GretzKM13,DBLP:conf/sas/KatoenMMM10} KATOEN,GRETZ

% exception 
% - 2023. Automated Expected Value Analysis of Recursive Programs. Proc. ACM Program. Lang. BibTeX: `DBLP:journals/pacmpl/AvanziniMS23` (added).

%
\begin{example}
	Let $\theory = \theorylaarr$, let $k\in\Nats$, and consider program $\cc$, and TEDDs $\xadd$, $\xaddb$: % given by
	\[
		\cc \quad=\quad \WHILEDO{\vara \leq k}{\quad
			\COMPOSE{
			\PCHOICE{\ASSIGN{\varb}{\varb +\ARRAYREAD{A}{\vara}}}{\nicefrac{1}{2}}{\SKIP}
		}{~
			\ASSIGN{\vara}{\vara+1}
		}\quad
			}~,
	\]
	%
	%Now suppose $\xadd$ and $\xaddb$ are $\sorteureal$-TEDDs representing the case expressions
	%
	\[
		\xswitchfun{\xadd}\eeq
		\begin{cases}
			\varb \switchfuncase\true,
		\end{cases}
		%
		%
		%\quad\text{and}\qquad
		%
		%
		\xswitchfun{\xaddb}\eeq
		\begin{cases}
			\varb + \nicefrac{1}{2}  \cdot(\ARRAYREAD{A}{0} +\ldots +\ARRAYREAD{A}{k})   & \switchfuncase \vara \foeq 0 \wedge \vara \leq k \\
			\vdots \\
			\varb + \nicefrac{1}{2} \cdot  \ARRAYREAD{A}{k}   &\switchfuncase \vara\foneq 0 \wedge \ldots \wedge \vara\foneq k-1\wedge \vara \foeq k \wedge \vara \leq k \\
			\varb & \switchfuncase \vara > k~.
		\end{cases}
	\]
	By verifying that $\xaddb$ upper-bounds $\cc$ w.r.t.\ $\xadd$ modulo $\theorylaarr$, we prove  that we expect $\cc$ to increment $\varb$ by at most $\nicefrac{1}{2}$ times the sum of the values $\ARRAYREAD{A}{x}, \ARRAYREAD{A}{x+1},\ldots, \ARRAYREAD{A}{k}$. %whenever $\vara \leq k$ holds initially, and that $\varb$ is not modified otherwise.
	\hfill $\triangle$
\end{example}
%\subsubsection{Step 1: A Syntactic Weakest Pre-Expectation Transformer Operating on TEDDs}
%

%The above theorem straightforwardly generalizes to computing XADDs representing $\wcharfuniter{\FF}{n}(0)$ or $\wcharfun{\FF}\big( \kindopsiter{n}(\FG)\big)$ whenever both $\FF$ and $\FG$ are represented as XADDs.
%
%Moreover, for any given theory $\theory$, it is clear that we can prune $\xwp{\cc}{\xadd}$ and obtain an equivalent XADD modulo $\theory$.

%\subsubsection{Step 2: Deciding Quantitative Entailments for ${\sorteureal}$-TEDDs}
%
Now recall that, given a loop $\cc = \WHILEDO{\forma}{\cc'}$ and a post-expectation $\FF,\FG$, $\wcharfun{\FF}$ denotes the characteristic function of $\cc$ w.r.t.\ $\FF$ and  $\kindops(\FH)$ denotes the $k$-induction operator for $\cc$ w.r.t.\ $\FF$ and $\FG$. Using $\xwpsymbol$, we define algorithmic variants of these functionals operating on TEDDs\footnote{Here $\xadd_\forma$ is the atomic $\sortbool$-TEDD obtained from the quantifier-free formula $\forma$ by recursing on  $\forma$ using \ApplyAlg.}:
\begin{align*}
	\xwcharfun{\xadd}(\xaddc) 
	\eeq&
	\apply(\switchITEsymbol, \xadd_\forma, \xwp{\cc'}{\xaddc}, \xadd)  \qquad\text{and}\qquad
	\xkindops(\xaddc) \eeq  \minfunc\big( \xwcharfun{\xadd}(\xaddc), \xaddb \big)~.
\end{align*}
This also enables us to compute $n$-th unfoldings $\xwcharfuniter{\xadd}{n}(\expzero)$ (where $\expzero$ is the $1$-node $\sorteureal$-TEDD labeled by $0$) and $\xkindopsiter{n}(\xaddb)$, as required by \Cref{thm:proof_rules:iter,thm:proof_rules:kind}, respectively.

\begin{algorithm}[t]
	\caption{Checking Quantitative Entailments for $\sorteureal$-TEDDs Modulo a Theory}
	\label{alg:entails}
	%\SetKwFunction{unsat}{UNSAT}
	%\setcounter{AlgoLine}{0}
	%\newcommand{\rep}{\text{result}^{+}}
	%\newcommand{\rem}{\text{result}^{-}}
	\Fn{\nllabel{alg:entails:def}\entails{$\theory$, $\xadd_1$, $\xadd_2$, $\forma$}}{%
		\KwIn{%
			A theory $\theory$, $\sorteureal$-TEDDs $\xadd_1,\xadd_2$, 
			formula $\forma\in\fo$ (initially $\true$).
		}
		\KwResult{%
			$\true$ iff $\forall ~\text{$\theory$-structures $\struct$}\colon \mylambda{\vala} \sem{\xadd_1}{(\struct,\vala)} \eleq \mylambda{\vala} \sem{\xadd_2}{(\struct,\vala)}$ (for $\forma=\true$).%, and\newline
		}
		\BlankLine
		\If{\nllabel{alg:entails:terminal_test}$\xadd_1$ is terminal and $\xadd_2$ is terminal}{%
			$\formb \leftarrow \forma \wedge \xlabt(\xadd_1) > \xlabt(\xadd_2)$\;\nllabel{alg:entails:counterexample_formula}
			\If{\nllabel{alg:entails:cache_lookup}there is a cache entry ``$\text{result}$'' for $\formb$}{
				\KwRet{$\text{result}$}\;\nllabel{alg:entails:cache_return}
			}\nllabel{alg:entails:cache_end}
			$\text{result} \leftarrow \neg\sats{\formb}$\;\nllabel{alg:entails:smt_check}
			Store ``$\text{result}$'' in the cache for $\formb$\;\nllabel{alg:entails:cache_store}
			\KwRet{$\text{result}$}\;\nllabel{alg:entails:base}
		}\nllabel{alg:entails:terminal_end}
		$\formb \leftarrow \min_{\varord} \{\xlabn(\xadd_i) ~|~ \xadd_i~\text{is non-terminal and}~i \in \{1,2\} \}$\;\nllabel{alg:entails:top_label} 
		$\forma^+ \leftarrow \forma\wedge \formb$; \quad $\forma^- \leftarrow \forma\wedge \neg\formb$\;\nllabel{alg:entails:contexts}
		\ForEach{\nllabel{alg:entails:traverse_start}$\xadd_i$ \In $\xadd_1, \xadd_2$}{%
			\eIf{\nllabel{alg:entails:matches_top}$\xadd_i$ is non-terminal and $\xlabn(\xadd_i)=\formb$}{%
				$\xaddb_i \leftarrow \succtxadd{\xadd_{i}}$;\quad
				$\xaddc_i \leftarrow \succfxadd{\xadd_{i}}$\;\nllabel{alg:entails:take_successors}
			}{\nllabel{alg:entails:does_not_match_top}
				$\xaddb_i \leftarrow \xadd_{i}\,$;\quad
				$\xaddc_i \leftarrow \xadd_{i}$\;\nllabel{alg:entails:keep_inputs}
			}\nllabel{alg:entails:traverse_case_end}
		}\nllabel{alg:entails:traverse_end}
		\KwRet{$\entails(\theory, \xaddb_1, \xaddb_2, \forma^+) \wedge \entails(\theory, \xaddc_1, \xaddc_2, \forma^-)$};\nllabel{alg:entails:recursive_return}
	}
\end{algorithm}

Given these algorithmic operators, what remains to be done is to check \enquote{quantitative $\eleq$-entailments} between expectations, as in \Cref{alg:entails}, which assumes  an SMT solver for checking the satisfiability of $\fo$ formulae modulo $\theory$.
\begin{theorem}
\label{thm:entails_soundness_main}
Given a theory $\theory$ and $\sorteureal$-TEDDs $\xadd_1,\xadd_2$, we have
\[
    \entails{$\theory$, $\xadd_1$, $\xadd_2$} = \true
    \quad\text{iff}\quad
	\forall ~\text{$\theory$-structures $\struct$}\colon \mylambda{\vala} \sem{\xadd_1}{(\struct,\vala)} \eleq \mylambda{\vala} \sem{\xadd_2}{(\struct,\vala)}~,
\]
and \Cref{alg:entails} performs at most $2^{m_1+m_2+1}-1$ recursive invocations, where $m_1$ (resp. $m_2$) is the maximum number of edges of a directed path from $\xadd_1$'s (resp.\ $\xadd_2$'s) root to a terminal node.
\end{theorem}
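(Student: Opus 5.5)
We prove a strengthened invariant by well-founded induction on the pair of sub-TEDDs, ordered by the sum of the heights $h(\xadd_1)+h(\xadd_2)$, where $h$ is the maximum number of edges on a root-to-terminal path. The invariant is: for every context $\forma\in\fo$, $\entails(\theory,\xadd_1,\xadd_2,\forma)=\true$ iff for all $\theory$-interpretations $\inta$ with $\inta\models\forma$ we have $\sem{\xadd_1}{\inta}\leq\sem{\xadd_2}{\inta}$. Instantiating $\forma=\true$ and quantifying over all valuations of each $\theory$-structure gives the stated equivalence, since the state-wise order $\eleq$ on $\Es$ is pointwise.

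\emph{Base case.} Both TEDDs are terminal, so $\sem{\xadd_i}{\inta}=\sem{\xlabt(\xadd_i)}{\inta}$. The implication fails iff some $\theory$-interpretation satisfies $\forma\wedge\xlabt(\xadd_1)>\xlabt(\xadd_2)$, which is exactly $\sats{\formb}$ for the formula built in l.~\ref{alg:entails:counterexample_formula}. Hence the returned value $\neg\sats{\formb}$ is correct. Here we use that $<$ is total on $\PosRealsInf$, so $\neg(a\leq b)$ iff $a>b$. The cache is harmless because it is keyed by $\formb$ itself, and the answer is a function of $\formb$ alone.

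\emph{Inductive step.} Let $\formb$ be the $\varord$-least root label. For each $i$, we claim $\sem{\xadd_i}{\inta}=\sem{\xaddb_i}{\inta}$ whenever $\inta\models\formb$, and $\sem{\xadd_i}{\inta}=\sem{\xaddc_i}{\inta}$ whenever $\inta\not\models\formb$. This is immediate from \Cref{def:switchfun_of_xadd} and the semantics of $\switchfunop{\switchITEsymbol}$ if the root of $\xadd_i$ is labelled $\formb$, and trivial otherwise, since then $\xaddb_i=\xaddc_i=\xadd_i$.

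The $\theory$-interpretations satisfying $\forma$ split into those satisfying $\forma^+$ and those satisfying $\forma^-$. So the pointwise inequality under $\forma$ holds iff it holds for $(\xaddb_1,\xaddb_2)$ under $\forma^+$ and for $(\xaddc_1,\xaddc_2)$ under $\forma^-$. Both recursive pairs have strictly smaller height sum: at least one $\xadd_i$ is non-terminal with root label $\formb$ (it is the minimum), so it is replaced by a child, while the other $\xadd_i$ is replaced either by a child or by itself. The induction hypothesis then yields the claim for the conjunction returned in l.~\ref{alg:entails:recursive_return}. Termination follows from the same measure.

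\emph{Counting invocations.} Let $T(s)$ be the maximal number of invocations on inputs with height sum $s=h(\xadd_1)+h(\xadd_2)$. In the non-base case each call makes exactly two recursive calls whose height sum is at most $s-1$, so $T(s)\leq 1+2T(s-1)$. When $s=0$ both inputs are terminal, so $T(0)=1$; more generally, any call with both inputs terminal makes no recursive call. Solving the recurrence gives $T(s)\leq 2^{s+1}-1$, and with $s\leq m_1+m_2$ this is the claimed bound $2^{m_1+m_2+1}-1$.

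The \textbf{main obstacle} is the claim that the height sum strictly decreases. It relies on the root of a non-terminal TEDD carrying its $\varord$-minimal label, which follows from the compatibility of $\xlabn$ with $\varord$, so that taking the minimum in l.~\ref{alg:entails:top_label} always descends at least one input. We also need $h$ of a child to be at most $h-1$, which holds by the definition of $h$ as a maximum over root-to-terminal paths.
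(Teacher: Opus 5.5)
Your proposal is correct and follows the paper's proof essentially step for step: the same context-strengthened invariant, induction on the tuple rank (your height sum), the split of the $\forma$-models into $\forma^+$ and $\forma^-$, and a binary recursion tree of depth at most $m_1+m_2$ (your recurrence $T(s)\leq 1+2T(s-1)$ goes through by strong induction, since the recursive calls have height sum at most $s-1$). One small remark on your last paragraph: the descent does not need compatibility of $\xlabn$ with $\varord$, because $\formb$ is by construction the root label of some non-terminal input, which is exactly the reason you already give in the inductive step.
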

\begin{proof}
See \Cref{app:proof_entails}.
\end{proof}
\Cref{alg:entails} traverses $\xadd_1$ and $\xadd_2$ in a $\varord$-preserving manner, essentially enumerating all pairs of paths from the roots to some terminal node while constructing the paired path condition $\forma$ (cf.\ \Cref{def:pruned_tedd}) in ll.\ \ref{alg:entails:top_label}-\ref{alg:entails:recursive_return}. When encountering terminal nodes (ll.\ \ref{alg:entails:counterexample_formula}-\ref{alg:entails:base}), we check whether a counter-example-to-entailment exists by checking whether $\formb \leftarrow \forma \wedge \xlabt(\xadd_1) > \xlabt(\xadd_2)$ is satisfiable modulo $\theory$ in l.\ \ref{alg:entails:smt_check}. If so, \entails returns $\false$. If such a counterexample is never encountered, then the entailment holds and \entails returns $\true$. Satisfiability results are cached.

Finally, this enables us to implement the proof rules from \Cref{thm:proof_rules:iter,thm:proof_rules:kind} as follows:	
\begin{theorem}
	\label{thm:proof_rules_xadds}
	Let $\cc$ be a loop, let $\xadd,\xaddb$ be $\sorteureal$-TEDDs, let $\theory$ be a theory, and $n\in\Nats$. Then:
	\begin{enumerate}
		\item If $\entails(\theory, \xwcharfun{\xadd}(\xkindopsiter{n}(\xaddb)), \xaddb)$, then $\xaddb$ upper-bounds $\cc$ w.r.t.\ $\xadd$ modulo $\theory$.
		\item If $\neg \entails(\theory, \xwcharfuniter{\xadd}{n}(\expzero), \xaddb)$, then $\xaddb$ does \underline{not} upper-bound $\cc$ w.r.t.\ $\xadd$ modulo $\theory$.
		\item If $\entails(\theory, \xwcharfuniter{\xadd}{n+1}(\expzero), \xwcharfuniter{\xadd}{n}(\expzero))$ and 
		$\entails(\theory,  \xwcharfuniter{\xadd}{n}(\expzero), \xaddb)$, then $\xaddb$ upper-bounds $\cc$ w.r.t.\ $\xadd$ modulo $\theory$.
	\end{enumerate}
\end{theorem}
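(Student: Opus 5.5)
The plan is to reduce each item to the corresponding semantic proof rule (\Cref{thm:proof_rules:kind} for (1), and \Cref{thm:proof_rules:iter} for (2) and (3)). The reduction needs two ingredients: a \emph{commutation lemma} between the TEDD-level operators and their semantic counterparts, and the correctness of \EntailsAlg from \Cref{thm:entails_soundness_main}. Throughout, fix an arbitrary $\theory$-structure $\struct$. Write $\FF_\struct = \mylambda{\vala}\sem{\xadd}{(\struct,\vala)}$ and $\FG_\struct = \mylambda{\vala}\sem{\xaddb}{(\struct,\vala)}$, and let $\wcharfun{\FF_\struct}$ and $\kindop{\FF_\struct}{\FG_\struct}$ be the characteristic function and $k$-induction operator for $\cc=\WHILEDO{\forma}{\cc'}$ in $\struct$.

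\textbf{Commutation lemma.} For every $\sorteureal$-TEDD $\xaddc$,
\[
\mylambda{\vala}\sem{\xwcharfun{\xadd}(\xaddc)}{(\struct,\vala)} = \wcharfun{\FF_\struct}\big(\mylambda{\vala}\sem{\xaddc}{(\struct,\vala)}\big),
\]
and
\[
\mylambda{\vala}\sem{\xkindops(\xaddc)}{(\struct,\vala)} = \kindop{\FF_\struct}{\FG_\struct}\big(\mylambda{\vala}\sem{\xaddc}{(\struct,\vala)}\big).
\]
For the first identity, I unfold $\xwcharfun{\xadd}(\xaddc)=\apply(\switchITEsymbol,\xadd_\forma,\xwp{\cc'}{\xaddc},\xadd)$. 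The soundness of \ApplyAlg (\Cref{thm:apply_soundness_main}) together with the semantics of $\switchfunop{\switchITEsymbol}$ gives a case split on $(\struct,\vala)\models\forma$. This uses that $\xadd_\forma$ evaluates to $\true$ exactly when $\forma$ holds, which follows by structural induction on the quantifier-free $\forma$ using \ApplyAlg soundness for the Boolean connectives. The loop body is handled by \Cref{thm:xwp_sound} applied to the loop-free $\cc'$. For the second identity, I add the soundness of \MinimumAlg (\Cref{thm:minimum_soundness_main}). A straightforward induction on $n$ then gives the iterated versions:
\[
\mylambda{\vala}\sem{\xwcharfuniter{\xadd}{n}(\expzero)}{(\struct,\vala)}=\wcharfuniter{\FF_\struct}{n}(\expzero)
\qquad\text{and}\qquad
\mylambda{\vala}\sem{\xkindopsiter{n}(\xaddb)}{(\struct,\vala)}=\kindop{\FF_\struct}{\FG_\struct}^{n}(\FG_\struct),
\]
where the $\expzero$ on the left is the $1$-node TEDD labeled $0$.

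\textbf{Items (1) and (3).} By \Cref{thm:entails_soundness_main}, the premise of (1) means that for every $\theory$-structure $\struct$ the expectation represented by $\xwcharfun{\xadd}(\xkindopsiter{n}(\xaddb))$ is $\eleq \FG_\struct$. By the commutation lemma, this is $\wcharfun{\FF_\struct}(\kindop{\FF_\struct}{\FG_\struct}^{n}(\FG_\struct))\eleq\FG_\struct$. \Cref{thm:proof_rules:kind}, applied in the fixed structure $\struct$, yields $\wp{\cc}{\FF_\struct}\eleq\FG_\struct$. Since $\struct$ was arbitrary, $\xaddb$ upper-bounds $\cc$ w.r.t.\ $\xadd$ modulo $\theory$. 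Item (3) follows identically: both \EntailsAlg premises translate, per structure, into the two hypotheses of \Cref{thm:proof_rules:iter}.\ref{thm:proof_rules:iter:verify}.

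\textbf{Item (2).} This needs more care, because the negation must be pushed through the quantifier over structures. By \Cref{thm:entails_soundness_main}, $\neg\entails(\ldots)$ means that there \emph{exists} a $\theory$-structure $\struct$ such that $\wcharfuniter{\FF_\struct}{n}(\expzero)\not\eleq\FG_\struct$. For this particular $\struct$, \Cref{thm:proof_rules:iter}.\ref{thm:proof_rules:iter:refute} gives $\wp{\cc}{\FF_\struct}\not\eleq\FG_\struct$. That single structure witnesses the failure of the universal statement defining ``upper-bounds modulo $\theory$''.

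\textbf{Main obstacle.} I expect the main difficulty to be the commutation lemma, specifically justifying that the semantic results hold uniformly in every structure even though \ApplyAlg, \MinimumAlg and $\xwpsymbol$ are stated for all interpretations. I also need to confirm that the semantic theorems of \Cref{sec:wp:loops}, which are stated for a fixed $\struct$, may be instantiated at each $\theory$-structure separately. This is routine but requires care with the ITE encoding of the loop guard.
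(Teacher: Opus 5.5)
Your proposal is correct and follows the argument the paper intends. That argument transfers the TEDD-level operators to their semantic counterparts via \Cref{thm:xwp_sound} (together with soundness of \ApplyAlg and \MinimumAlg), uses the correctness of \EntailsAlg from \Cref{thm:entails_soundness_main}, and then applies \Cref{thm:proof_rules:iter,thm:proof_rules:kind} separately in every $\theory$-structure. Your explicit commutation lemma and your handling of the negation through the quantifier over structures in item (2) spell out details the paper leaves implicit.
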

%
%\begin{proof}
%	By (i) Thm.\ \ref{thm:xwp_sound} and soundness of $\entails$, and (ii) Thms.\ \mbox{\ref{thm:proof_rules:iter},\ref{thm:proof_rules:kind} on all $\theory$-structures $\struct$.}
%\end{proof}
%
%
\begin{remark}
	Since the checks in \Cref{thm:proof_rules_xadds} involve the theory $\theory$, we can invoke \PruneAlg from \Cref{alg:prune} on all TEDDs encountered during the involved computations while preserving the claims from \Cref{thm:proof_rules_xadds}. We discuss and evaluate our pruning strategies in \Cref{sec:implementation}.
	\hfill $\triangle$
\end{remark}

%\section{Experiments}
\section{Implementation}
\label{sec:implementation}
We have implemented a prototypical probabilistic program verifier, called \ddsolve{} (cf.\ the supplementary material), that consists of a highly efficient backend library and a flexible frontend.

\smallskip\noindent\emph{Backend library.}
We have implemented TEDDs on top of the state-of-the-art BDD library \sylvan{}~\cite{DBLP:journals/sttt/DijkP17} in C++. \sylvan{} supports multi-threaded BDD operations using a work-stealing framework called \lace{}~\cite{DBLP:conf/europar/DijkP14}. Operations on TEDDs, including pruning, are also running in parallel, relying on \lace{}. 
Pruning (\Cref{sec:xadds:pruning}) is implemented modularly and can be instantiated with different SMT solvers; the current implementation supports both Z3 and CVC5 and exploits the abilities of incremental SMT solving to alleviate the computational effort of pruning. As variable reordering is expensive in  \sylvan, we see the benefits of implementing substitution via ITEs (\Cref{sec:xadds:substitution}). 
The leafs and predicates in TEDDs can be represented in various ways. The library currently comes with two representations: 
(1)~Generic expressions using the data structures from the SMT solver CVC5.
(2)~Multivariate affine expressions with a dedicated list-like data structure and operations.  
Both data structures support rational numbers and represent them exactly. 

\smallskip\noindent\emph{Frontend.}
The frontend consists of a python-based parser for a simple probabilistic programming language, annotated with verification queries, resembling the front-end of  \caesar~\cite{DBLP:journals/pacmpl/SchroerBKKM23}. It supports multiple weakest preexpectation style calculi with minimizing and maximizing non-determinism including: weakest preexpectation (wp, \Cref{sec:wp}), weakest liberal preexpectation (wlp, \cite{mciver_morgan}), conditional weakest preexpectation (cwp,~\cite{DBLP:journals/toplas/OlmedoGJKKM18}), and expected runtime (ert,~\cite{DBLP:journals/jacm/KaminskiKMO18}).
The flexible frontend also supports parsing probabilistic programs from (a large fragment of) the \dice-language~\cite{DBLP:journals/pacmpl/HoltzenBM20}.

\smallskip\noindent\emph{Syntactical equivalence.}
We support theory-agnostic syntactical equivalence checks on TEDDs. Somewhat surprising, such theory-agnostic syntactical checks are adequate: In the experiments (Sec.~\ref{sec:experiments}), the syntactical checks never fail to identify two TEDDs that are semantically identical, while it is easy to construct syntactically different TEDDs expressing the same expectation.

\smallskip\noindent\emph{Normalization, pruning, caching, ordering.}
Data structures for terms normalize expressions, which may lead to trivial pruning (e.g., when a predicate $x > x+1$ is normalized to be \texttt{false}).  
By default, we apply SMT-based pruning after operations that (empirically) lead to potential for pruning, i.e., while sampling from uniform distributions, after every fixpoint or k-induction iteration, and after applying the k-induction operator. 
SMT-based pruning can also be disabled, see Sec.~\ref{sec:experiments:ablation}. 
While we have not discussed the use of an operation cache, we cache operations on TEDD nodes. 
Finally, the effect of variable orderings in BDDs is well-documented, but we currently use an ordering based on the occurrence in the AST of the program, which performs empirically quite well. 

\section{Experiments}
\label{sec:experiments}
We use \ddsolve{} (\Cref{sec:implementation}) to empirically evaluate the merits of TEDD-based deductive probabilistic program analysis. Therefore, we ask the following research questions:
\begin{enumerate}
    \item[\textbf{RQ1}] How does a (more explicit) TEDD-based representation of expectations compare with a (more implicit) SMT-based representation in the probabilistic program verifier \caesar~\cite{DBLP:journals/pacmpl/SchroerBKKM23}?
	\item[\textbf{RQ2}] How does an TEDD-based representation compare against an ADD-based representation in the finite-state probabilistic model checker \storm~\cite{DBLP:journals/sttt/HenselJKQV22}?
	\item[\textbf{RQ3}] What is the effect of respectively (a)~SMT-based exhaustive pruning, (b)~multi-threading support, and (c)~dedicated linear term data structures, during verification?
\end{enumerate}

\smallskip\noindent\emph{Setup.}
We run \ddsolve{} and the baselines discussed below in a container on a AMD Ryzen TRP 5965WX with 24 physical  (and identical) cores and 256 GB RAM. We sometimes limit cores by limiting the threads, as indicated below. Unless specified differently, we measure wall-clock time.

%\sjinline{Should we mention why we do not compare against explicit-state storm, dice, psi, and other tools?}

\subsection{Comparison with \caesar{} (RQ1)}
%\paragraph{Setup.}
To compare with SMT-based verification, we compare against (single-threaded) \caesar. 
 %\caesar{} always runs on a single thread. 
The verification tasks contain the two categories `fixpoint' and `induction', matching the proof rules in Section~\ref{sec:wp:loops}. The  benchmarks also cover ert- and wlp-calculi, see Sec.~\ref{sec:implementation}. %Throughout this section, we assume that $X$ and $Y$ are expectations, $C$ a program, and $n$ an index. 
\caesar{} and \ddsolve{} support both proof rules natively with similar input formats. Note that the approach to the fixpoint benchmarks is slightly different, see below. We use benchmarks from the \caesar{} verification suite as well as some new benchmarks, which we clarify below. %Below, we consider different types of programs. % Piecewise affine integer programs and with piecewise affine expectations, array programs, nonlinear integer programs and nondeterminsitic programs.

\begin{figure}
\begin{subfigure}[b]{0.45\textwidth}
	\centering
\input{figures/fig1.pgf}
\subcaption{Every dot represents a benchmark.}
\label{fig:results:linear:perf:caesar}
\end{subfigure}
\begin{subfigure}[b]{0.45\textwidth}
	\centering
\input{figures/log_cactusplot_complete.pgf}	
\subcaption{Cactus-plot: $x$ instances solved in $y$ time. }
\label{fig:results:linear:cactus:caesar}
\end{subfigure}
\vspace{-0.5em}
\caption{Performance comparison with \caesar{} on the complete piecewise affine benchmark suite.}
\label{fig:vscaesar}
\end{figure}
\subsubsection{Piecewise affine benchmarks.}
\label{sec:exp:rq1:pwa}
We study verification tasks where expectations $X, Y$ and the expressions in program $C$ are piecewise affine. These are the majority of existing benchmarks. In \ddsolve, we use the dedicated data structures for such expectations. We use benchmarks from \caesar{}~\cite{DBLP:journals/pacmpl/SchroerBKKM23} as well as parameterized benchmarks (where we vary, e.g., $n$), inspired by the benchmarks from~\cite{DBLP:journals/pacmpl/SchroerBKKM23,DBLP:conf/tacas/BatzCJKKM23}. The details of the benchmarks are given in \Cref{app:benchmark_programs}.

\smallskip\noindent\emph{Performance.}
In Fig.~\ref{fig:results:linear:perf:caesar}, we compare the run time of \caesar{} and \ddsolve{} (single-threaded). On the original benchmarks, \caesar{} solves instances within 0.1s, which is below the startup time of \ddsolve. However, on the eight remaining benchmarks, two are in favour of \caesar{}, while four are in favour of \ddsolve{}. The two instances where \caesar{} outperforms \ddsolve{} are induction verification tasks where a few pruning operations cause a massive slow down. 
On the scalable benchmarks, we see how \ddsolve{} clearly outperforms \caesar{} for larger instances. The effect is even more pronounced in the cactus plot in Fig.~\ref{fig:results:linear:cactus:caesar}. In that plot, we also see how using 4 or more threads allows \ddsolve{} to solve over twenty additional benchmarks.

\smallskip\noindent\emph{Scalability.}
To investigate scalability, we considered two benchmarks from \cite{DBLP:journals/pacmpl/SchroerBKKM23} with a scalable parameter. Rabin describes a mutual exclusion protocol with $n$ participants~\cite{DBLP:conf/podc/KushilevitzR92}, BRP describes network transmission of a file with $n$ packets~\cite{DBLP:conf/papm/DArgenioJJL01}. 
Essentially, the varying parameter influences the number of iterations that are necessary to compute the fixpoints. 
Fig.~\ref{fig:results:scale:caesar} top row shows that \caesar{} can only handle small problem instances, while \ddsolve{} scales significantly much better.

%\begin{figure}
%	\centering
%\input{figures/fig2.pgf}
%\vspace{-1em}
%\caption{Scalability comparison with \caesar{} on two benchmarks using piecewise linear expectations.}
%\label{fig:results:linear:scale:caesar}
%\end{figure}

\begin{figure}
	\centering
\input{figures/scaling_caesar.pgf}
\vspace{-1em}
\caption{Scalability comparison with \caesar{} on two piecewise affine expectations. The benchmarks on top are without arrays, the  benchmarks in the bottom row use arrays.}
\label{fig:results:scale:caesar}
\end{figure}

\smallskip\noindent\emph{Differences for fixpoints.}
 While \caesar{} \emph{requires} an $n \in \Nats$ such that  $\Phi^n(\dots)$ is a fixpoint,  and compares $\Phi^n(\dots)$  against a bound, \ddsolve{} explicitly and iteratively computes the  fixpoint  $\Phi^n(\dots)$ by computing $\Phi^m(\dots)$ for all $m \leq n$ but using a syntactical equivalence check (\Cref{sec:implementation}). This is a two-sided sword: Computing the expectations for every step is strictly more work, but has the potential to detect a fixpoint already after $m$ steps. In the benchmarks, we use the smallest $n$ such that $\Phi^n(\dots)$ is a fixpoint. This selection of $n$ is therefore slightly beneficial for \caesar{}.
 
 %As \caesar{} does not iteratively compute $\Phi^n$, changing \caesar{} to mimick the behavior of \ddsolve{} would yield a tremendous performance penalty, as witnessed by a small experiment, see Appendix~\ref{app:something}\sjinline{TODO for daniel}. 
 
\subsubsection{Array programs}
We study programs with arrays using the built-in support for array logic in SMT solvers. The benchmark families are given in \Cref{app:benchmark_programs}. We present some results in \Cref{fig:results:scale:caesar} (bottom): For most programs, \ddsolve{} clearly outperforms \caesar{}, with a notable exception for the \texttt{RShift} program, where \ddsolve{} does not manage to prune the TEDD. 

\subsubsection{Nonlinear expectations, conditioning, nondeterministic programs}
Adding nondeterminism to programs does not notably change the performance comparison, see also \Cref{fig:results:nondet:scale:caesar} in \Cref{app:additional_benchmark_results}. This is not surprising: TEDDs already supported minimization and minimization is also used to apply the k-induction operator. 
Similar observations hold for conditioning (\Cref{fig:results:cond:scale:caesar}).
The support for nonlinear expectations works, yet the cost of some pruning operations is often prohibitive. In fact, like for piecewise affine expectations, the total time by \ddsolve{} is often dominated by a few expensive pruning operations. With non-linear expectations, these operations are handled by an SMT-solver for quantifier-free nonlinear integer arithmetic (QF\_NIA), which have unpredictable timings.

\subsection{Comparison with \storm{} (RQ2)}
\begin{figure}
	\centering
\input{figures/fig3.pgf}
\vspace{-1em}
\caption{Scalability comparison with \storm{} (using ADDs) on three benchmark families.}
\label{fig:results:linear:scale:storm}
\end{figure}
We compare the verification of probabilistic programs using TEDDs and ADDs. ADDs are fundamentally restricted to finite state space, but benefit from a canonical representation (for a fixed variable ordering) that is not present in TEDDs.\kb{maybe make that clear earlier} We compare against the  probabilistic model checker \storm{}. We configure \storm{} to use its \sylvan-based ADD engine, including the same support for multi-threading. We compare \ddsolve{} on programs and call  \storm{} on  manually translated PRISM files following the principled recipe as in~\cite{DBLP:conf/tacas/BatzCJKKM23}. The programs are given in \Cref{app:benchmark_programs}.

The results are given in Fig.~\ref{fig:results:linear:scale:storm}. 
We observe that on \texttt{Grid} and \texttt{UniformGridWalk}, \ddsolve{} with sufficient number of threads clearly outperforms \storm{} with the same number of threads. On \texttt{GeoGrid}, the situation is less clear, as  \ddsolve{} suffers from a performance penalty going from $4$ to $24$ threads. 
Further, the benefit of using multiple threads is completely different between both tools and over benchmarks, despite using same engine. The effect can easily be explained using the size of the different DDs.

\begin{figure}
	\begin{subfigure}[b]{0.3\textwidth}
		\centering
		\input{figures/fig4.pgf}
		\subcaption{Run time with(out) pruning. Benchmarks split by \textcolor{blue}{\texttt{fixpoints}} and \textcolor{orange}{\texttt{induction}} tasks.  }
		\label{fig:results:pruning}
	\end{subfigure}\quad
	\begin{subfigure}[b]{0.3\textwidth}
		\centering
		\input{figures/fig5.pgf}
		\subcaption{ Relative speed up (y-axis) on benchmarks (x-axis, sorted by run time) with \textcolor{blue!50!red}{4} or \textcolor{red!40!black}{24} threads. }
		\label{fig:results:threading}
	\end{subfigure}\quad
	\begin{subfigure}[b]{0.3\textwidth}
		\centering
		\input{figures/fig7.pgf}
		\subcaption{Using or not using dedicated term data structures for piecewise linear expectations.}\label{fig:results:specialdatastructures}
		\label{fig:results:terms}
	\end{subfigure}
	\caption{Ablation study (RQ3): The influence of pruning, multithreading, and dedicated data structures.}
\end{figure}

\subsection{Ablation study (RQ3)}
\label{sec:experiments:ablation}
To answer this RQ, we compare configurations of \ddsolve, with short but dedicated experiments. 

\smallskip\noindent\emph{RQ3(a): Pruning.} While predicates and leafs are normalized and thereby yield trivial pruning, SMT-based pruning is optional. We study the impact of this pruning in \Cref{fig:results:pruning}. 
From this data, we observe that on our benchmarks, SMT-based pruning is essential when applying k-induction, while on fixpoint computations, the overhead often exceeds the benefits somewhat. We conjecture that for k-induction, the applications of meets provides a lot of opportunity for (essential) pruning.
%
%\begin{figure}
%	\centering
%\input{figures/fig4.pgf}
%\caption{Every dot represents a benchmark from our complete benchmark suite. We plot the run time with pruning enabled versus pruning disabled. We split the benchmarks in two categories: \texttt{fixpoints} and \texttt{induction} based on the verification task.  }
%\label{fig:results:pruning}	
%\end{figure}

\smallskip\noindent\emph{RQ3(b): Multithreading.}
\ddsolve{} comes with support for multithreading. We compare a fixed configuration with a varying number of threads (\Cref{fig:results:threading}). Beyond the inherent overhead, the work-stealing framework interferes with the incrementality of the SMT solvers. Nevertheless, we see that for 4 threads, we can achieve superlinear speed-ups on the harder instances, and that with 24 threads, we often reach speed-ups between 5x and 16x.

\smallskip\noindent\emph{RQ3(c): Tailored linear terms.}
For piecewise affine benchmarks (as in \Cref{sec:exp:rq1:pwa}), we can use specialized data structures for affine terms and predicates \emph{or} use the generic SMT-based expressions. We study the impact of the specialized data structures on the \caesar{}-benchmarks, using one thread. The results are given in \Cref{fig:results:specialdatastructures}: Many benchmarks are sufficiently small that it does not matter, but for larger benchmarks, it yields a difference of orders of magnitude. The exception are two instances where the SMT-based pruning dominates the run time.

%- ⁠How do XAADDs compare with ADDs, which are used primarily in (finite-state) probabilistic model checking? To answer this question, we compare DdSolve with the state-of-the-art ADD-based model checker Storm, using probabilistic model checking benchmarks. 

\section{Conclusion}
In this paper, we have shown that the suitable representation of expectations is key for scalable deductive verification of probabilistic programs: Our new tool \ddsolve{} significantly outperforms the state of the art tool \caesar.  We consider the best of both worlds: DDs to exploit regularity and SMT solving for pruning and entailment. It is a significant step forward. Nevertheless, (engineering) work regarding pruning and variable orderings, especially for expensive SMT theories, remains open, as is the question to characterize where theory-unaware equivalence checks are sufficient. 
Beyond the proof rules in this paper, a lot of work focusses on inductive synthesis~\cite{DBLP:conf/cav/ChenHWZ15,DBLP:conf/tacas/BatzCJKKM23,DBLP:conf/aaai/ZikelicLHC23} or abstraction-refinement~\cite{DBLP:journals/corr/abs-2508-12344}  and we believe that TEDDs can be integrated in those. It would also be interesting to apply TEDDs to higher-order probabilistic WP, as in \cite{watanabe2025denotationalproductconstructiontemporal}, and for inference tasks beyond verification, as in tools like PSI~\cite{DBLP:conf/cav/GehrMV16}, Dice~\cite{DBLP:journals/pacmpl/HoltzenBM20} or Roulette~\cite{DBLP:journals/pacmpl/MoyCLMH25}.

%\newpage
%\printbibliography
\bibliographystyle{plainnat}
\bibliography{literature}

@article{DBLP:journals/pacmpl/SchroerBKKM23,
  author       = {Philipp Schr{\"{o}}er and
                  Kevin Batz and
                  Benjamin Lucien Kaminski and
                  Joost{-}Pieter Katoen and
                  Christoph Matheja},
  title        = {A Deductive Verification Infrastructure for Probabilistic Programs},
  journal      = {Proc. {ACM} Program. Lang.},
  volume       = {7},
  number       = {{OOPSLA2}},
  pages        = {2052--2082},
  year         = {2023}
}

@article{DBLP:journals/corr/abs-2502-19388,
  author       = {Kevin Batz and
                  Joost{-}Pieter Katoen and
                  Francesca Randone and
                  Tobias Winkler},
  title        = {Foundations for Deductive Verification of Continuous Probabilistic
                  Programs: From Lebesgue to Riemann and Back},
  journal      = {Proc. {ACM} Program. Lang.},
  volume       = {9},
  number       = {{OOPSLA1}},
  pages        = {421--448},
  year         = {2025}
}

@incollection{DBLP:series/faia/BarrettSST21,
  author       = {Clark W. Barrett and
                  Roberto Sebastiani and
                  Sanjit A. Seshia and
                  Cesare Tinelli},
  title        = {Satisfiability Modulo Theories},
  booktitle    = {Handbook of Satisfiability},
  series       = {Frontiers in Artificial Intelligence and Applications},
  pages        = {1267--1329},
  publisher    = {{IOS} Press},
  year         = {2021}
}

@inproceedings{DBLP:conf/qest/GretzKM13,
  author       = {Friedrich Gretz and
                  Joost{-}Pieter Katoen and
                  Annabelle McIver},
  editor       = {Kaustubh R. Joshi and
                  Markus Siegle and
                  Mari{\"{e}}lle Stoelinga and
                  Pedro R. D'Argenio},
  title        = {Prinsys - On a Quest for Probabilistic Loop Invariants},
  booktitle    = {Quantitative Evaluation of Systems - 10th International Conference,
                  {QEST} 2013, Buenos Aires, Argentina, August 27-30, 2013. Proceedings},
  series       = {Lecture Notes in Computer Science},
  pages        = {193--208},
  publisher    = {Springer},
  year         = {2013},
  url          = {https://doi.org/10.1007/978-3-642-40196-1\_17},
  doi          = {10.1007/978-3-642-40196-1\_17},
  bibsource    = {dblp computer science bibliography, https://dblp.org}
}

@inproceedings{DBLP:conf/sas/KatoenMMM10,
  author       = {Joost{-}Pieter Katoen and
                  Annabelle McIver and
                  Larissa Meinicke and
                  Carroll C. Morgan},
  editor       = {Radhia Cousot and
                  Matthieu Martel},
  title        = {Linear-Invariant Generation for Probabilistic Programs: - Automated
                  Support for Proof-Based Methods},
  booktitle    = {Static Analysis - 17th International Symposium, {SAS} 2010, Perpignan,
                  France, September 14-16, 2010. Proceedings},
  series       = {Lecture Notes in Computer Science},
  pages        = {390--406},
  publisher    = {Springer},
  year         = {2010},
  url          = {https://doi.org/10.1007/978-3-642-15769-1\_24},
  doi          = {10.1007/978-3-642-15769-1\_24},
  bibsource    = {dblp computer science bibliography, https://dblp.org}
}

@article{DBLP:journals/jlp/GrooteT03,
  author       = {Jan Friso Groote and
                  Olga Tveretina},
  title        = {Binary decision diagrams for first-order predicate logic},
  journal      = {J. Log. Algebraic Methods Program.},
  volume       = {57},
  number       = {1-2},
  pages        = {1--22},
  year         = {2003}
}

@inproceedings{DBLP:conf/cav/ChatterjeeGMZ22,
  author       = {Krishnendu Chatterjee and
                  Amir Kafshdar Goharshady and
                  Tobias Meggendorfer and
                  Dorde Zikelic},
  title        = {Sound and Complete Certificates for Quantitative Termination Analysis
                  of Probabilistic Programs},
  booktitle    = {{CAV} {(1)}},
  series       = {Lecture Notes in Computer Science},
  volume       = {13371},
  pages        = {55--78},
  publisher    = {Springer},
  year         = {2022}
}

@article{DBLP:journals/jacm/KaminskiKMO18,
  author       = {Benjamin Lucien Kaminski and
                  Joost{-}Pieter Katoen and
                  Christoph Matheja and
                  Federico Olmedo},
  title        = {Weakest Precondition Reasoning for Expected Runtimes of Randomized
                  Algorithms},
  journal      = {J. {ACM}},
  volume       = {65},
  number       = {5},
  pages        = {30:1--30:68},
  year         = {2018}
}

@article{DBLP:journals/pacmpl/GregersenAHTB24,
  author       = {Simon Oddershede Gregersen and
                  Alejandro Aguirre and
                  Philipp G. Haselwarter and
                  Joseph Tassarotti and
                  Lars Birkedal},
  title        = {Asynchronous Probabilistic Couplings in Higher-Order Separation Logic},
  journal      = {Proc. {ACM} Program. Lang.},
  volume       = {8},
  number       = {{POPL}},
  pages        = {753--784},
  year         = {2024}
}

@article{DBLP:journals/pacmpl/BartheHL20,
  author       = {Gilles Barthe and
                  Justin Hsu and
                  Kevin Liao},
  title        = {A probabilistic separation logic},
  journal      = {Proc. {ACM} Program. Lang.},
  volume       = {4},
  number       = {{POPL}},
  pages        = {55:1--55:30},
  year         = {2020}
}

@inproceedings{DBLP:conf/aaai/ZikelicLHC23,
  author       = {Dorde Zikelic and
                  Mathias Lechner and
                  Thomas A. Henzinger and
                  Krishnendu Chatterjee},
  title        = {Learning Control Policies for Stochastic Systems with Reach-Avoid
                  Guarantees},
  booktitle    = {{AAAI}},
  pages        = {11926--11935},
  publisher    = {{AAAI} Press},
  year         = {2023}
}

@inproceedings{DBLP:conf/tacas/BatzCJKKM23,
  author       = {Kevin Batz and
                  Mingshuai Chen and
                  Sebastian Junges and
                  Benjamin Lucien Kaminski and
                  Joost{-}Pieter Katoen and
                  Christoph Matheja},
  title        = {Probabilistic Program Verification via Inductive Synthesis of Inductive
                  Invariants},
  booktitle    = {{TACAS} {(2)}},
  series       = {Lecture Notes in Computer Science},
  volume       = {13994},
  pages        = {410--429},
  publisher    = {Springer},
  year         = {2023}
}

@article{DBLP:journals/ai/SannerB09,
  author       = {Scott Sanner and
                  Craig Boutilier},
  title        = {Practical solution techniques for first-order MDPs},
  journal      = {Artif. Intell.},
  volume       = {173},
  number       = {5-6},
  pages        = {748--788},
  year         = {2009}
}

@inproceedings{DBLP:conf/uai/SannerDB11,
  author       = {Scott Sanner and
                  Karina Valdivia Delgado and
                  Leliane Nunes de Barros},
  title        = {Symbolic Dynamic Programming for Discrete and Continuous State MDPs},
  booktitle    = {{UAI}},
  pages        = {643--652},
  publisher    = {{AUAI} Press},
  year         = {2011}
}

@inproceedings{Barrett2010TheSS,
	title={The SMT-LIB Standard Version 2.0},
	author={Clark W. Barrett and Aaron Stump and Cesare Tinelli},
	year={2010},
	url={https://api.semanticscholar.org/CorpusID:7943149}
}

@inproceedings{z3,
	author       = {Leonardo Mendon{\c{c}}a de Moura and
	Nikolaj S. Bj{\o}rner},
	title        = {{Z3:} An Efficient {SMT} Solver},
	booktitle    = {{TACAS}},
	series       = {Lecture Notes in Computer Science},
	volume       = {4963},
	pages        = {337--340},
	publisher    = {Springer},
	year         = {2008}
}

@inproceedings{cvc5,
	author       = {Haniel Barbosa and
	Clark W. Barrett and
	Martin Brain and
	Gereon Kremer and
	Hanna Lachnitt and
	Makai Mann and
	Abdalrhman Mohamed and
	Mudathir Mohamed and
	Aina Niemetz and
	Andres N{\"{o}}tzli and
	Alex Ozdemir and
	Mathias Preiner and
	Andrew Reynolds and
	Ying Sheng and
	Cesare Tinelli and
	Yoni Zohar},
	title        = {cvc5: {A} Versatile and Industrial-Strength {SMT} Solver},
	booktitle    = {{TACAS} {(1)}},
	series       = {Lecture Notes in Computer Science},
	volume       = {13243},
	pages        = {415--442},
	publisher    = {Springer},
	year         = {2022}
}

@book{mciver_morgan,
	author       = {Annabelle McIver and
	Carroll Morgan},
	title        = {Abstraction, Refinement and Proof for Probabilistic Systems},
	series       = {Monographs in Computer Science},
	publisher    = {Springer},
	year         = {2005}
}

@article{aert,
	author       = {Kevin Batz and
	Benjamin Lucien Kaminski and
	Joost-Pieter Katoen and
	Christoph Matheja and
	Lena Verscht},
	title        = {A Calculus for Amortized Expected Runtimes},
	journal      = {Proc. {ACM} Program. Lang.},
	volume       = {7},
	number       = {{POPL}},
	pages        = {1957--1986},
	year         = {2023}
}

@inproceedings{kind_cav,
	author       = {Kevin Batz and
	Mingshuai Chen and
	Benjamin Lucien Kaminski and
	Joost-Pieter Katoen and
	Christoph Matheja and
	Philipp Schr{\"{o}}er},
	title        = {Latticed k-Induction with an Application to Probabilistic Programs},
	booktitle    = {{CAV} {(2)}},
	series       = {Lecture Notes in Computer Science},
	volume       = {12760},
	pages        = {524--549},
	publisher    = {Springer},
	year         = {2021}
}

@inproceedings{DBLP:conf/fmcad/SheeranSS00,
	author       = {Mary Sheeran and
	Satnam Singh and
	Gunnar St{\aa}lmarck},
	title        = {Checking Safety Properties Using Induction and a SAT-Solver},
	booktitle    = {{FMCAD}},
	series       = {Lecture Notes in Computer Science},
	volume       = {1954},
	pages        = {108--125},
	publisher    = {Springer},
	year         = {2000}
}

@inproceedings{sofware_k_induction,
	author       = {Alastair F. Donaldson and
	Leopold Haller and
	Daniel Kroening and
	Philipp R{\"{u}}mmer},
	title        = {Software Verification Using k-Induction},
	booktitle    = {{SAS}},
	series       = {Lecture Notes in Computer Science},
	volume       = {6887},
	pages        = {351--368},
	publisher    = {Springer},
	year         = {2011}
}

@inproceedings{boosting_k_induction,
	author       = {Dirk Beyer and
	Matthias Dangl and
	Philipp Wendler},
	title        = {Boosting k-Induction with Continuously-Refined Invariants},
	booktitle    = {{CAV} {(1)}},
	series       = {Lecture Notes in Computer Science},
	volume       = {9206},
	pages        = {622--640},
	publisher    = {Springer},
	year         = {2015}
}

@inproceedings{property_directed_k_induction,
	author       = {Dejan Jovanovic and
	Bruno Dutertre},
	title        = {Property-directed k-induction},
	booktitle    = {{FMCAD}},
	pages        = {85--92},
	publisher    = {{IEEE}},
	year         = {2016}
}

@inproceedings{k_induction_without_unrolling,
	author       = {Arie Gurfinkel and
	Alexander Ivrii},
	title        = {K-induction without unrolling},
	booktitle    = {{FMCAD}},
	pages        = {148--155},
	publisher    = {{IEEE}},
	year         = {2017}
}

@article{DBLP:journals/iandc/BurchCMDH92,
  author       = {Jerry R. Burch and
                  Edmund M. Clarke and
                  Kenneth L. McMillan and
                  David L. Dill and
                  L. J. Hwang},
  title        = {Symbolic Model Checking: 10{\^{}}20 States and Beyond},
  journal      = {Inf. Comput.},
  volume       = {98},
  number       = {2},
  pages        = {142--170},
  year         = {1992}
}

@article{DBLP:journals/tc/Bryant86,
	author       = {Randal E. Bryant},
	title        = {Graph-Based Algorithms for Boolean Function Manipulation},
	journal      = {{IEEE} Trans. Computers},
	volume       = {35},
	number       = {8},
	pages        = {677--691},
	year         = {1986}
}

@book{dijkstra_discipline,
	author       = {Edsger W. Dijkstra},
	title        = {A Discipline of Programming},
	publisher    = {Prentice-Hall},
	year         = {1976}
}

@MISC{BarFT-SMTLIB,
	author =	 {Clark Barrett and Pascal Fontaine and Cesare Tinelli},
	title =	 {{The Satisfiability Modulo Theories Library (SMT-LIB)}},
	howpublished = {{\tt www.SMT-LIB.org}},
	year =	 2016,
}

@phdthesis{kaminski_diss,
	author       = {Benjamin Lucien Kaminski},
	title        = {Advanced Weakest Precondition Calculi for Probabilistic Programs},
	school       = {{RWTH} Aachen University, Germany},
	year         = {2019}
}

@article{hardness1,
	author       = {Benjamin Lucien Kaminski and
	Joost-Pieter Katoen and
	Christoph Matheja},
	title        = {On the hardness of analyzing probabilistic programs},
	journal      = {Acta Informatica},
	volume       = {56},
	number       = {3},
	pages        = {255--285},
	year         = {2019}
}

@inproceedings{DBLP:conf/icse/GordonHNR14,
  author       = {Andrew D. Gordon and
                  Thomas A. Henzinger and
                  Aditya V. Nori and
                  Sriram K. Rajamani},
  title        = {Probabilistic programming},
  booktitle    = {{FOSE}},
  pages        = {167--181},
  publisher    = {{ACM}},
  year         = {2014}
}

@inproceedings{DBLP:conf/fmcad/0001BT14,
	author       = {Tim King and
	Clark W. Barrett and
	Cesare Tinelli},
	title        = {Leveraging linear and mixed integer programming for {SMT}},
	booktitle    = {{FMCAD}},
	pages        = {139--146},
	publisher    = {{IEEE}},
	year         = {2014}
}

@inproceedings{DBLP:conf/podc/KushilevitzR92,
  author       = {Eyal Kushilevitz and
                  Michael O. Rabin},
  title        = {Randomized Mutual Exclusion Algorithms Revisited},
  booktitle    = {{PODC}},
  pages        = {275--283},
  publisher    = {{ACM}},
  year         = {1992}
}

@article{DBLP:journals/pacmpl/HoltzenBM20,
  author       = {Steven Holtzen and
                  Guy Van den Broeck and
                  Todd D. Millstein},
  title        = {Scaling exact inference for discrete probabilistic programs},
  journal      = {Proc. {ACM} Program. Lang.},
  volume       = {4},
  number       = {{OOPSLA}},
  pages        = {140:1--140:31},
  year         = {2020}
}

@article{DBLP:journals/sttt/HenselJKQV22,
  author       = {Christian Hensel and
                  Sebastian Junges and
                  Joost{-}Pieter Katoen and
                  Tim Quatmann and
                  Matthias Volk},
  title        = {The probabilistic model checker Storm},
  journal      = {Int. J. Softw. Tools Technol. Transf.},
  volume       = {24},
  number       = {4},
  pages        = {589--610},
  year         = {2022}
}

@inproceedings{DBLP:conf/europar/DijkP14,
  author       = {Tom van Dijk and
                  Jaco C. van de Pol},
  title        = {Lace: Non-blocking Split Deque for Work-Stealing},
  booktitle    = {Euro-Par Workshops {(2)}},
  series       = {Lecture Notes in Computer Science},
  volume       = {8806},
  pages        = {206--217},
  publisher    = {Springer},
  year         = {2014}
}

@article{DBLP:journals/sttt/DijkP17,
  author       = {Tom van Dijk and
                  Jaco van de Pol},
  title        = {Sylvan: multi-core framework for decision diagrams},
  journal      = {Int. J. Softw. Tools Technol. Transf.},
  volume       = {19},
  number       = {6},
  pages        = {675--696},
  year         = {2017}
}

@inproceedings{DBLP:conf/papm/DArgenioJJL01,
  author       = {Pedro R. D'Argenio and
                  Bertrand Jeannet and
                  Henrik Ejersbo Jensen and
                  Kim Guldstrand Larsen},
  title        = {Reachability Analysis of Probabilistic Systems by Successive Refinements},
  booktitle    = {{PAPM-PROBMIV}},
  series       = {Lecture Notes in Computer Science},
  volume       = {2165},
  pages        = {39--56},
  publisher    = {Springer},
  year         = {2001}
}

@inproceedings{DBLP:conf/icalp/BaierCHKR97,
  author       = {Christel Baier and
                  Edmund M. Clarke and
                  Vasiliki Hartonas{-}Garmhausen and
                  Marta Z. Kwiatkowska and
                  Mark Ryan},
  title        = {Symbolic Model Checking for Probabilistic Processes},
  booktitle    = {{ICALP}},
  series       = {Lecture Notes in Computer Science},
  volume       = {1256},
  pages        = {430--440},
  publisher    = {Springer},
  year         = {1997}
}

@book{DBLP:books/daglib/0020348,
  author       = {Christel Baier and
                  Joost{-}Pieter Katoen},
  title        = {Principles of model checking},
  publisher    = {{MIT} Press},
  year         = {2008}
}

@book{DBLP:books/daglib/0068003,
	author       = {Jean H. Gallier},
	title        = {Logic for Computer Science: Foundations of Automatic Theorem Proving},
	publisher    = {Wiley},
	year         = {1987}
}

@book{Tao2011MeasureTheory,
	author    = {Terence Tao},
	title     = {An Introduction to Measure Theory},
	series    = {Graduate Studies in Mathematics},
	volume    = {126},
	publisher = {American Mathematical Society},
	year      = {2011}
}

@article{watanabe2025denotationalproductconstructiontemporal,
      title={A Denotational Product Construction for Temporal Verification of Effectful Higher-Order Programs}, 
      author={Kazuki Watanabe and Mayuko Kori and Taro Sekiyama and Satoshi Kura and Hiroshi Unno},
      year={2025},
      volume={abs/2510.11320},
      journal      = {CoRR},
}

@inproceedings{DBLP:conf/cav/GehrMV16,
  author       = {Timon Gehr and
                  Sasa Misailovic and
                  Martin T. Vechev},
  title        = {{PSI:} Exact Symbolic Inference for Probabilistic Programs},
  booktitle    = {{CAV} {(1)}},
  series       = {Lecture Notes in Computer Science},
  volume       = {9779},
  pages        = {62--83},
  publisher    = {Springer},
  year         = {2016}
}

@article{DBLP:journals/pacmpl/MoyCLMH25,
  author       = {Cameron Moy and
                  Jack Czenszak and
                  John M. Li and
                  Brianna Marshall and
                  Steven Holtzen},
  title        = {Roulette: {A} Language for Expressive, Exact, and Efficient Discrete
                  Probabilistic Programming},
  journal      = {Proc. {ACM} Program. Lang.},
  volume       = {9},
  number       = {{PLDI}},
  pages        = {2081--2105},
  year         = {2025}
}

@inproceedings{DBLP:conf/pldi/GehrMTVWV18,
  author       = {Timon Gehr and
                  Sasa Misailovic and
                  Petar Tsankov and
                  Laurent Vanbever and
                  Pascal Wiesmann and
                  Martin T. Vechev},
  title        = {Bayonet: probabilistic inference for networks},
  booktitle    = {{PLDI}},
  pages        = {586--602},
  publisher    = {{ACM}},
  year         = {2018}
}

@inproceedings{DBLP:conf/cav/ChenHWZ15,
  author       = {Yu{-}Fang Chen and
                  Chih{-}Duo Hong and
                  Bow{-}Yaw Wang and
                  Lijun Zhang},
  title        = {Counterexample-Guided Polynomial Loop Invariant Generation by Lagrange
                  Interpolation},
  booktitle    = {{CAV} {(1)}},
  series       = {Lecture Notes in Computer Science},
  volume       = {9206},
  pages        = {658--674},
  publisher    = {Springer},
  year         = {2015}
}

@article{DBLP:journals/corr/abs-2504-04132,
  author       = {Satoshi Kura and
                  Hiroshi Unno and
                  Takeshi Tsukada},
  title        = {Ranking and Invariants for Lower-Bound Inference in Quantitative Verification
                  of Probabilistic Programs},
  journal      = {CoRR},
  volume       = {abs/2504.04132},
  year         = {2025}
}

@article{DBLP:journals/corr/abs-2508-12344,
  author       = {Guanyan Li and
                  Juanen Li and
                  Zhilei Han and
                  Peixin Wang and
                  Hongfei Fu and
                  Fei He},
  title        = {Structural Abstraction and Refinement for Probabilistic Programs},
  journal      = {CoRR},
  volume       = {abs/2508.12344},
  year         = {2025}
}

@article{DBLP:journals/toplas/OlmedoGJKKM18,
	author       = {Federico Olmedo and
	Friedrich Gretz and
	Nils Jansen and
	Benjamin Lucien Kaminski and
	Joost{-}Pieter Katoen and
	Annabelle McIver},
	title        = {Conditioning in Probabilistic Programming},
	journal      = {{ACM} Trans. Program. Lang. Syst.},
	volume       = {40},
	number       = {1},
	pages        = {4:1--4:50},
	year         = {2018}
}

@inproceedings{adds,
	author={Bahar, R.I. and Frohm, E.A. and Gaona, C.M. and Hachtel, G.D. and Macii, E. and Pardo, A. and Somenzi, F.},
	booktitle={Proceedings of 1993 International Conference on Computer Aided Design (ICCAD)}, 
	title={Algebraic decision diagrams and their applications}, 
	year={1993},
	volume={},
	number={},
	pages={188-191}}

@inproceedings{DBLP:conf/birthday/GrooteV17,
  author       = {Jan Friso Groote and
                  Erik P. de Vink},
  editor       = {Joost{-}Pieter Katoen and
                  Rom Langerak and
                  Arend Rensink},
  title        = {Problem Solving Using Process Algebra Considered Insightful},
  booktitle    = {ModelEd, TestEd, TrustEd - Essays Dedicated to Ed Brinksma on the
                  Occasion of His 60th Birthday},
  series       = {Lecture Notes in Computer Science},
  volume       = {10500},
  pages        = {48--63},
  publisher    = {Springer},
  year         = {2017},
  url          = {https://doi.org/10.1007/978-3-319-68270-9\_3},
  doi          = {10.1007/978-3-319-68270-9\_3},
}

@thesis{spelMonotonicityMarkovChains2018,
  type = {mathesis},
  title = {Monotonicity in {{Markov Chains}}},
  author = {Spel, Jip},
  date = {2018},
  institution = {University of Twente},
  location = {Enschede \& Aachen},
  url = {https://fmt.ewi.utwente.nl/media/Thesis-FINAL_Jip_Spel_yIy3I5R.pdf},
  urldate = {2025-03-15},
  pagetotal = {103}
}

@article{DBLP:journals/pacmpl/GiannarakisSW21,
  author       = {Nick Giannarakis and
                  Alexandra Silva and
                  David Walker},
  title        = {ProbNV: probabilistic verification of network control planes},
  journal      = {Proc. {ACM} Program. Lang.},
  volume       = {5},
  number       = {{ICFP}},
  pages        = {1--30},
  year         = {2021},
  url          = {https://doi.org/10.1145/3473595},
  doi          = {10.1145/3473595},
  bibsource    = {dblp computer science bibliography, https://dblp.org}
}

@inproceedings{DBLP:conf/pldi/SmolkaKKFHK019,
  author       = {Steffen Smolka and
                  Praveen Kumar and
                  David M. Kahn and
                  Nate Foster and
                  Justin Hsu and
                  Dexter Kozen and
                  Alexandra Silva},
  editor       = {Kathryn S. McKinley and
                  Kathleen Fisher},
  title        = {Scalable verification of probabilistic networks},
  booktitle    = {Proceedings of the 40th {ACM} {SIGPLAN} Conference on Programming
                  Language Design and Implementation, {PLDI} 2019, Phoenix, AZ, USA,
                  June 22-26, 2019},
  pages        = {190--203},
  publisher    = {{ACM}},
  year         = {2019},
  url          = {https://doi.org/10.1145/3314221.3314639},
  doi          = {10.1145/3314221.3314639},
  bibsource    = {dblp computer science bibliography, https://dblp.org}
}

@inproceedings{DBLP:conf/popl/SmolkaKFK017,
  author       = {Steffen Smolka and
                  Praveen Kumar and
                  Nate Foster and
                  Dexter Kozen and
                  Alexandra Silva},
  editor       = {Giuseppe Castagna and
                  Andrew D. Gordon},
  title        = {Cantor meets scott: semantic foundations for probabilistic networks},
  booktitle    = {Proceedings of the 44th {ACM} {SIGPLAN} Symposium on Principles of
                  Programming Languages, {POPL} 2017, Paris, France, January 18-20,
                  2017},
  pages        = {557--571},
  publisher    = {{ACM}},
  year         = {2017},
  url          = {https://doi.org/10.1145/3009837.3009843},
  doi          = {10.1145/3009837.3009843},
  bibsource    = {dblp computer science bibliography, https://dblp.org}
}

@article{DBLP:journals/corr/SmolkaKKFK017,
  author       = {Steffen Smolka and
                  David M. Kahn and
                  Praveen Kumar and
                  Nate Foster and
                  Dexter Kozen and
                  Alexandra Silva},
  title        = {Deciding Probabilistic Program Equivalence in NetKAT},
  journal      = {CoRR},
  volume       = {abs/1707.02772},
  year         = {2017},
  url          = {http://arxiv.org/abs/1707.02772},
  eprinttype   = {arXiv},
  eprint       = {1707.02772},
  bibsource    = {dblp computer science bibliography, https://dblp.org}
}

@inproceedings{DBLP:conf/esop/FosterKMR016,
  author       = {Nate Foster and
                  Dexter Kozen and
                  Konstantinos Mamouras and
                  Mark Reitblatt and
                  Alexandra Silva},
  editor       = {Peter Thiemann},
  title        = {Probabilistic NetKAT},
  booktitle    = {Programming Languages and Systems - 25th European Symposium on Programming,
                  {ESOP} 2016, Held as Part of the European Joint Conferences on Theory
                  and Practice of Software, {ETAPS} 2016, Eindhoven, The Netherlands,
                  April 2-8, 2016, Proceedings},
  series       = {Lecture Notes in Computer Science},
  pages        = {282--309},
  publisher    = {Springer},
  year         = {2016},
  url          = {https://doi.org/10.1007/978-3-662-49498-1\_12},
  doi          = {10.1007/978-3-662-49498-1\_12},
  bibsource    = {dblp computer science bibliography, https://dblp.org}
}

@article{DBLP:journals/pacmpl/BaoDF25,
  author  = {Jialu Bao and
                  Emanuele D'Osualdo and
                  Azadeh Farzan},
  title   = {Bluebell: An Alliance of Relational Lifting and Independence for Probabilistic Reasoning},
  journal = {Proc. ACM Program. Lang.},
  volume  = {9},
  number  = {POPL},
  pages   = {1719-1749},
  year    = {2025},
  doi     = {10.1145/3704894},
  url     = {https://dblp.org/rec/journals/pacmpl/BaoDF25}
}

@article{DBLP:journals/pacmpl/HoWR26,
  author  = {Shing Hin Ho and
                  Nicolas Wu and
                  Azalea Raad},
  title   = {Bayesian Separation Logic: A Logical Foundation and Axiomatic Semantics for Probabilistic Programming},
  journal = {Proc. ACM Program. Lang.},
  volume  = {10},
  number  = {POPL},
  pages   = {1557-1585},
  year    = {2026},
  doi     = {10.1145/3776696},
  url     = {https://dblp.org/rec/journals/pacmpl/HoWR26}
}

@article{DBLP:journals/pacmpl/ChatterjeeGMZ24,
  author  = {Krishnendu Chatterjee and
                  Amir Kafshdar Goharshady and
                  Tobias Meggendorfer and
                  Dorde Zikelic},
  title   = {Quantitative Bounds on Resource Usage of Probabilistic Programs},
  journal = {Proc. ACM Program. Lang.},
  volume  = {8},
  number  = {OOPSLA1},
  pages   = {362-391},
  year    = {2024},
  doi     = {10.1145/3649824},
  url     = {https://dblp.org/rec/journals/pacmpl/ChatterjeeGMZ24}
}

@inproceedings{DBLP:conf/cav/SunFCG23,
  author    = {Yican Sun and
                  Hongfei Fu and
                  Krishnendu Chatterjee and
                  Amir Kafshdar Goharshady},
  title     = {Automated Tail Bound Analysis for Probabilistic Recurrence Relations},
  booktitle = {CAV (3)},
  pages     = {16-39},
  year      = {2023},
  doi       = {10.1007/978-3-031-37709-9\_2},
  url       = {https://dblp.org/rec/conf/cav/SunFCG23}
}

@inproceedings{DBLP:conf/pldi/WangS0CG21,
  author    = {Jinyi Wang and
                  Yican Sun and
                  Hongfei Fu and
                  Krishnendu Chatterjee and
                  Amir Kafshdar Goharshady},
  title     = {Quantitative analysis of assertion violations in probabilistic programs},
  booktitle = {PLDI},
  pages     = {1171-1186},
  year      = {2021},
  doi       = {10.1145/3453483.3454102},
  url       = {https://dblp.org/rec/conf/pldi/WangS0CG21}
}

@article{DBLP:journals/pacmpl/Huang0CG19,
  author  = {Mingzhang Huang and
                  Hongfei Fu and
                  Krishnendu Chatterjee and
                  Amir Kafshdar Goharshady},
  title   = {Modular verification for almost-sure termination of probabilistic programs},
  journal = {Proc. ACM Program. Lang.},
  volume  = {3},
  number  = {OOPSLA},
  pages   = {129:1-129:29},
  year    = {2019},
  doi     = {10.1145/3360555},
  url     = {https://dblp.org/rec/journals/pacmpl/Huang0CG19}
}

@inproceedings{DBLP:conf/pldi/Wang0GCQS19,
  author    = {Peixin Wang and
                  Hongfei Fu and
                  Amir Kafshdar Goharshady and
                  Krishnendu Chatterjee and
                  Xudong Qin and
                  Wenjun Shi},
  title     = {Cost analysis of nondeterministic probabilistic programs},
  booktitle = {PLDI},
  pages     = {204-220},
  year      = {2019},
  doi       = {10.1145/3314221.3314581},
  url       = {https://dblp.org/rec/conf/pldi/Wang0GCQS19}
}

@inproceedings{DBLP:conf/cav/ChatterjeeFG16,
  author    = {Krishnendu Chatterjee and
                  Hongfei Fu and
                  Amir Kafshdar Goharshady},
  title     = {Termination Analysis of Probabilistic Programs Through Positivstellensatz's},
  booktitle = {CAV (1)},
  pages     = {3-22},
  year      = {2016},
  doi       = {10.1007/978-3-319-41528-4\_1},
  url       = {https://dblp.org/rec/conf/cav/ChatterjeeFG16}
}

@article{DBLP:journals/pacmpl/LiLHW0025,
  author  = {Guanyan Li and
                  Juanen Li and
                  Zhilei Han and
                  Peixin Wang and
                  Hongfei Fu and
                  Fei He},
  title   = {Structural Abstraction and Refinement for Probabilistic Programs},
  journal = {Proc. ACM Program. Lang.},
  volume  = {9},
  number  = {OOPSLA2},
  pages   = {1809-1836},
  year    = {2025},
  doi     = {10.1145/3763115},
  url     = {https://dblp.org/rec/journals/pacmpl/LiLHW0025}
}

@article{DBLP:journals/corr/abs-2203-04422,
  author  = {Guanyan Li and
                  Zhilei Han and
                  Fei He},
  title   = {ProbTA: A sound and complete proof rule for probabilistic verification},
  journal = {CoRR},
  volume  = {abs/2203.04422},
  year    = {2022},
  doi     = {10.48550/arXiv.2203.04422},
  url     = {https://dblp.org/rec/journals/corr/abs-2203-04422}
}

@inproceedings{DBLP:conf/pldi/ChenH20,
  author    = {Jianhui Chen and
                  Fei He},
  title     = {Proving almost-sure termination by omega-regular decomposition},
  booktitle = {PLDI},
  pages     = {869-882},
  year      = {2020},
  doi       = {10.1145/3385412.3386002},
  url       = {https://dblp.org/rec/conf/pldi/ChenH20}
}

@article{DBLP:journals/pacmpl/AlbarghouthiH18,
  author  = {Aws Albarghouthi and
                  Justin Hsu},
  title   = {Synthesizing coupling proofs of differential privacy},
  journal = {Proc. ACM Program. Lang.},
  volume  = {2},
  number  = {POPL},
  pages   = {58:1-58:30},
  year    = {2018},
  doi     = {10.1145/3158146},
  url     = {https://dblp.org/rec/journals/pacmpl/AlbarghouthiH18}
}

@inproceedings{DBLP:conf/cav/AlbarghouthiH18,
  author    = {Aws Albarghouthi and
                  Justin Hsu},
  title     = {Constraint-Based Synthesis of Coupling Proofs},
  booktitle = {CAV (1)},
  pages     = {327-346},
  year      = {2018},
  doi       = {10.1007/978-3-319-96145-3\_18},
  url       = {https://dblp.org/rec/conf/cav/AlbarghouthiH18}
}

@inproceedings{DBLP:conf/popl/BartheGHS17,
  author    = {Gilles Barthe and
                  Benjamin Grégoire and
                  Justin Hsu and
                  Pierre-Yves Strub},
  title     = {Coupling proofs are probabilistic product programs},
  booktitle = {POPL},
  pages     = {161-174},
  year      = {2017},
  doi       = {10.1145/3009837.3009896},
  url       = {https://dblp.org/rec/conf/popl/BartheGHS17}
}

@inproceedings{DBLP:conf/cav/BartheEFH16,
  author    = {Gilles Barthe and
                  Thomas Espitau and
                  Luis María Ferrer Fioriti and
                  Justin Hsu},
  title     = {Synthesizing Probabilistic Invariants via Doob's Decomposition},
  booktitle = {CAV (1)},
  pages     = {43-61},
  year      = {2016},
  doi       = {10.1007/978-3-319-41528-4\_3},
  url       = {https://dblp.org/rec/conf/cav/BartheEFH16}
}

@inproceedings{DBLP:conf/popl/BartheKOB12,
  author    = {Gilles Barthe and
                  Boris Köpf and
                  Federico Olmedo and
                  Santiago Zanella-Béguelin},
  title     = {Probabilistic relational reasoning for differential privacy},
  booktitle = {POPL},
  pages     = {97-110},
  year      = {2012},
  doi       = {10.1145/2103656.2103670},
  url       = {https://dblp.org/rec/conf/popl/BartheKOB12}
}

@article{DBLP:journals/pacmpl/0001024,
  author  = {Satoshi Kura and
                  Hiroshi Unno},
  title   = {Automated Verification of Higher-Order Probabilistic Programs via a Dependent Refinement Type System},
  journal = {Proc. ACM Program. Lang.},
  volume  = {8},
  number  = {ICFP},
  pages   = {973-1002},
  year    = {2024},
  doi     = {10.1145/3674662},
  url     = {https://dblp.org/rec/journals/pacmpl/0001024}
}

@article{DBLP:journals/corr/abs-2512-00270,
author = {Kura, Satoshi and Unno, Hiroshi},
title = {A Hierarchy of Supermartingales for {$\omega$}-Regular Verification},
year = {2026},
issue_date = {June 2026},
publisher = {Association for Computing Machinery},
address = {New York, NY, USA},
volume = {10},
number = {PLDI},
doi = {10.1145/3808257},
month = jun,
articleno = {179},
numpages = {24}
}

@article{DBLP:journals/fmsd/BaoTPHR25,
  author  = {Jialu Bao and
                  Nitesh Trivedi and
                  Drashti Pathak and
                  Justin Hsu and
                  Subhajit Roy},
  title   = {Data-driven invariant learning for probabilistic programs},
  journal = {Formal Methods Syst. Des.},
  volume  = {66},
  number  = {2},
  pages   = {278-306},
  year    = {2025},
  doi     = {10.1007/s10703-024-00466-x},
  url     = {https://dblp.org/rec/journals/fmsd/BaoTPHR25}
}

@inproceedings{DBLP:conf/ijcai/BaoTPH023,
  author    = {Jialu Bao and
                  Nitesh Trivedi and
                  Drashti Pathak and
                  Justin Hsu and
                  Subhajit Roy},
  title     = {Data-Driven Invariant Learning for Probabilistic Programs (Extended Abstract)},
  booktitle = {IJCAI},
  pages     = {6415-6419},
  year      = {2023},
  doi       = {10.24963/ijcai.2023/712},
  url       = {https://dblp.org/rec/conf/ijcai/BaoTPH023}
}

@article{DBLP:journals/pacmpl/SusagLHR22,
  author  = {Zachary Susag and
                  Sumit Lahiri and
                  Justin Hsu and
                  Subhajit Roy},
  title   = {Symbolic execution for randomized programs},
  journal = {Proc. ACM Program. Lang.},
  volume  = {6},
  number  = {OOPSLA2},
  pages   = {1583-1612},
  year    = {2022},
  doi     = {10.1145/3563344},
  url     = {https://dblp.org/rec/journals/pacmpl/SusagLHR22}
}

@inproceedings{DBLP:conf/cav/BaoTPHR22,
  author    = {Jialu Bao and
                  Nitesh Trivedi and
                  Drashti Pathak and
                  Justin Hsu and
                  Subhajit Roy},
  title     = {Data-Driven Invariant Learning for Probabilistic Programs},
  booktitle = {CAV (1)},
  pages     = {33-54},
  year      = {2022},
  doi       = {10.1007/978-3-031-13185-1\_3},
  url       = {https://dblp.org/rec/conf/cav/BaoTPHR22}
}

@article{DBLP:journals/pacmpl/SmithHA19,
  author  = {Calvin Smith and
                  Justin Hsu and
                  Aws Albarghouthi},
  title   = {Trace abstraction modulo probability},
  journal = {Proc. ACM Program. Lang.},
  volume  = {3},
  number  = {POPL},
  pages   = {39:1-39:31},
  year    = {2019},
  doi     = {10.1145/3290352},
  url     = {https://dblp.org/rec/journals/pacmpl/SmithHA19}
}

@inproceedings{DBLP:conf/cpp/MarionneauB0B26,
  author       = {Virgil Marionneau and
                  F{\'{e}}lix Sassus Bourda and
                  Alejandro Aguirre and
                  Lars Birkedal},
  editor       = {Kathrin Stark and
                  Yannick Zakowski and
                  Nikhil Swamy and
                  Nicolas Tabareau},
  title        = {Modular Specifications and Implementations of Random Samplers in Higher-Order
                  Separation Logic},
  booktitle    = {Proceedings of the 15th {ACM} {SIGPLAN} International Conference on
                  Certified Programs and Proofs, {CPP} 2026, Rennes, France, January
                  12-13, 2026},
  pages        = {368--382},
  publisher    = {{ACM}},
  year         = {2026},
  url          = {https://doi.org/10.1145/3779031.3779109},
  doi          = {10.1145/3779031.3779109},
  bibsource    = {dblp computer science bibliography, https://dblp.org}
}

@article{DBLP:journals/corr/abs-2604-12713,
  author       = {Philipp G. Haselwarter and
                  Alejandro Aguirre and
                  Simon Oddershede Gregersen and
                  Kwing Hei Li and
                  Joseph Tassarotti and
                  Lars Birkedal},
  title        = {Modular Verification of Differential Privacy in Probabilistic Higher-Order
                  Separation Logic (Extended Version)},
  journal      = {CoRR},
  volume       = {abs/2604.12713},
  year         = {2026},
  url          = {https://doi.org/10.48550/arXiv.2604.12713},
  doi          = {10.48550/ARXIV.2604.12713},
  eprinttype   = {arXiv},
  eprint       = {2604.12713},
  bibsource    = {dblp computer science bibliography, https://dblp.org}
}

@article{DBLP:journals/pacmpl/Li0GHTB25,
  author       = {Kwing Hei Li and
                  Alejandro Aguirre and
                  Simon Oddershede Gregersen and
                  Philipp G. Haselwarter and
                  Joseph Tassarotti and
                  Lars Birkedal},
  title        = {Modular Reasoning about Error Bounds for Concurrent Probabilistic
                  Programs},
  journal      = {Proc. {ACM} Program. Lang.},
  volume       = {9},
  number       = {{ICFP}},
  pages        = {276--305},
  year         = {2025},
  url          = {https://doi.org/10.1145/3747514},
  doi          = {10.1145/3747514},
  bibsource    = {dblp computer science bibliography, https://dblp.org}
}

@article{DBLP:journals/pacmpl/HaselwarterLAGTB25,
  author       = {Philipp G. Haselwarter and
                  Kwing Hei Li and
                  Alejandro Aguirre and
                  Simon Oddershede Gregersen and
                  Joseph Tassarotti and
                  Lars Birkedal},
  title        = {Approximate Relational Reasoning for Higher-Order Probabilistic Programs},
  journal      = {Proc. {ACM} Program. Lang.},
  volume       = {9},
  number       = {{POPL}},
  pages        = {1196--1226},
  year         = {2025},
  url          = {https://doi.org/10.1145/3704877},
  doi          = {10.1145/3704877},
  bibsource    = {dblp computer science bibliography, https://dblp.org}
}

@article{DBLP:journals/pacmpl/StassenMZAB25,
  author       = {Philipp Stassen and
                  Rasmus Ejlers M{\o}gelberg and
                  Maaike Zwart and
                  Alejandro Aguirre and
                  Lars Birkedal},
  title        = {Modelling Recursion and Probabilistic Choice in Guarded Type Theory},
  journal      = {Proc. {ACM} Program. Lang.},
  volume       = {9},
  number       = {{POPL}},
  pages        = {1417--1445},
  year         = {2025},
  url          = {https://doi.org/10.1145/3704884},
  doi          = {10.1145/3704884},
  bibsource    = {dblp computer science bibliography, https://dblp.org}
}

@article{DBLP:journals/corr/abs-2511-10135,
  author       = {Kwing Hei Li and
                  Alejandro Aguirre and
                  Joseph Tassarotti and
                  Lars Birkedal},
  title        = {Contextual Refinement of Higher-Order Concurrent Probabilistic Programs},
  journal      = {CoRR},
  volume       = {abs/2511.10135},
  year         = {2025},
  url          = {https://doi.org/10.48550/arXiv.2511.10135},
  doi          = {10.48550/ARXIV.2511.10135},
  eprinttype   = {arXiv},
  eprint       = {2511.10135},
  bibsource    = {dblp computer science bibliography, https://dblp.org}
}

@article{DBLP:journals/pacmpl/Gregersen0HTB24,
  author       = {Simon Oddershede Gregersen and
                  Alejandro Aguirre and
                  Philipp G. Haselwarter and
                  Joseph Tassarotti and
                  Lars Birkedal},
  title        = {Almost-Sure Termination by Guarded Refinement},
  journal      = {Proc. {ACM} Program. Lang.},
  volume       = {8},
  number       = {{ICFP}},
  pages        = {203--233},
  year         = {2024},
  url          = {https://doi.org/10.1145/3674632},
  doi          = {10.1145/3674632},
  bibsource    = {dblp computer science bibliography, https://dblp.org}
}

@article{DBLP:journals/pacmpl/0001HMLGTB24,
  author       = {Alejandro Aguirre and
                  Philipp G. Haselwarter and
                  Markus de Medeiros and
                  Kwing Hei Li and
                  Simon Oddershede Gregersen and
                  Joseph Tassarotti and
                  Lars Birkedal},
  title        = {Error Credits: Resourceful Reasoning about Error Bounds for Higher-Order
                  Probabilistic Programs},
  journal      = {Proc. {ACM} Program. Lang.},
  volume       = {8},
  number       = {{ICFP}},
  pages        = {284--316},
  year         = {2024},
  url          = {https://doi.org/10.1145/3674635},
  doi          = {10.1145/3674635},
  bibsource    = {dblp computer science bibliography, https://dblp.org}
}

@article{DBLP:journals/pacmpl/HaselwarterLMG024,
  author       = {Philipp G. Haselwarter and
                  Kwing Hei Li and
                  Markus de Medeiros and
                  Simon Oddershede Gregersen and
                  Alejandro Aguirre and
                  Joseph Tassarotti and
                  Lars Birkedal},
  title        = {Tachis: Higher-Order Separation Logic with Credits for Expected Costs},
  journal      = {Proc. {ACM} Program. Lang.},
  volume       = {8},
  number       = {{OOPSLA2}},
  pages        = {1189--1218},
  year         = {2024},
  url          = {https://doi.org/10.1145/3689753},
  doi          = {10.1145/3689753},
  bibsource    = {dblp computer science bibliography, https://dblp.org}
}

@article{DBLP:journals/pacmpl/AguirreB23,
  author       = {Alejandro Aguirre and
                  Lars Birkedal},
  title        = {Step-Indexed Logical Relations for Countable Nondeterminism and Probabilistic
                  Choice},
  journal      = {Proc. {ACM} Program. Lang.},
  volume       = {7},
  number       = {{POPL}},
  pages        = {33--60},
  year         = {2023},
  url          = {https://doi.org/10.1145/3571195},
  doi          = {10.1145/3571195},
  bibsource    = {dblp computer science bibliography, https://dblp.org}
}

@article{DBLP:journals/pacmpl/AvanziniMS23,
  author  = {Martin Avanzini and
                  Georg Moser and
                  Michael Schaper},
  title   = {Automated Expected Value Analysis of Recursive Programs},
  journal = {Proc. ACM Program. Lang.},
  volume  = {7},
  number  = {PLDI},
  pages   = {1050-1072},
  year    = {2023},
  doi     = {10.1145/3591263},
  url     = {https://dblp.org/rec/journals/pacmpl/AvanziniMS23}
}

@article{DBLP:journals/pacmpl/AvanziniMS20,
  author  = {Martin Avanzini and
                  Georg Moser and
                  Michael Schaper},
  title   = {A modular cost analysis for probabilistic programs},
  journal = {Proc. ACM Program. Lang.},
  volume  = {4},
  number  = {OOPSLA},
  pages   = {172:1-172:30},
  year    = {2020},
  doi     = {10.1145/3428240},
  url     = {https://dblp.org/rec/journals/pacmpl/AvanziniMS20}
}

@article{DBLP:journals/corr/abs-1908-11343,
  author  = {Martin Avanzini and
                  Michael Schaper and
                  Georg Moser},
  title   = {Modular Runtime Complexity Analysis of Probabilistic While Programs},
  journal = {CoRR},
  volume  = {abs/1908.11343},
  year    = {2019},
  url     = {https://dblp.org/rec/journals/corr/abs-1908-11343}
}

@article{DBLP:journals/acta/ChistikovDM17,
  author  = {Dmitry Chistikov and
                  Rayna Dimitrova and
                  Rupak Majumdar},
  title   = {Approximate counting in SMT and value estimation for probabilistic programs},
  journal = {Acta Informatica},
  volume  = {54},
  number  = {8},
  pages   = {729-764},
  year    = {2017},
  doi     = {10.1007/s00236-017-0297-2},
  url     = {https://dblp.org/rec/journals/acta/ChistikovDM17}
}

@inproceedings{DBLP:conf/tacas/ChistikovDM15,
  author    = {Dmitry Chistikov and
                  Rayna Dimitrova and
                  Rupak Majumdar},
  title     = {Approximate Counting in SMT and Value Estimation for Probabilistic Programs},
  booktitle = {TACAS},
  pages     = {320-334},
  year      = {2015},
  doi       = {10.1007/978-3-662-46681-0\_26},
  url       = {https://dblp.org/rec/conf/tacas/ChistikovDM15}
}

@article{DBLP:journals/pacmpl/EneaMMS26,
  author  = {Constantin Enea and
                  Rupak Majumdar and
                  Harshit Jitendra Motwani and
                  V. R. Sathiyanarayana},
  title   = {Verifying Almost-Sure Termination for Randomized Distributed Algorithms},
  journal = {Proc. ACM Program. Lang.},
  volume  = {10},
  number  = {POPL},
  pages   = {1412-1441},
  year    = {2026},
  doi     = {10.1145/3776691},
  url     = {https://dblp.org/rec/journals/pacmpl/EneaMMS26}
}

@article{DBLP:journals/pacmpl/MajumdarS25,
  author  = {Rupak Majumdar and
                  V. R. Sathiyanarayana},
  title   = {Sound and Complete Proof Rules for Probabilistic Termination},
  journal = {Proc. ACM Program. Lang.},
  volume  = {9},
  number  = {POPL},
  pages   = {1871-1902},
  year    = {2025},
  doi     = {10.1145/3704899},
  url     = {https://dblp.org/rec/journals/pacmpl/MajumdarS25}
}

@article{DBLP:journals/pacmpl/MajumdarS24,
  author  = {Rupak Majumdar and
                  V. R. Sathiyanarayana},
  title   = {Positive Almost-Sure Termination: Complexity and Proof Rules},
  journal = {Proc. ACM Program. Lang.},
  volume  = {8},
  number  = {POPL},
  pages   = {1089-1117},
  year    = {2024},
  doi     = {10.1145/3632879},
  url     = {https://dblp.org/rec/journals/pacmpl/MajumdarS24}
}

@article{10.1145/3808348,
author = {Kura, Satoshi and Unno, Hiroshi and Tsukada, Takeshi},
title = {Supermartingales for Unique Fixed Points: A Unified Approach to Lower Bound Verification},
year = {2026},
issue_date = {June 2026},
publisher = {Association for Computing Machinery},
address = {New York, NY, USA},
volume = {10},
number = {PLDI},
doi = {10.1145/3808348},
journal = {Proc. ACM Program. Lang.},
month = jun,
articleno = {270},
numpages = {24}
}
%%
%% If your work has an appendix, this is the place to put it.
\newpage
\appendix
\crefalias{section}{appendix}
\crefalias{subsection}{appendix}
\crefalias{subsubsection}{appendix}

% \section{Omitted Algorithms}
%  We denote by $\xlabt(\xadd)$  the label $\xlabt(\xroot)$ of the terminal root $\xroot$ (when $\xadd$ has one node) and by $\xlabn(\xadd)$ the label of non-terminal root $\xroot$.
 
% \input{appendix/algorithms}
% \clearpage
% \newpage

\section{Additional Proofs}
\label{app:additional_proofs}
Recall that $\sizeof{\xadd}$ denotes the number of nodes of $\xadd$.
The time bounds below distinguish procedure invocations from cache misses. A \emph{cache miss} is an invocation that is not returned immediately and is therefore fully evaluated. Unless stated otherwise, cache lookups, cache insertions, root-label access, and successor access are counted as unit-cost operations; calls to \ObtainAlg, SMT queries, term/formula substitution, and nested calls to other algorithms are accounted for explicitly.

\begin{definition}[Rank]
\label{def:tedd_rank}
	Let $\xadd$ be a TEDD and let $v$ be a node of $\xadd$. The \emph{rank} of $v$ in $\xadd$, denoted $\mathrm{rank}_{\xadd}(v)$, is the maximum number of edges of a directed path from $v$ to a terminal node. Thus terminal nodes have rank $0$. For a tuple of nodes $(v_1,\ldots,v_n)$ from TEDDs $\xadd_1,\ldots,\xadd_n$, its rank is $\sum_{i=1}^n \mathrm{rank}_{\xadd_i}(v_i)$.
\end{definition}

\subsection[Correctness and Cache Misses of \ApplyAlg]{Correctness and Cache Misses of \ApplyAlg (Algorithm~\ref{alg:apply})}
\label{app:proof_apply}

The following theorem is equivalent to Theorem~\ref{thm:apply_soundness_main} since we inline the specification of Algorithm~\ref{alg:apply}.

\begin{theorem}
\label{thm:apply_correctness_complexity}
\label{app:proof_apply_complexity}
	Let $\termop\colon\term_{\sorta_1}\times\ldots\times\term_{\sorta_n}\to\term_\sorta$ be a term operator, and let $\xadd_1,\ldots,\xadd_n$ be TEDDs of types $\sorta_1,\ldots,\sorta_n$, respectively. 
	%If $\sorta=\sortbool$, assume that $\termop$ returns only $\true$ or $\false$ on all tuples of terminal labels reached by this call. 
	If \apply{$\termop,\xadd_1,\ldots,\xadd_n$} returns $\xadd$, then $\xadd$ is a $\sorta$-TEDD and
	\[
		\forall \inta\in\interprets\colon\quad
		\sem{\xadd}{\inta}
		\eeq
		\sem{\switchfunop{\termop}(\xswitchfun{\xadd_1},\ldots,\xswitchfun{\xadd_n})}{\inta}~.
	\]
	Moreover, \apply{$\termop,\xadd_1,\ldots,\xadd_n$} has at most
	\[
		\sizeof{\xadd_1}\cdot\ldots\cdot\sizeof{\xadd_n}
	\]
	cache misses.
\end{theorem}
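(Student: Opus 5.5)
We argue by well-founded induction on the rank of the argument tuple $(\xadd_1,\ldots,\xadd_n)$ (\Cref{def:tedd_rank}), i.e., on $\sum_i \mathrm{rank}_{\xadd_i}(\xroot^i)$. The invariant we maintain is that every TEDD stored in the cache for a key $(\termop,\xadd_1,\ldots,\xadd_n)$ already satisfies the stated specification. Then the cache-hit branch (ll.~\ref{alg:apply:cache1}--\ref{alg:apply:cache2}) is trivially correct, and it suffices to treat cache misses.

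\emph{Base case.} All $\xadd_i$ are terminal. Each $\xswitchfun{\xadd_i}$ is the one-case expression $\xlabt(\xadd_i)\switchfuncase\true$. By definition of $\switchfunop{\termop}$, the lifted expression is then the single case $\termop(\xlabt(\xadd_1),\ldots,\xlabt(\xadd_n))\switchfuncase\true$. This coincides with the case expression of $\obtain(\terma)$, and well-typedness follows from the input assumption on $\termop$ (Boolean results only $\true/\false$ when $\sorta=\sortbool$).

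\emph{Recursive case.} Let $\forma$ be the $\varord$-minimal root label. For each $i$, we first show that $\sem{\xadd_i}{\inta}=\sem{\xaddb_i}{\inta}$ whenever $\inta\models\forma$, and $\sem{\xadd_i}{\inta}=\sem{\xaddc_i}{\inta}$ whenever $\inta\not\models\forma$. When the root of $\xadd_i$ is labelled $\forma$, this is immediate from \Cref{def:switchfun_of_xadd} together with the $\switchITEsymbol$ semantics. Otherwise it is trivial, since $\xaddb_i=\xaddc_i=\xadd_i$. In every recursive call at least one component moves to a successor, so the rank strictly decreases and the induction hypothesis applies to both calls. Combining this with the specification \eqref{eq:obtainspec} of \ObtainAlg yields $\sem{\xadd}{\inta}=\sem{\termop(\switchfunterm{\xswitchfun{\xadd_1}}{\inta},\ldots)}{\inta}$ in both cases $\inta\models\forma$ and $\inta\not\models\forma$. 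The right-hand side equals $\sem{\switchfunop{\termop}(\ldots)}{\inta}$ by the characterising property of the lifted operator.

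\emph{Main obstacle: the result is a legal TEDD.} We must also show that the returned $\xadd$ is a legal (ordered) TEDD, i.e., that the precondition of $\obtain(\forma,\xaddb,\xaddc)$ holds: $\forma\varord$ the root labels of $\xaddb$ and $\xaddc$. For this we strengthen the induction hypothesis to say that every root label of the result is a root label of one of the inputs, or is absent when the result is terminal. Every root label of the $\xaddb_i,\xaddc_i$ is either a successor label of a node labelled $\forma$, which is strictly greater than $\forma$ by order compatibility, or an untouched root label, which is greater than $\forma$ by minimality and distinct from it by the case split. Reducedness is inherited from \ObtainAlg, which merges equal children and hash-conses nodes.

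\emph{Counting cache misses.} Every argument tuple reached by any recursive call consists of nodes $(v_1,\ldots,v_n)$ with $v_i$ a node of $\xadd_i$, since the algorithm only ever passes successors or the same node. The cache is keyed on such tuples, and a miss inserts its key (l.~\ref{alg:apply:cache3}). Hence each tuple causes at most one miss. The number of distinct tuples is at most $\prod_i\sizeof{\xadd_i}$, which gives the stated bound.
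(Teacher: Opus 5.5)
Your overall route matches the paper's proof: induction on tuple rank, a cache invariant that disposes of cache hits, a split on the $\varord$-least root label, and a count of cache misses by node tuples. The step that fails as written is the strengthened hypothesis for the ordering obligation, namely that the root label of a non-terminal result is a root label of one of the inputs. This hypothesis is not preserved when $\obtain(\forma,\xaddb,\xaddc)$ returns $\xaddb$ because $\xaddb=\xaddc$.

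Here is a counterexample. Take $n=1$. Let the root of $\xadd_1$ carry $\forma$ and have two distinct $\formb$-labelled children that share the true-terminal $a$ but have different false-terminals $b\neq c$. Let $\termop$ be the identity except that it sends $c$ to $b$. Both recursive calls then return the same $\formb$-node, by the hash-consing you invoke. \ObtainAlg collapses, and the returned TEDD has root label $\formb$, which is not the root label of $\xadd_1$. So one level up you cannot conclude that $\xlabn(\xaddb)$ is a root label of one of the $\xaddb_i$. That conclusion is exactly what your argument for $\forma\varord\xlabn(\xaddb)$ relies on.

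The repair is a weaker invariant: a non-terminal result has root label $\varord$-greater than or equal to the $\varord$-least root label among the non-terminal inputs. It survives both outcomes of \ObtainAlg. A freshly built node carries $\forma$ itself. A collapsed result carries $\xlabn(\xaddb)$, which by the hypothesis for the recursive call is at least the least root label of the $\xaddb_i$. By your own case analysis, that label is strictly above $\forma$. This gives the precondition of $\obtain(\forma,\xaddb,\xaddc)$. The paper compresses the same point into the remark that all remaining roots are $\varord$-larger than $\forma$, so making the invariant explicit is worthwhile, but it must be this weaker one.

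There is also a smaller imprecision in the recursive case. You prove equality of values, $\sem{\xadd_i}{\inta}=\sem{\xaddb_i}{\inta}$ for $\inta\models\forma$. Transferring the induction hypothesis, however, needs equality of selected terms, $\switchfunterm{\xswitchfun{\xadd_i}}{\inta}=\switchfunterm{\xswitchfun{\xaddb_i}}{\inta}$, because $\termop$ acts on terms and need not respect semantic equality. This follows from the same $\switchITEsymbol$ argument.
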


\begin{proof}
	We first prove correctness and then establish the cache-miss bound.
	The key idea is the usual product traversal: \ApplyAlg synchronously splits all current TEDDs on the $\varord$-least available root formula and applies $\termop$ only at tuples of terminal nodes.  We maintain the cache invariant that each stored result is a well-formed $\sorta$-TEDD with the claimed denotation, making cache hits correct.

	For correctness, the proof strategy is induction on tuple rank.  Rank $0$ is the terminal case.  For positive rank, let $\forma$ be the chosen least root formula.  In both recursive calls, at least one component advances to a proper successor and no component increases in rank, so the induction hypothesis applies.  The recursive results therefore denote the two branches of the desired operator application; since all remaining roots are $\varord$-larger than $\forma$, the final \ObtainAlg call is well formed.  Splitting any interpretation according to whether it satisfies $\forma$ gives the required denotation.

	For the bound, each cache miss is determined by one tuple of nodes $(v_1,\ldots,v_n)$ from the input TEDDs and the fixed operator $\termop$.  After that tuple is fully evaluated it is cached, so it cannot cause another miss.  There are at most $\sizeof{\xadd_1}\cdot\ldots\cdot\sizeof{\xadd_n}$ such tuples.
\end{proof}

\subsection[Correctness and Cache Misses of \SubstituteAlg]{Correctness and Cache Misses of \SubstituteAlg (Algorithm~\ref{alg:substitute})}
\label{app:proof_substitute}

The following theorem is equivalent to Theorem~\ref{thm:substitute_soundness_main} since we inline the specification of Algorithm~\ref{alg:substitute}.

\begin{theorem}
\label{thm:substitute_correctness_complexity}
	Let $\xadd$ be a $\sorta$-TEDD, let $\hastype{\vara}{\sorta'}$, and let $\hastype{\terma}{\sorta'}$. If \substitute{$\xadd,\vara,\terma$} returns $\xaddb$, then $\xaddb$ is a $\sorta$-TEDD and
	\[
		\forall \inta\in\interprets\colon\quad
		\sem{\xaddb}{\inta}
		\eeq
		\sem{\xswitchfun{\xadd}\switchfunsubst{\vara}{\terma}}{\inta}~.
	\]
	For fixed $\vara$ and $\terma$, \SubstituteAlg has at most $\sizeof{\xadd}$ cache misses. This count is for \SubstituteAlg itself and excludes the nested \ApplyAlg calls in line~\ref{algosubst:finalobtain}.
\end{theorem}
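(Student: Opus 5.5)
We prove correctness by induction on the rank of the root of $\xadd$ (\Cref{def:tedd_rank}), and bound the cache misses by a counting argument over nodes. The induction hypothesis also has to cover cache hits. So we first fix a cache invariant: every entry stored for a triple $(\xadd',\vara,\terma)$ is a $\sorta$-TEDD whose denotation is $\sem{\xswitchfun{\xadd'}\switchfunsubst{\vara}{\terma}}{\inta}$ for all $\inta$. With this invariant, a cache hit returns a correct result, and each fully evaluated call preserves the invariant at line~\ref{algosubst:cache} once we show that its returned value is correct.

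The base case is rank $0$, where $\xadd$ is terminal. Then $\xswitchfun{\xadd}$ consists of the single case $\xlabt(\xadd) \switchfuncase \true$, and substitution gives the case expression with the single term $\xlabt(\xadd)\switchfunsubst{\vara}{\terma}$. The call $\ObtainAlg$ at line~\ref{algosubst:terminal_obtain} returns the 1-node TEDD with exactly this term. Because $\vara$ and $\terma$ have the same type, the substituted term still has type $\sorta$. For $\sorta=\sortbool$, terminal labels are the constants $\true$ or $\false$, which contain no variables, so they are unchanged.

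In the inductive step the root of $\xadd$ is non-terminal with label $\formb$. Both $\succtxadd{\xadd}$ and $\succfxadd{\xadd}$ have strictly smaller rank, so the induction hypothesis (or the cache invariant, if the call is a hit) applies to $\xaddb^+$ and $\xaddb^-$. We unfold \Cref{def:switchfun_of_xadd} together with the semantic property of $\switchfunop{\switchITEsymbol}$. Then we use the stated fact that syntactic substitution coincides with semantic update, applied to the formula $\formb$ and to each branch. This gives, for every $\inta$, that $\sem{\xswitchfun{\xadd}\switchfunsubst{\vara}{\terma}}{\inta}$ equals $\sem{\xaddb^+}{\inta}$ if $\inta\models\formb\switchfunsubst{\vara}{\terma}$, and $\sem{\xaddb^-}{\inta}$ otherwise.

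The main obstacle is that $\forma=\formb\switchfunsubst{\vara}{\terma}$ need not be an atom, and even if it is, it may violate $\varord$ with respect to the roots of $\xaddb^{\pm}$. This is why we cannot use $\ObtainAlg(\forma,\xaddb^+,\xaddb^-)$ directly. Instead we rely on the correctness part of \Cref{thm:apply_correctness_complexity}. Assuming $\ObtainAlg(\forma)$ yields a $\sortbool$-TEDD $\xadd_\forma$ denoting $\forma$ (built via \ApplyAlg on the Boolean connectives if $\forma$ is compound, as for guards in \Cref{tab:xwp}), the call at line~\ref{algosubst:finalobtain} returns a well-formed, $\varord$-respecting $\sorta$-TEDD. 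Its denotation is exactly the case split above, which proves correctness.

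For the cache-miss bound, fix $\vara$ and $\terma$. Every invocation of \SubstituteAlg, including the recursive ones, has as first argument a sub-TEDD of $\xadd$, i.e.\ it is determined by a node of $\xadd$. After its first full evaluation a triple is cached, so each node causes at most one miss. This gives at most $\sizeof{\xadd}$ misses, and the nested \ApplyAlg calls are excluded, as the statement says.
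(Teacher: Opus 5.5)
Your proposal is correct and follows essentially the same route as the paper's proof. Both use a cache invariant that makes cache hits sound, and induction on rank with \textsc{Obtain} in the terminal case. In the inductive step, \textsc{Apply} with $\switchITEsymbol$ rebuilds the root case split because the substituted guard may violate $\varord$, and the bound is a one-miss-per-node count. You also note explicitly that the substituted guard need not even be an atom, so \textsc{Obtain}$(\forma)$ must be read as the Boolean guard TEDD built via \textsc{Apply}; the paper glosses over this by speaking of ``the guard TEDD for $\forma'$''.
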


\begin{proof}
	We first prove correctness and then establish the cache-miss bound.
	The key idea is to substitute recursively and let \ApplyAlg rebuild the root case distinction, because changing a guard may violate the fixed order.  We maintain the cache invariant that every stored triple $(\xadd',\vara,\terma)$ denotes $\xswitchfun{\xadd'}\switchfunsubst{\vara}{\terma}$, so cache hits are correct.

	For correctness, the proof strategy is induction on the rank of the current node.  At a terminal node, \ObtainAlg is called on the substituted terminal label.  At a non-terminal node with guard $\forma$, the algorithm substitutes the guard to obtain $\forma'$, recursively substitutes both successors, and then applies $\switchITEsymbol$ to the guard TEDD for $\forma'$ and the two substituted children.  The recursive calls have smaller rank, and \Cref{thm:apply_correctness_complexity} gives a well-formed TEDD denoting exactly the substituted case distinction.

	For the bound, $\vara$ and $\terma$ are fixed, and each \SubstituteAlg cache miss is keyed by one input sub-TEDD.  Hence there is at most one such miss per node of $\xadd$, i.e. at most $\sizeof{\xadd}$ misses; the nested \ApplyAlg calls are counted separately.
\end{proof}

\subsection[Correctness and Invocations of \PruneAlg]{Correctness and Invocations of \PruneAlg (Algorithm~\ref{alg:prune})}
\label{app:proof_prune}
The following theorem is equivalent to Theorem~\ref{thm:prune_soundness_main} since we inline the specification of Algorithm~\ref{alg:prune}.

\begin{theorem}[\PruneAlg correctness and invocations]
\label{thm:prune_correctness_complexity}
	Assume that the initial context $\contexta$ is satisfiable modulo $\theory$. If \prune{$\theory,\xadd,\contexta$} returns $\xaddb$, then $\xaddb$ is a $\sorta$-TEDD and, for every $\theory$-interpretation $\inta$ satisfying $\contexta$,
	\[
		\sem{\xaddb}{\inta}\eeq\sem{\xadd}{\inta}.
	\]
	Moreover, $\xaddb$ is pruned modulo $\theory$ under context $\contexta$ in the sense of \Cref{def:pruned_tedd}. In particular, the call with $\contexta=\true$ returns a $\sorta$-TEDD equivalent to $\xadd$ modulo $\theory$ and pruned modulo $\theory$.

	Let $m=\mathrm{rank}_{\xadd}(\xroot)$ be the rank from \Cref{def:tedd_rank}. Then \PruneAlg makes at most $2^{m+1}-1$ invocations, counting the initial invocation and cache-hit invocations.
\end{theorem}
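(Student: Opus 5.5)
We argue by induction on $\mathrm{rank}_{\xadd}(\xroot)$, proving correctness and prunedness together. Throughout we maintain the invariant that every call receives a context satisfiable modulo $\theory$. The initial context satisfies this by assumption. In the recursive calls, line~\ref{alg:prune:prune1} is reached only when $\contexta^{+}$ is unsatisfiable. Since $\contexta$ is satisfiable and every model of $\contexta$ satisfies $\contexta^{+}$ or $\contexta^{-}$, the context $\contexta^{-}$ must then be satisfiable. The case of line~\ref{alg:prune:prune2} is symmetric, and in the third branch both contexts are satisfiable by the case split. We also keep a cache invariant: each stored entry for $(\theory,\xadd',\contexta')$ is a TEDD that is equivalent to $\xadd'$ under all $\theory$-interpretations satisfying $\contexta'$ and is pruned under $\contexta'$. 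Because the cache key contains the full context, cache hits then satisfy the claim trivially.

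\emph{Correctness.} At a terminal node, the returned $\xadd$ has exactly one rooted path, with condition $\true$, and $\contexta\wedge\true$ is satisfiable, so the claim holds. Now let the root be non-terminal with label $\formb$, and let $\inta$ be a $\theory$-interpretation with $\inta\models\contexta$.

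In the first branch, $\contexta\wedge\formb$ is unsatisfiable, so $\inta\not\models\formb$. Hence $\sem{\xadd}{\inta}=\sem{\succfxadd{\xadd}}{\inta}$ by \Cref{def:switchfun_of_xadd}. Moreover $\inta\models\contexta^{-}$, so the induction hypothesis (the rank decreases) gives $\sem{\xaddb}{\inta}=\sem{\succfxadd{\xadd}}{\inta}$. For prunedness, every rooted terminal path $\patha$ of $\xaddb$ satisfies, by the induction hypothesis, that $\contexta^{-}\wedge\pathcond{\xaddb}{\patha}$ is satisfiable, and this formula implies $\contexta\wedge\pathcond{\xaddb}{\patha}$. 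The second branch is symmetric.

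In the third branch, \ObtainAlg is well formed: the root labels of $\xaddb^{\pm}$ come from nodes strictly below the root of $\xadd$, since pruning only removes or skips nodes and never introduces new labels. Those labels are therefore $\varord$-larger than $\formb$, so \eqref{eq:obtainspec} applies and the semantic equation follows by splitting on whether $\inta\models\formb$. For prunedness, consider a rooted terminal path of the result. If it passes through the new root, it begins with a $b$-edge into $\xaddb^{+}$ or $\xaddb^{-}$, and its path condition is $\formb$ (respectively $\neg\formb$) conjoined with a path condition of that sub-TEDD; the induction hypothesis under $\contexta^{\pm}$ gives satisfiability. If \ObtainAlg collapsed the node because $\xaddb^{+}=\xaddb^{-}$, the returned TEDD is pruned under both $\contexta^{+}$ and $\contexta^{-}$. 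Any model of $\contexta^{+}\wedge\pathcond{}{\patha}$ is then a model of $\contexta\wedge\pathcond{}{\patha}$, so prunedness under $\contexta$ still holds.

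\emph{Invocation bound.} We show by induction on $m$ that a call at a node of rank at most $m$ makes at most $2^{m+1}-1$ invocations, counting itself and all descendants, including calls that end in a cache hit. A terminal call makes exactly $1=2^{1}-1$ invocation. A non-terminal call of rank $m$ makes at most two direct recursive calls, each on a successor of rank at most $m-1$, giving at most $1+2(2^{m}-1)=2^{m+1}-1$. Cache hits only reduce this count.

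The main obstacle is the interaction between reducedness and the \ObtainAlg collapse, together with the implicit use of the order. One must check that the root labels of the recursive results are still $\varord$-larger than $\formb$, which holds because pruning only removes nodes and never introduces new labels. One must also handle the collapsed case in the prunedness argument, as done above.
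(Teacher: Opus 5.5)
Your proof is correct and follows the same route as the paper's. You use induction on rank with a cache invariant keyed on the full context, the same three-way case split on the satisfiability of $\contexta^{+}$ and $\contexta^{-}$, and a binary recursion tree of height at most $m$ for the $2^{m+1}-1$ invocation bound. You are also more explicit than the paper on two points it leaves implicit: that satisfiability of the context propagates to every recursive call, and that the collapsed \textsc{Obtain} case (when the two pruned children coincide) still yields prunedness under $\contexta$.
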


\begin{proof}
	We first prove correctness and prunedness and then establish the invocation bound.
	The key idea is to traverse the TEDD under the accumulated context and delete exactly those branches whose extended context is unsatisfiable modulo $\theory$.  The cache invariant is that every stored triple $(\theory,\xadd',\contexta')$ is well formed, equivalent to $\xadd'$ under $\contexta'$, and pruned under $\contexta'$.

	For correctness and prunedness, the proof strategy is induction on the rank of the current root, assuming the current context is satisfiable.  Terminal nodes are already pruned.  At a non-terminal node with guard $\formb$, the algorithm considers $\contexta'\wedge\formb$ and $\contexta'\wedge\neg\formb$.  If one is unsatisfiable, all models of $\contexta'$ take the other branch, so returning the recursively pruned reachable successor preserves equivalence and prunedness.  If both are satisfiable, the recursively pruned successors are equivalent and pruned under their respective contexts, and \ObtainAlg rebuilds a well-formed node; equivalence follows by splitting on $\formb$, and prunedness follows from the satisfiable path witnesses in the children.  The case $\contexta=\true$ is exactly equivalence modulo $\theory$.

	For the invocation bound, view the run as a recursion tree.  Each non-leaf has at most two children, and every recursive edge moves to a proper successor, decreasing rank.  Thus the height is at most $m=\mathrm{rank}_{\xadd}(\xroot)$, giving at most $2^{m+1}-1$ invocations.
\end{proof}

\subsection[Correctness and Cache Misses of \MinimumAlg]{Correctness and Cache Misses of \MinimumAlg (Algorithm~\ref{alg:minimum})}
\label{app:proof_minimum}

The following theorem is equivalent to Theorem~\ref{thm:minimum_soundness_main} since we inline the specification of Algorithm~\ref{alg:minimum}.

\begin{theorem}[\MinimumAlg correctness and cache misses]
\label{thm:minimum_correctness_complexity}
	Let $\xadd_1$ and $\xadd_2$ be $\sorteureal$-TEDDs. If \minfunc{$\xadd_1,\xadd_2$} returns $\xaddb$, then $\xaddb$ is an $\sorteureal$-TEDD and, for every interpretation $\inta$,
	\[
		\sem{\xaddb}{\inta}
		\eeq
		\min\{\sem{\xadd_1}{\inta},\sem{\xadd_2}{\inta}\}.
	\]
	Moreover, \MinimumAlg has at most $\sizeof{\xadd_1}\cdot\sizeof{\xadd_2}$ cache misses. This count excludes the nested \ApplyAlg calls in the final merge line immediately before line~\ref{alg:min-previous:recur_case2}, which are accounted for by \Cref{thm:apply_correctness_complexity}.
\end{theorem}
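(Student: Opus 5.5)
I will follow the template of the proof of \Cref{thm:apply_correctness_complexity}: first correctness by induction on the rank of the pair $(\xadd_1,\xadd_2)$ (\Cref{def:tedd_rank}), then the cache-miss bound by counting distinct keys. Throughout, I maintain the cache invariant that every stored entry for a pair $(\xadd_1',\xadd_2')$ is a well-formed $\sorteureal$-TEDD denoting the pointwise minimum of $\xadd_1'$ and $\xadd_2'$. Under this invariant, cache hits (l.\ \ref{alg:min:cache_return}) trivially return correct results, and it suffices to treat cache misses.

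\emph{Base case (rank $0$).} Both inputs are terminal with labels $\terma_1,\terma_2$. Line \ref{alg:min:terminal_obtain} calls $\obtain(\terma_1<\terma_2,\xadd_1,\xadd_2)$; this is legal because both children are terminal, so there is no order constraint. By \eqref{eq:obtainspec}, the result selects $\terma_1$ under $\inta$ iff $\inta\models\terma_1<\terma_2$, and otherwise selects $\terma_2$. Its value is therefore $\sem{\terma_1}{\inta}$ when $\sem{\terma_1}{\inta}<\sem{\terma_2}{\inta}$ and $\sem{\terma_2}{\inta}$ otherwise, which is exactly $\min\{\sem{\terma_1}{\inta},\sem{\terma_2}{\inta}\}$ in $\PosRealsInf$. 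Here the convention $\reala\leq\infty$ ensures that ties and $\infty$ are handled correctly.

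One subtlety is that the label $\terma_1<\terma_2$ must be an atom in $\at$ placed in the order $\varord$. I will note that \obtain{} is specified to accept any atom, and argue that it is harmless for the atom to sit below all other labels: it only ever appears just above terminals, so placing it $\varord$-after every label of the inputs keeps the result well formed once it is merged by \ApplyAlg.

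\emph{Inductive step.} I split into the three cases of ll.\ \ref{alg:min:first_case}--\ref{alg:min:equal_branch}. In each case the recursive calls are on pairs of strictly smaller rank, since at least one component moves to a proper successor and neither component's rank increases. The induction hypothesis therefore gives that $\xaddb^+$ and $\xaddb^-$ denote the minimum of the respective branch pairs. The crux is semantic: for any $\inta\models\forma$, an input whose root is labelled $\forma$ evaluates as its $\true$-successor, while an input whose root is not labelled $\forma$ (or which is terminal) is unchanged. Hence
\[
\min\{\sem{\xadd_1}{\inta},\sem{\xadd_2}{\inta}\}=\sem{\xaddb^+}{\inta},
\]
and symmetrically $\sem{\xaddb^-}{\inta}$ gives the minimum when $\inta\not\models\forma$.

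For the merge in l.\ \ref{alg:min:merge} I read $\succtxadd{\xaddb},\succfxadd{\xaddb}$ as $\xaddb^+,\xaddb^-$. I will not rely on the results of the recursive calls respecting $\forma\varord\cdot$, because fresh comparison atoms may be arbitrary. Instead I invoke \Cref{thm:apply_correctness_complexity} for $\switchITEsymbol$ applied to $\obtain(\forma)$, $\xaddb^+$ and $\xaddb^-$. This yields a well-formed TEDD denoting $\sem{\xaddb^+}{\inta}$ if $\inta\models\forma$ and $\sem{\xaddb^-}{\inta}$ otherwise, which is the claimed minimum. Choosing \ApplyAlg here rather than \ObtainAlg is precisely what sidesteps the ordering obstacle, and I expect this to be the main point needing care.

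\emph{Cache-miss bound.} Every invocation is keyed by a pair $(v_1,v_2)$ of nodes of $\xadd_1$ and $\xadd_2$, because all recursive arguments are sub-TEDDs of the inputs. Once a pair is evaluated it is stored (l.\ \ref{alg:min:cache_store}), so each pair causes at most one miss at l.\ \ref{alg:min:cache_lookup}. This gives at most $\sizeof{\xadd_1}\cdot\sizeof{\xadd_2}$ misses. The nested \ApplyAlg calls are excluded from this count and are bounded separately by \Cref{thm:apply_correctness_complexity}.
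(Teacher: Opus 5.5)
Your proposal is correct and follows the paper's proof essentially step for step: induction on tuple rank under the cache invariant, the terminal base case via the comparison atom $\terma_1<\terma_2$, branchwise minima merged by \textsc{Apply} with $\switchITEsymbol$ (which makes the position of fresh atoms in $\varord$ irrelevant), and the cache-miss bound by counting node pairs. One point to drop is the base-case remark about placing $\terma_1<\terma_2$ $\varord$-after every input label so that it only sits just above terminals. Since $\varord$ is fixed in advance, this cannot be arranged and need not hold after merging; but, as your merge step already recognises, nothing in the argument depends on it.
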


\begin{proof}
	We first prove correctness and then establish the cache-miss bound.
	The key idea is again a product traversal: split the current inputs on the relevant least root label, and compute the actual minimum only when both inputs are terminal.  The cache invariant is that every stored pair denotes the pointwise minimum of that pair.

	For correctness, the proof strategy is induction on tuple rank.  At rank $0$, the algorithm returns the TEDD that selects $\xadd_1'$ iff its terminal label is strictly smaller than that of $\xadd_2'$, which is the pointwise minimum.  At positive rank, the algorithm chooses the appropriate split label $\forma$, recurses on the plus and minus inputs, and each recursive call has smaller tuple rank.  The recursive results denote the two branchwise minima; the final \ApplyAlg call with $\switchITEsymbol$ merges them into the minimum of the original pair by splitting interpretations according to $\forma$.

	For the bound, a \MinimumAlg cache miss is determined by a pair of nodes from $\xadd_1$ and $\xadd_2$, and each pair is cached after its first full evaluation.  Hence there are at most $\sizeof{\xadd_1}\cdot\sizeof{\xadd_2}$ misses, excluding the nested \ApplyAlg calls.
\end{proof}

\subsection[Correctness and Invocations of \EntailsAlg]{Correctness and Invocations of \EntailsAlg (Algorithm~\ref{alg:entails})}
\label{app:proof_entails}

The following theorem is equivalent to Theorem~\ref{thm:entails_soundness_main} since we inline the specification of Algorithm~\ref{alg:entails}.

\begin{theorem}[\EntailsAlg correctness and invocations]
\label{thm:entails_correctness_complexity}
	For every call \entails{$\theory,\xadd_1,\xadd_2,\forma$}, the returned value is $\true$ if and only if every $\theory$-interpretation $\inta$ satisfying $\forma$ satisfies
	\[
		\sem{\xadd_1}{\inta}\leq\sem{\xadd_2}{\inta}.
	\]
	Consequently, the initial call with $\forma=\true$ satisfies the specification in \Cref{alg:entails}.

	Let $r_i$ be the root of $\xadd_i$, and let $m_i=\mathrm{rank}_{\xadd_i}(r_i)$ be the root rank from \Cref{def:tedd_rank}. Then \EntailsAlg makes at most $2^{m_1+m_2+1}-1$ invocations, counting the initial invocation. This bound is an invocation bound, not a cache-miss bound for recursive subproblems, because \EntailsAlg caches terminal SMT queries but does not cache recursive triples $(\theory,\xadd_1,\xadd_2,\forma)$.
\end{theorem}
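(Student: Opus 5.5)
We prove correctness and the invocation bound separately, following the pattern of \Cref{thm:apply_correctness_complexity,thm:prune_correctness_complexity}. For correctness we prove the stated invariant for every call, not only for $\forma=\true$: the call \entails{$\theory,\xadd_1,\xadd_2,\forma$} returns $\true$ iff every $\theory$-interpretation $\inta$ with $\inta\models\forma$ satisfies $\sem{\xadd_1}{\inta}\leq\sem{\xadd_2}{\inta}$. We argue by induction on the tuple rank $\mathrm{rank}_{\xadd_1}(r_1)+\mathrm{rank}_{\xadd_2}(r_2)$ (\Cref{def:tedd_rank}). Alongside, we keep the cache invariant that any stored value for a formula $\formb$ equals $\neg\sats{\formb}$. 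This holds because the cache is keyed by the full formula $\formb$ and the theory is fixed, so cache hits in ll.~\ref{alg:entails:cache_lookup}--\ref{alg:entails:cache_return} return exactly what l.~\ref{alg:entails:smt_check} would compute.

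In the base case, both TEDDs are terminal, so $\sem{\xadd_i}{\inta}=\sem{\xlabt(\xadd_i)}{\inta}$. Some $\theory$-interpretation satisfying $\forma$ violates the inequality iff $\forma\wedge\xlabt(\xadd_1)>\xlabt(\xadd_2)$ is satisfiable modulo $\theory$. Here we use that $\leq$ is total on $\PosRealsInf$, so ``not $\leq$'' is ``$>$''. The returned value $\neg\sats{\formb}$ is therefore correct.

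In the inductive step, let $\formb$ be the $\varord$-least root label. By the same reasoning as in \Cref{thm:apply_correctness_complexity}, every component whose root is labelled $\formb$ advances to a proper successor, and the other components are unchanged. The tuple rank therefore strictly decreases in both recursive calls, and the induction hypothesis applies. For any $\inta\models\forma\wedge\formb$ we have $\sem{\xadd_i}{\inta}=\sem{\xaddb_i}{\inta}$, since either the root test of $\xadd_i$ is $\formb$ and $\inta\models\formb$, or $\xaddb_i=\xadd_i$. Symmetrically, for $\inta\models\forma\wedge\neg\formb$ we have $\sem{\xadd_i}{\inta}=\sem{\xaddc_i}{\inta}$. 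The models of $\forma$ split into models of $\forma^+$ and $\forma^-$, so the conjunction in l.~\ref{alg:entails:recursive_return} is $\true$ iff the inequality holds on all models of $\forma$.

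The main subtlety is the semantics of a non-terminal root. Unfolding \Cref{def:switchfun_of_xadd} with $\switchfunop{\switchITEsymbol}$ and $\switchfun_{\xlabn}$ gives $\sem{\xadd}{\inta}=\sem{\succtxadd{\xadd}}{\inta}$ if $\inta\models\xlabn(\xadd)$ and $\sem{\succfxadd{\xadd}}{\inta}$ otherwise. This follows from the defining property of $\switchfunop{\cdot}$, and we state it once as a small lemma. Instantiating the invariant with $\forma=\true$ yields the theorem, since every $\theory$-structure with any valuation is then admissible.

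For the invocation bound, we view the run as a recursion tree; recursive calls are not cached, so every invocation is a node. Each non-terminal-case invocation has exactly two children, and the tuple rank drops by at least one along every edge. The depth is therefore at most $m_1+m_2$, and a binary tree of height $h$ has at most $2^{h+1}-1$ nodes, giving $2^{m_1+m_2+1}-1$. We expect no real obstacle here. The only point needing care is the rank-decrease claim when both roots carry the same label $\formb$: the rank then drops by at least one, possibly two, which is harmless.
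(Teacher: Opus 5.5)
Your proposal is correct and follows essentially the same route as the paper's proof. Both argue correctness by induction on tuple rank, using the SMT base case and the partition of the models of $\forma$ into those of $\forma^+$ and $\forma^-$, and both bound invocations via a binary recursion tree of depth at most $m_1+m_2$; your extra details (the terminal-cache invariant, totality of $\leq$ on $\PosRealsInf$, and the unfolding lemma for non-terminal roots) spell out what the paper leaves implicit.
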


\begin{proof}
	We first prove correctness and then establish the invocation bound.
	The key idea is to enumerate paired paths while carrying the current path condition.  At terminal pairs, the SMT query checks whether $\forma'\wedge\xlabt(\xadd_1')>\xlabt(\xadd_2')$ is satisfiable, i.e. whether a counterexample to the entailment exists.  The terminal SMT cache is sound because it stores exactly these query results.

	Correctness is by induction on the tuple rank of the current TEDD pair.  At rank $0$, the negated satisfiability result is true exactly when no interpretation satisfying the accumulated context violates the inequality.  At positive rank, the algorithm splits on the $\varord$-least current root label $\formb$, forming contexts $\forma'\wedge\formb$ and $\forma'\wedge\neg\formb$ and the corresponding successor pairs.  These two contexts partition the interpretations satisfying $\forma'$, and both recursive calls have smaller tuple rank.  Therefore their conjunction is true exactly when the inequality holds under the original context.

	For the invocation bound, the recursion tree is binary, and along each edge at least one TEDD advances to a proper successor.  A branch therefore has length at most $m_1+m_2$, so the tree has at most $2^{m_1+m_2+1}-1$ invocations.
\end{proof}

\subsection[Soundness of Syntactic Weakest Pre-Expectations]{Soundness of Syntactic Weakest Pre-Expectations (Theorem~\ref{thm:xwp_sound})}
\label{app:proof_xwp_sound}

\begin{proof}[Proof of \Cref{thm:xwp_sound}]
	Fix a structure $\struct$. For a $\sorteureal$-TEDD $\xadd$, write
	\[
		\mathcal{E}_{\xadd}^{\struct}
		\eeq
		\mylambda{\vala} \sem{\xadd}{(\struct,\vala)} .
	\]
	We prove the following statement by structural induction on the loop-free program $\cc$: for every $\sorteureal$-TEDD $\xadd$ and every valuation $\vala$,
	\[
		\sem{\xwp{\cc}{\xadd}}{(\struct,\vala)}
		\eeq
		\wp{\cc}{\mathcal{E}_{\xadd}^{\struct}}(\vala).
	\]
	For guards, we use the following fact. The Boolean TEDD $\xadd_\forma$ from \Cref{tab:xwp} satisfies
	\begin{align*}
		&\sem{\xadd_\forma}{(\struct,\vala)}=\true
		\quad \text{iff}\quad
		(\struct,\vala)\models\forma .
	\tag{\text{\Cref{tab:xwp}, Thm.~\ref{thm:apply_correctness_complexity}}}
	\end{align*}
	This follows by induction on $\forma$ from the construction of $\xadd_\forma$ using \ApplyAlg on Boolean connectives.

	\emph{The case} $\cc=\SKIP$.
	\begin{align*}
		&\sem{\xwp{\SKIP}{\xadd}}{(\struct,\vala)}\\
		\eeq & \sem{\xadd}{(\struct,\vala)}
		\tag{\text{\Cref{tab:xwp}}}\\
		\eeq & \mathcal{E}_{\xadd}^{\struct}(\vala)
		\tag{\text{def. }E}\\
		\eeq & \wp{\SKIP}{\mathcal{E}_{\xadd}^{\struct}}(\vala).
		\tag{\text{\Cref{tab:wp}}}
	\end{align*}

	\emph{The case} $\cc=\ASSIGN{\vara}{\terma}$.
	\begin{align*}
		&\sem{\xwp{\ASSIGN{\vara}{\terma}}{\xadd}}{(\struct,\vala)}\\
		\eeq & \sem{\substitute(\xadd,\vara,\terma)}{(\struct,\vala)}
		\tag{\text{\Cref{tab:xwp}}}\\
		\eeq & \sem{\xswitchfun{\xadd}\switchfunsubst{\vara}{\terma}}{(\struct,\vala)}
		\tag{\text{Thm.~\ref{thm:substitute_correctness_complexity}}}\\
		\eeq & \sem{\xadd}{(\struct,\vala\valsubst{\vara}{\sem{\terma}{(\struct,\vala)}})}
		\tag{\text{substitution semantics, \Cref{sec:caseexpr}}}\\
		\eeq & \mathcal{E}_{\xadd}^{\struct}\bigl(\vala\valsubst{\vara}{\sem{\terma}{(\struct,\vala)}}\bigr)
		\tag{\text{def. }E}\\
		\eeq & \wp{\ASSIGN{\vara}{\terma}}{\mathcal{E}_{\xadd}^{\struct}}(\vala).
		\tag{\text{\Cref{tab:wp}}}
	\end{align*}

	\emph{The case} $\cc=\TICK{\tickrew}$.
	\begin{align*}
		&\sem{\xwp{\TICK{\tickrew}}{\xadd}}{(\struct,\vala)}\\
		\eeq & \sem{\apply(+, \obtain(\tickrew),\xadd)}{(\struct,\vala)}
		\tag{\text{\Cref{tab:xwp}}}\\
		\eeq & \sem{\obtain(\tickrew)}{(\struct,\vala)}+\sem{\xadd}{(\struct,\vala)}
		\tag{\text{Thm.~\ref{thm:apply_correctness_complexity}}}\\
		\eeq & \tickrew+\mathcal{E}_{\xadd}^{\struct}(\vala)
		\tag{\text{\ObtainAlg, def. }E}\\
		\eeq & \wp{\TICK{\tickrew}}{\mathcal{E}_{\xadd}^{\struct}}(\vala).
		\tag{\text{\Cref{tab:wp}}}
	\end{align*}

	\emph{The case} $\cc=\OBSERVE{\forma}$.
	\begin{align*}
		&\sem{\xwp{\OBSERVE{\forma}}{\xadd}}{(\struct,\vala)}\\
		\eeq & \sem{\apply(\switchITEsymbol,\xadd_\forma,\xadd,\obtain(0))}{(\struct,\vala)}
		\tag{\text{\Cref{tab:xwp}}}\\
		\eeq &
		\begin{cases}
			\sem{\xadd}{(\struct,\vala)} & \text{if }(\struct,\vala)\models\forma,\\
			\sem{\obtain(0)}{(\struct,\vala)} & \text{if }(\struct,\vala)\models\neg\forma
		\end{cases}
		\tag{\text{guard, Thm.~\ref{thm:apply_correctness_complexity}}}\\
		\eeq &
		\begin{cases}
			\mathcal{E}_{\xadd}^{\struct}(\vala) & \text{if }(\struct,\vala)\models\forma,\\
			0 & \text{if }(\struct,\vala)\models\neg\forma
		\end{cases}
		\tag{\text{\ObtainAlg, def. }E}\\
		\eeq & \wp{\OBSERVE{\forma}}{\mathcal{E}_{\xadd}^{\struct}}(\vala).
		\tag{\text{\Cref{tab:wp}}}
	\end{align*}

	\emph{The case} $\cc=\COMPOSE{\cc_1}{\cc_2}$.
	Assume the induction hypothesis for $\cc_1$ and $\cc_2$.
	\begin{align*}
		&\sem{\xwp{\COMPOSE{\cc_1}{\cc_2}}{\xadd}}{(\struct,\vala)}\\
		\eeq & \sem{\xwp{\cc_1}{\xwp{\cc_2}{\xadd}}}{(\struct,\vala)}
		\tag{\text{\Cref{tab:xwp}}}\\
		\eeq & \wp{\cc_1}{\mathcal{E}_{\xwp{\cc_2}{\xadd}}^{\struct}}(\vala)
		\tag{\text{I.H. on C1}}\\
		\eeq & \wp{\cc_1}{\wp{\cc_2}{\mathcal{E}_{\xadd}^{\struct}}}(\vala)
		\tag{\text{I.H. on C2}}\\
		\eeq & \wp{\COMPOSE{\cc_1}{\cc_2}}{\mathcal{E}_{\xadd}^{\struct}}(\vala).
		\tag{\text{\Cref{tab:wp}}}
	\end{align*}

	\emph{The case} $\cc=\NDCHOICE{\cc_1}{\cc_2}$.
	Assume the induction hypothesis for $\cc_1$ and $\cc_2$.
	\begin{align*}
		&\sem{\xwp{\NDCHOICE{\cc_1}{\cc_2}}{\xadd}}{(\struct,\vala)}\\
		\eeq & \sem{\minfunc(\xwp{\cc_1}{\xadd},\xwp{\cc_2}{\xadd})}{(\struct,\vala)}
		\tag{\text{\Cref{tab:xwp}}}\\
		\eeq & \min\{\sem{\xwp{\cc_1}{\xadd}}{(\struct,\vala)},\sem{\xwp{\cc_2}{\xadd}}{(\struct,\vala)}\}
		\tag{\text{Thm.~\ref{thm:minimum_correctness_complexity}}}\\
		\eeq & \min\{\wp{\cc_1}{\mathcal{E}_{\xadd}^{\struct}}(\vala),\wp{\cc_2}{\mathcal{E}_{\xadd}^{\struct}}(\vala)\}
		\tag{\text{I.H. on C1,C2}}\\
		\eeq & \bigl(\wp{\cc_1}{\mathcal{E}_{\xadd}^{\struct}}\einf\wp{\cc_2}{\mathcal{E}_{\xadd}^{\struct}}\bigr)(\vala)
		\tag{\text{def. inf}}\\
		\eeq & \wp{\NDCHOICE{\cc_1}{\cc_2}}{\mathcal{E}_{\xadd}^{\struct}}(\vala).
		\tag{\text{\Cref{tab:wp}}}
	\end{align*}

	\emph{The case} $\cc=\PCHOICE{\cc_1}{\proba}{\cc_2}$.
	Assume the induction hypothesis for $\cc_1$ and $\cc_2$, and let $\xadd_1'=\xwp{\cc_1}{\xadd}$ and $\xadd_2'=\xwp{\cc_2}{\xadd}$.
	\begin{align*}
		&\sem{\xwp{\cc}{\xadd}}{(\struct,\vala)}\\
		\eeq &
		\sem{\apply(+, \apply(\cdot,\obtain(\proba),\xadd_1'),\apply(\cdot,\obtain(1-\proba),\xadd_2'))}{(\struct,\vala)}
		\tag{\text{\Cref{tab:xwp}}}\\
		\eeq &
		\proba\cdot\sem{\xadd_1'}{(\struct,\vala)}
		+(1-\proba)\cdot\sem{\xadd_2'}{(\struct,\vala)}
		\tag{\text{\ObtainAlg, Thm.~\ref{thm:apply_correctness_complexity}}}\\
		\eeq &
		\proba\cdot\wp{\cc_1}{\mathcal{E}_{\xadd}^{\struct}}(\vala)
		+(1-\proba)\cdot\wp{\cc_2}{\mathcal{E}_{\xadd}^{\struct}}(\vala)
		\tag{\text{I.H. on C1,C2}}\\
		\eeq & \wp{\cc}{\mathcal{E}_{\xadd}^{\struct}}(\vala).
		\tag{\text{\Cref{tab:wp}}}
	\end{align*}

	\emph{The case} $\cc=\ITE{\forma}{\cc_1}{\cc_2}$.
	Assume the hypothesis statement for $\cc_1$ and $\cc_2$.
	\begin{align*}
		&\sem{\xwp{\ITE{\forma}{\cc_1}{\cc_2}}{\xadd}}{(\struct,\vala)}\\
		\eeq & \sem{\apply(\switchITEsymbol,\xadd_\forma,\xwp{\cc_1}{\xadd},\xwp{\cc_2}{\xadd})}{(\struct,\vala)}
		\tag{\text{\Cref{tab:xwp}}}\\
		\eeq &
		\begin{cases}
			\sem{\xwp{\cc_1}{\xadd}}{(\struct,\vala)} & \text{if }(\struct,\vala)\models\forma,\\
			\sem{\xwp{\cc_2}{\xadd}}{(\struct,\vala)} & \text{if }(\struct,\vala)\models\neg\forma
		\end{cases}
		\tag{\text{guard, Thm.~\ref{thm:apply_correctness_complexity}}}\\
		\eeq &
		\begin{cases}
			\wp{\cc_1}{\mathcal{E}_{\xadd}^{\struct}}(\vala) & \text{if }(\struct,\vala)\models\forma,\\
			\wp{\cc_2}{\mathcal{E}_{\xadd}^{\struct}}(\vala) & \text{if }(\struct,\vala)\models\neg\forma
		\end{cases}
		\tag{\text{I.H. on C1,C2}}\\
		\eeq & \wp{\ITE{\forma}{\cc_1}{\cc_2}}{\mathcal{E}_{\xadd}^{\struct}}(\vala).
		\tag{\text{\Cref{tab:wp}}}
	\end{align*}
\end{proof}

\section{Simultaneous Loop Unrolling}
\label{app:simultaneous_unrolling}
We briefly recall the simultaneous loop-unrolling construction used by
\citet[Sec.~7.2 and App.~B.2]{DBLP:journals/corr/abs-2502-19388}.  Write
$\ABORT$ as shorthand for $\OBSERVE{\false}$, so $\wp{\ABORT}{\FF}=\expzero$.
For $\ell\in\Nats$, the $\ell$-fold simultaneous unrolling
$\cc^{\langle \ell\rangle}$ is defined by structural recursion.  Atomic
commands are left unchanged:
\begin{align*}
	\SKIP^{\langle \ell\rangle}=\SKIP,\qquad
	(\ASSIGN{\vara}{\terma})^{\langle \ell\rangle}
	&=\ASSIGN{\vara}{\terma},
	\\
	(\TICK{\tickrew})^{\langle \ell\rangle}=\TICK{\tickrew},\qquad
	(\OBSERVE{\forma})^{\langle \ell\rangle}
	&=\OBSERVE{\forma}.
\end{align*}
The non-looping compound constructs commute with unrolling:
\begin{align*}
	(\COMPOSE{\cc_1}{\cc_2})^{\langle \ell\rangle}
	&= \COMPOSE{\cc_1^{\langle \ell\rangle}}{\cc_2^{\langle \ell\rangle}},
	\\
	(\PCHOICE{\cc_1}{\proba}{\cc_2})^{\langle \ell\rangle}
	&= \PCHOICE{\cc_1^{\langle \ell\rangle}}{\proba}{\cc_2^{\langle \ell\rangle}},
	\\
	(\NDCHOICE{\cc_1}{\cc_2})^{\langle \ell\rangle}
	&= \NDCHOICE{\cc_1^{\langle \ell\rangle}}{\cc_2^{\langle \ell\rangle}},
	\\
	(\ITE{\forma}{\cc_1}{\cc_2})^{\langle \ell\rangle}
	&= \ITE{\forma}{\cc_1^{\langle \ell\rangle}}{\cc_2^{\langle \ell\rangle}}.
\end{align*}
For loops, the unrolling budget is consumed globally:
\begin{align*}
	(\WHILEDO{\forma}{\cc})^{\langle 0\rangle}
	&=\ABORT,
	\\
	(\WHILEDO{\forma}{\cc})^{\langle \ell+1\rangle}
	&=
	\IFSYMBOL(\forma)\left\{
		\COMPOSE{\cc^{\langle \ell\rangle}}
		        {(\WHILEDO{\forma}{\cc})^{\langle \ell\rangle}}
	\right\}
	\\[-0.5ex]
	&\quad
	\ELSESYMBOL\left\{\SKIP\right\}.
\end{align*}

Since the weakest pre-expectation transformer is $\omega$-continuous,
these loop-free approximants converge to the semantics of the original program:
\[
	\wp{\cc}{\FF}
	=
	\bigesup_{\ell\in\Nats}
	\wp{\cc^{\langle \ell\rangle}}{\FF}.
\]
For a single loop with loop-free body, this is exactly the characteristic-function
iteration from \Cref{sec:wp:loops}.  For nested loops, the same fixed-point
iteration can therefore be implemented by applying the loop-free
$\xwpsymbol$ transformer to the simultaneously unrolled program
$\cc^{\langle \ell\rangle}$.

\clearpage
\newpage

\section{Additional Benchmark Results}
\label{app:additional_benchmark_results}
We present additional benchmark results in \Cref{fig:results:coupon:scale:caesar}, \Cref{fig:results:nondet:scale:caesar}, and \Cref{fig:results:cond:scale:caesar}.

\begin{figure}
	\centering
\input{figures/coupon_collectors.pgf}
\vspace{-1em}
\caption{Scalability comparison with \caesar{} on two more benchmarks that use arrays.}
\label{fig:results:coupon:scale:caesar}
\end{figure}

\begin{figure}
	\centering
\input{figures/non-det.pgf}
\vspace{-1em}
\caption{Scalability comparison with \caesar{} on programs utilizing non-determinism.}
\label{fig:results:nondet:scale:caesar}
\end{figure}

\begin{figure}
	\centering
\input{figures/conditioned.pgf}
\vspace{-1em}
\caption{Scalability comparison with \caesar{} on programs utilizing conditioning.}
\label{fig:results:cond:scale:caesar}
\end{figure}

\clearpage
\newpage

\section{Benchmark Programs}
\label{app:benchmark_programs}
\lstset{language=pgcl,mathescape=true, numbers=left, frame=lines, xleftmargin=4.8mm, framerule=0.300mm, framexleftmargin=4.84mm}
\toolddsolve uses two different dialects of PGCL: an untyped variant where every variable is an unsigned integer
(\cref{pgcl:tbd,pgcl:brp,pgcl:grid,pgcl:geo_grid,pgcl:rabin,pgcl:nd:brp,pgcl:nd:geo_grid,pgcl:nd:grid,pgcl:grid-cond,pgcl:refute-brp-cond,pgcl:tbd-cond})
and a typed variant that supports more features of the underlying SMT solver
(\cref{pgcl:coupon1,pgcl:lossy_sum,pgcl:lshift,pgcl:rshift,pgcl:square,pgcl:nd:grow,pgcl:plane}).
Both variants use annotations (\lstinline{@fixpoint}, \lstinline{@unroll(bound)}, \lstinline{@kinduction(invariant)}) on loops to specify the proof rule that should be used. The C++-style comments in the first and last line give the bound on the weakest pre-expectation to verify and the post-expectation, respectively.
\Cref{pgcl:tbd,pgcl:coupon1,pgcl:tbd-cond} sample variables from uniform distributions through the statement \lstinline{x ~= uniform(a,b)} where x is uniformly sampled from the half-open interval $[a,b)$.
All listings but \cref{pgcl:square} define families of programs based on the parameter \textit{bnd}. \Cref{pgcl:coupon1} has an additional parameter \textit{unrbnd}. \cite{DBLP:conf/birthday/GrooteV17}

Many of the programs are hand translations of programs from the literature.
We list the benchmarks and the source where we got the untranslated version, as well as any changes that we made to that version:
\begin{itemize}
  \item We created \texttt{UniformGridWalk} (\cref{pgcl:tbd}) as a generalization of the \texttt{Grid} program (listed below).
  \item We took \texttt{BRP} (\cref{pgcl:brp}) from \cite{DBLP:journals/pacmpl/SchroerBKKM23} and added the parameter \textit{bnd}. This formulation of \texttt{BRP} is originally from \cite[p. 79]{spelMonotonicityMarkovChains2018}.
  \item We took \texttt{Grid} (\cref{pgcl:grid}) from \cite{DBLP:conf/tacas/BatzCJKKM23} and added the parameter \textit{bnd}.
  \item We created \texttt{GeoGrid} (\cref{pgcl:geo_grid}) as a generalization of the \texttt{Grid} program (listed above).
  \item We took \texttt{Rabin} (\cref{pgcl:rabin}) from \cite{DBLP:journals/pacmpl/SchroerBKKM23} and added the parameter \textit{bnd}.
  \item We took \texttt{Coupon Collector} (\cref{pgcl:coupon1}) from \cite{DBLP:journals/jacm/KaminskiKMO18} and added the parameter \textit{bnd} and \textit{unrbnd}.
  \item We took \texttt{Plane} (\cref{pgcl:plane}) from \cite{DBLP:conf/birthday/GrooteV17}.
  \item We created \texttt{Grid Cond.} (\cref{pgcl:grid-cond}), \texttt{Refute BRP Cond.} (\cref{pgcl:refute-brp-cond}), and \texttt{UniformGridWalk Cond.} (\cref{pgcl:tbd-cond}) by modifying \texttt{Grid}, \texttt{BRP}, and \texttt{UniformGridWalk} (listed above), respectively.
  \item We created \texttt{Non. Det. BRP} (\cref{pgcl:nd:brp}), \texttt{Non. Det. GeoGrid} (\cref{pgcl:nd:geo_grid}), and \texttt{Non. Det. Grid} (\cref{pgcl:nd:grid}) by modifying \texttt{BRP}, \texttt{GeoGrid} and\texttt{Grid} (listed above), respectively.
\end{itemize}
\vspace{0pt plus 10em}

\subsection{Piecewise Linear Programs}
\noindent%
\begin{minipage}{\linewidth}
\begin{lstlisting}[caption={\texttt{UniformGridWalk}}, label={pgcl:tbd}]
// wp < $\mathit{bnd}$
a:=0; b:=0; k:=0;
@fixpoint
while(a < $\mathit{bnd}$ & b < $\mathit{bnd}$ & k < $\mathit{bnd}$) {
  {a ~= uniform(0,$\mathit{bnd}+1$);}[1/2]{b ~= uniform(0,$\mathit{bnd}+1$);};
  k:=k+1;
}
// a
\end{lstlisting}
\end{minipage}

\noindent%
\begin{minipage}{\linewidth}
\begin{lstlisting}[caption={\texttt{BRP}}, label={pgcl:brp}]
// wp <= ite(toSend <= $\mathit{bnd}$, totalFailed+10, $\infty$)
@kinduction(ite(toSend <= $\mathit{bnd}$, totalFailed+10, $\infty$))
while(failed < maxFailed & sent < toSend){
  {
    failed:=0; sent:=sent+1;
  }[9/10]{
    failed:=failed+1; totalFailed:=totalFailed+1;
  }
}
// totalFailed
\end{lstlisting}
\end{minipage}

\noindent%
\begin{minipage}{\linewidth}
\begin{lstlisting}[caption={\texttt{Grid}}, label={pgcl:grid}]
// wp <= $\mathit{bnd}$
a:=0; b:=0;
@fixpoint
while(a < $\mathit{bnd}$ & b < $\mathit{bnd}$) {
  {a:=a+1;}[1/2]{b:=b+1;}
}
// a
\end{lstlisting}
\end{minipage}

\noindent%
\begin{minipage}{\linewidth}
\begin{lstlisting}[caption={\texttt{GeoGrid}}, label={pgcl:geo_grid}]
// wp <= $\mathit{bnd}$
run:=1; a:=0; b:=0;
@kinduction(a+$\mathit{bnd}$)
while(run > 0) {
  {a:=a+1;}[1/2]{b:=b+1;}
  {}[1/2]{
    if(!(a < $\mathit{bnd}$ & b < $\mathit{bnd}$)) {
      run:=0;
    } else {skip;}
  }
}
// a
\end{lstlisting}
\end{minipage}

\noindent%
\begin{minipage}{\linewidth}
\begin{lstlisting}[caption={\texttt{Rabin}}, label={pgcl:rabin}]
// wp <= ite(1<i & i<4 & phase == 0, (5/7)+(1/$10^{\mathit{bnd}}$), 1)
@kinduction(ite(1<i & i<4 & phase == 0, (5/7)+(1/$10^{\mathit{bnd}}$), 1))
while(1<i | phase == 1){
  if(phase == 0){
    n:=i; phase:=1;
  } else {
    if(0<n){
      {d:=0;}[1/2]{d:=1;}
      i:=i-d; n:=n-1;
    } else {
      phase:=0;
    }
  }
}
// ite(i == 1, 1, 0)
\end{lstlisting}
\end{minipage}

\subsection{Array Programs}
\noindent%
\begin{minipage}{\linewidth}
\begin{lstlisting}[caption={\texttt{Coupon Collector}}, label={pgcl:coupon1}]
// wp <= ($\mathit{bnd}$ * $\mathit{bnd}$) * (1/2)
int[] a; uint k; uint z; uint i;
a[0]:=0; $\cdots$ ; a[$\mathit{bnd}$]:=0;
i:=0; z:=0;
@fixpoint
while(z < $\mathit{bnd}$) {
  @unroll($\mathit{unrbnd}$)
  while(a[i] != 0) {
    i ~= uniform(0, $\mathit{bnd}$);
    cost(1);
  }
  a[i]:=1; z:=z+1;
}
// 0
\end{lstlisting}
\end{minipage}

\noindent%
\begin{minipage}{\linewidth}
\begin{lstlisting}[caption={\texttt{Lossy Sum}}, label={pgcl:lossy_sum}]
// wp <= 1/2*(A[0]+$\ldots$+A[$\mathit{bnd - 1}$])
int[] A; uint j; uint x;
x:=0; j:=0;
@unroll($\mathit{bnd}+1$)
while(j < bnd) {
  {x:=x+A[j];}[1/2]{}
  j:=j+1;
}
// x
\end{lstlisting}
\end{minipage}

\noindent%
\begin{minipage}{\linewidth}
\begin{lstlisting}[caption={\texttt{LShift}}, label={pgcl:lshift}]
// wp <= ite(a[0] == a[$\mathit{bnd}$],
             ite(a[1] == a[$\mathit{bnd}$],1,1/2),
             ite(a[1] == a[$\mathit{bnd}$],1/2,0))
int[] a; uint i;
i:=0;
@fixpoint
while (i < $\mathit{bnd}$) {
  {a[i]:=a[i+1];}[1/2]{}
  i:=i+1;
}
// ite(a[0] == a[$\mathit{bnd}$],1,0)
\end{lstlisting}
\end{minipage}

\noindent%
\begin{minipage}{\linewidth}
\begin{lstlisting}[caption={\texttt{RShift}}, label={pgcl:rshift}]
// wp <= ite(a[0] == a[$\mathit{bnd}$],1,1/2)
int[] a; uint i;
i:=0;
@fixpoint
while (i < $\mathit{bnd}$) {
  {a[i+1]:=a[i];}[1/2]{}
  i:=i+1;
}
// ite(a[0] == a[$\mathit{bnd}$],1,0)
\end{lstlisting}
\end{minipage}

\subsection{Non-Linear Programs}

\noindent%
\begin{minipage}{\linewidth}
\begin{lstlisting}[caption={\texttt{Multiply}}, label={pgcl:square}]
// wp <= m * n
uint r; uint i; uint n; uint m;
r:=0; i:=0;
@kinduction(ite(i < m, r+(m-i)*n, r))
while(i < m) {
  r:=r+n; i:=i+1;
}
// r
\end{lstlisting}
\end{minipage}

\dbinline{Maybe remove \cref{pgcl:square}?}
\noindent%
\begin{minipage}{\linewidth}
\begin{lstlisting}[caption={\texttt{Non. Det. Grow}}, label={pgcl:nd:grow}]
// wp <= r
uint i; uint a; uint r;
i:=0;
@fixpoint
while(i<$\mathit{bnd}$) {
  {a:=a+1;}[]{a:=a*a;}
  i:=i+1;
}
if (r > a) {
  r:=r-a;
} else {skip;}
// r
\end{lstlisting}
\end{minipage}

\noindent%
\begin{minipage}{\linewidth}
\begin{lstlisting}[caption={\texttt{Plane}}, label={pgcl:plane}]
// wp <= 1/2
uint everybody_own_seat; uint i;
{everybody_own_seat:=1;}[1/$\mathit{bnd}$]{everybody_own_seat:=0;}
i:=1;
@fixpoint
while(i < $\mathit{bnd}-1$) {
  if (everybody_own_seat == 0) {
    {everybody_own_seat:=1;}[(1/($\mathit{bnd}$-i))*(1/($\mathit{bnd}$-i))]{}
  } else {skip;}
  i:=i+1;
}
// everybody_own_seat
\end{lstlisting}
\end{minipage}

\dbinline{We need to cite some of these programs as these are just modifications.}

\subsection{Conditioned Programs}

\noindent%
\begin{minipage}{\linewidth}
\begin{lstlisting}[caption={\texttt{Grid Cond.}}, label={pgcl:grid-cond}]
// cwp <= ($\mathit{bnd}$/2,1/$\mathit{bnd}$)
a:=0; b:=0;
@fixpoint
while (a < $\mathit{bnd}$ & b < $\mathit{bnd}$) {
  {a:=a+1;}[1/2]{b:=b+1;}
  observe(a <= b);
}
// (a,1)
\end{lstlisting}
\end{minipage}

\noindent%
\begin{minipage}{\linewidth}
\begin{lstlisting}[caption={\texttt{Refute BRP Cond.}}, label={pgcl:refute-brp-cond}]
// cwp <= ($\infty$, 1/2)
@unroll(bnd)
while(failed < maxFailed & sent < toSend){
  {
    failed:=0; sent:=sent+1
  }[9/10]{
    failed:=failed+1; totalFailed:=totalFailed+1
  }
  observe(failed < 0);
}
// (totalFailed, 1)
\end{lstlisting}
\end{minipage}

\noindent%
\begin{minipage}{\linewidth}
\begin{lstlisting}[caption={\texttt{UniformGridWalk Cond.}}, label={pgcl:tbd-cond}]
// cwp < ($\mathit{bnd}$, 1/2)
a:=0; b:=0; k:=0;
@fixpoint
while(a < $\mathit{bnd}$ & b < $\mathit{bnd}$ & k < $\mathit{bnd}$) {
  {
    a ~= uniform(0,$\mathit{bnd} + 1$);
  }[1/2]{
    b ~= uniform(0,$\mathit{bnd} + 1$);
  };
  k:=k + 1;
  {observe(a < $\mathit{bnd}$)}[1/2]{observe(b < $\mathit{bnd}$)}
}
// (a, 1)
\end{lstlisting}
\end{minipage}

\subsection{Non-Deterministc Programs}

\noindent%
\begin{minipage}{\linewidth}
\begin{lstlisting}[caption={\texttt{Non. Det. BRP}}, label={pgcl:nd:brp}]
// wp <= ite(toSend <= $\mathit{bnd}$, totalFailed+10, $\infty$)
@kinduction(ite(toSend <= $\mathit{bnd}$, totalFailed+10, $\infty$))
while(failed < maxFailed & sent < toSend) {
  {
    {
      failed:=0; sent:=sent+1;
    }[9/10]{
      failed:=failed+1; totalFailed:=totalFailed+1
    }
  }[]{
    {
      failed:=0; sent:=sent+1;
    }[91/100]{
      failed:=failed+1; totalFailed:=totalFailed+1;
    }
  }
}
// totalFailed
\end{lstlisting}
\end{minipage}

\noindent%
\begin{minipage}{\linewidth}
\begin{lstlisting}[caption={\texttt{Non. Det. GeoGrid}}, label={pgcl:nd:geo_grid}]
// wp <= $\mathit{bnd}$
run:=1; a:=0; b:=0;
@kinduction(a + $\mathit{bnd}$)
while (run > 0) {
  {{a:=a+1;}[1/2]{b:=b+1}}[]{a:=a+1;}
  {}[1/2]{
    if (!(a < $\mathit{bnd}$ & b < $\mathit{bnd}$)) {
      run:=0;
    } else {skip;}
  }
}
// a
\end{lstlisting}
\end{minipage}

\noindent%
\begin{minipage}{\linewidth}
\begin{lstlisting}[caption={\texttt{Non. Det. Grid}}, label={pgcl:nd:grid}]
// wp <= $\mathit{bnd} + \mathit{bnd} - 2$
a:=0; b:=0;
@fixpoint
while (a < $\mathit{bnd}$ & b < $\mathit{bnd}$) {
  {a:=a+1}[]{b := b+1}
  cost(1);
}
// 0
\end{lstlisting}
\end{minipage}

\end{document}